\def\isReadyToSubmit{0}   % mark 1 before submission (draft or camera ready).
\def\includeAuthor{1}     % mark 1 if this is not an anonymous submission.
\def\sandy{0}             % mark 1 if for Sandy.
\titlespacing{\section}{1pt}{*.4}{*.2}
\titlespacing{\subsection}{1pt}{*.4}{*.2}
\titlespacing{\subsubsection}{1pt}{*.4}{*.2}
\patchcmd{\ttlh@hang}{\parindent\z@}{\parindent\z@\leavevmode}{}{}
\patchcmd{\ttlh@hang}{\noindent}{}{}{}
\theoremstyle{definition}
\newtheorem{definition}{Definition}
\declaretheoremstyle[%
    spaceabove=-6pt,%
    spacebelow=3pt,%
    headfont=\normalfont\itshape,%
    postheadspace=1em,%
    qed=\qedsymbol%
]{mystyle}
\declaretheorem[name={Proof},style=mystyle,unnumbered,]{compactproof}
\newcommand{\ie}{{\em i.e.,~}}
\newcommand{\eg}{{\em e.g.,~}}
\def\F{Fig.~}
\def\T{Tab.~}
\newcommand{\ar}[3]{} %% something bibtex is doing - ignore it
\newcommand{\heading}[1]{{\vspace{2pt}\noindent\bf{#1}}} % inside section
\gdef\xxxmark{%
    \expandafter\ifx\csname @mpargs\endcsname\relax % in minipage?
        \expandafter\ifx\csname @captype\endcsname\relax % in figure/caption?
            \marginpar{\textcolor{red}{xxx~}}% not in a caption or minipage, can use marginpar
        \else
            \textcolor{red}{xxx~}% notice trailing space
        \fi
    \else
        \textcolor{red}{xxx~}% notice trailing space
    \fi}
\gdef\xxx{\@ifnextchar[\xxx@lab\xxx@nolab}
\long\gdef\xxx@lab[#1]#2{{\bf [\xxxmark \textcolor{red}{#2} ---{\sc #1}]}}
\long\gdef\xxx@nolab#1{{\bf [\xxxmark \textcolor{red}{#1}]}}
    \long\gdef\xxx@lab[#1]#2{}\long\gdef\xxx@nolab#1{}
\gdef\edit{\@ifnextchar[\edit@lab\edit@nolab}
\long\gdef\edit@lab[#1]#2{[\textcolor{red}{#2} ---{\sc #1}]}
\long\gdef\edit@nolab#1{[\textcolor{red}{#1}]}
    \long\gdef\edit@lab[#1]#2{[#2]}
\newcommand{\ignore}[1]{}
\newcommand{\code}[1]{\tt{#1}}
\definecolor{codegreen}{rgb}{0,0.6,0}
\lstdefinestyle{codestyle}{
    commentstyle=\color{codegreen},
    keywordstyle=\bfseries,
    basicstyle=\scriptsize\ttfamily,
    breakatwhitespace=false,
    captionpos=b,
    keepspaces=true,
    numbers=left,
    numbersep=4pt,
    showspaces=false,
    showstringspaces=false,
    showtabs=false,
    tabsize=2,
    abovecaptionskip=2pt,
    belowcaptionskip=-20pt,
    % Move numbers out of margin.
    xleftmargin=2em,
    rulesepcolor=\color{white},
    rulecolor=\color{white},
    % Uncomment to turn on frame.
    frame=single%,framexleftmargin=2em
}
\DeclareMathAlphabet\mathbfcal{OMS}{cmsy}{b}{n}
\newtheorem*{proposition*}{Proposition}
\newtheorem*{theorem*}{Theorem}
\newtheorem*{lemma*}{Lemma}
\newtheorem*{corollary*}{Corollary}
\theoremstyle{definition}
\def\e{\epsilon}  % use as $\e$
\def\eg{\epsilon_g}  % use as $\e$
\def\D{\mathcal{D}}
\def\M{\mathcal{Q}}
\def\Y{\mathcal{V}}
\def\ES{\mathcal{S}}
\def\sysname{PrivateKube\xspace}
\def\privacyresource{private block\xspace}
\def\PrivacyResource{Private Block\xspace}
\def\Privacyresource{Private block\xspace}
\def\privacyresources{private blocks\xspace}
\def\Privacyresources{Private blocks\xspace}
\newcommand{\todo}[1]{\textcolor{blue}{#1}}
\def\ie{{i.e.}\xspace}
\def\eg{{e.g.}\xspace}
\def\etc{etc.\xspace}
\newenvironment{denseenum}{
\begin{enumerate}[topsep=2pt, partopsep=0pt, leftmargin=1.5em]
  \setlength{\itemsep}{2pt}
  \setlength{\parskip}{0pt}
  \setlength{\parsep}{0pt}
}{\end{enumerate}}
\newtheorem{proposition}{Proposition}
\newtheorem{theorem}{Theorem}
\def\E{\mathbb{E}}
\def\R{\mathbb{R}}
\def\cD{\mathcal{D}}
\def\cR{\mathcal{R}}
\begin{document}

\ifnum\sandy=1
    \doublespacing
\fi
\date{}  % Don't want the date printed.

\title{\Large \bf Privacy Budget Scheduling\thanks{This is an extended version of the OSDI 2021 paper with the same title. This version includes additional appendices and minor content modifications to the core of the paper to reference these appendices.}}   % and Orchestration

\ifnum\includeAuthor=1
    % Order rationale: Student co-authors in alphabetical order, followed by researcher co-authors in alphabetical order.  At this moment, the three students' contributions so far seem to be on equal footing, so alphabetical order is what makes sense.
    % Su Ji Park's work will be acknowledged in the Acknowledgements section.
    % \author{\rm Tao Luo,$^*$ Mingen Pan,$^*$ Pierre Tholoniat,$^*$
    %     Asaf Cidon,$^*$ Roxana Geambasu,$^*$ and Mathias L\'ecuyer$^{**}$ \\
    %     $^*$Columbia University {\rm and} $^{**}$Microsoft Research
    % }

    % Format recommended in the camera-ready instructions
    \author{
        {\rm Tao Luo\thanks{First co-authors of the paper with equal, complementary contributions.}}\\
        Columbia University
        \and
        {\rm Mingen Pan\footnotemark[2]}\\
        Columbia University
        \and
        {\rm Pierre Tholoniat\footnotemark[2]}\\
        Columbia University
        \and
        {\rm Asaf Cidon}\\
        Columbia University
        \and
        {\rm Roxana Geambasu}\\
        Columbia University
        \and
        {\rm Mathias L\'ecuyer}\\
        Microsoft Research
    }

\else
    \author{{\rm Submission \#133}}
\fi

\maketitle
\captionsetup{font=footnotesize}

% The sources for all figures are in this Google Drive:
% https://drive.google.com/drive/folders/1PTbJCRaYoGWU5S55uXRg_TuCjxUG5pWV?usp=sharing

% -------------------- %
\begin{abstract}

    Machine learning (ML) models trained on personal data have been shown to leak information about users.
    Differential privacy (DP) enables model training with a guaranteed bound on this leakage.
    Each new model trained with DP increases the bound on data leakage and can be seen as consuming part of a {\em global privacy budget} that should not be exceeded.
    This budget is a scarce resource that must be carefully managed to maximize the number of successfully trained models.

    We describe {\em \sysname}, an extension to the popular Kubernetes datacenter orchestrator that adds privacy as a new type of resource to be managed alongside other traditional compute resources, such as CPU, GPU, and memory.
    The abstractions we design for the privacy resource mirror those defined by Kubernetes for traditional resources, but there are also  major differences.
    For example, traditional compute resources are replenishable while privacy is not: a CPU can be regained after a model finishes execution while privacy budget cannot.
    %: if a model is assigned a CPU, that CPU becomes available again when the model completes its training; but if a model is assigned some privacy budget, that budget is consumed forever.
    This distinction forces a re-design of the scheduler.
    We present {\em DPF} ({\em {\uline D}ominant {\uline P}rivate Block {\uline F}airness}) -- a variant of the popular Dominant Resource Fairness (DRF) algorithm --
    that is geared toward the non-replenishable privacy resource but enjoys similar theoretical properties as DRF.

    We evaluate \sysname and DPF on microbenchmarks and an ML workload on Amazon Reviews data.
    Compared to existing baselines, DPF allows training more models under the same global privacy guarantee.
    This is especially true for DPF over R\'enyi DP, a highly composable form of DP.  % can this be said better? :)

    %Overall, ours is the first work to (1) address the scheduling problem for the privacy budget resource in a workload of multiple DP computations; and (2) design and evaluate a privacy resource abstraction that supports both user- and event-level DP guarantees.

\end{abstract}

\vspace{-0.3cm}
\section{Introduction}                      % 1.5 pages (includes abstract)
\vspace{-0.3cm}
\label{sec:introduction}
Increasing evidence suggests that machine learning (ML) models trained on sensitive, personal information -- such as auto-complete models trained on users' emails -- expose individual entries from their training sets~\cite{carlini2018theSecretSharer,shokri2017membership}.
Despite the evidence, there is an increasing trend to push models to end-user devices for faster predictions~\cite{baylor2017tfx,hazelwood2018applied,ravi2017onDeviceMachineIntelligence}, share them across teams in a company~\cite{Li:michelangelo,twitter-embeddings} and even externally~\cite{modelzoo,aws-marketplace-sagemaker}.

Differential privacy (DP)~\cite{dwork2006differential} promises to enable safe sharing of models by providing solid guarantees regarding the exposure of individuals' data through these models.
DP randomizes a computation over a dataset (e.g. training one model) to bound the leakage of individual entries in the dataset through the output of the computation (the model).
Each new DP computation increases this bound over data leakage, and can be seen as consuming part of a {\em global privacy budget} that should not be exceeded.
DP is {\em mature algorithmically}: most popular ML algorithms have been adapted to {\em individually} enforce the DP guarantee.
There are also libraries that implement these algorithms, including TensorFlow Privacy~\cite{tensorflow-privacy}, Opacus for PyTorch~\cite{opacus}, and multiple libraries for statistics~\cite{idm-diffprivlib,google-dp,ms-harvard-opendp}.  % cite ibm, google, microsoft

Comparatively, DP research is {\em primitive on systems} that enforce a global DP guarantee across {\em multiple} DP algorithms.
Indeed, enforcing a global DP guarantee creates scheduling challenges that have never been addressed in the literature.
For example, given a dynamic ML workload of multiple models trained on the same user data stream, how should the global privacy budget be allocated to maximize the number of models that are successfully trained with DP?
Recently, we presented Sage, an incipient design of an ML training platform that maintains a global DP guarantee for a dynamic workload of ML pipelines operating on a continuous data stream~\cite{sage}.
Our key contribution was to show that by splitting the data stream into {\em blocks} (for example by time), enforcing a global DP guarantee over the entire stream reduces to enforcing the guarantee on each block.
This showed at a basic level how to operationalize a global DP guarantee for a dynamic ML workload. but left the challenging questions related to scheduling unresolved.  Moreover, our block notion was rudimentary, supporting only limited DP semantics (Event DP, which offers non-ideal protection~\cite{mir2011pan,Kifer2020GuidelinesFI}) and basic DP composition methods (which scale poorly with the number of models).

In this paper, we present {\em \sysname}, a plug-in extension to the popular Kubernetes workload orchestrator that can be used to schedule global privacy budgets for a dynamic workload of DP ML pipelines akin to Sage's.
The key insight is to (1) generalize the notion of private blocks to support a wider range of DP semantics and composition methods, and (2) incorporate private blocks as {\em a new, native resource} into Kubernetes, alongside traditional compute resources (such as CPU, GPU, and RAM), so they can be scheduled uniformly.
%This lets developers manage compute and privacy uniformly, and to reuse tooling built on Kubernetes, such as the Grafana compute resource monitor, for the privacy resource.
Despite intuitive correspondence of our privacy abstraction to Kubernetes abstractions for traditional resources, there are also significant semantic differences that force us to redesign the scheduling at a fundamental, algorithmic level.

Specifically, \privacyresources differ from traditional computing resources in two key dimensions.
First, once a portion of a \privacyresource is allocated to a task, it can never be recuperated.
Second, in many use cases, the utility of using \privacyresources is a step function: if a task has enough privacy budget it can make progress, but if it does not have sufficient budget, its accuracy can be affected in complex ways and it is often preferable to wait to accumulate enough budget before proceeding.
These two properties invalidate assumptions typically made by scheduling algorithms for traditional computing resources, such as the popular DRF~\cite{drf}, which we show loses the max-min fairness property if applied directly to \privacyresources.
In fact, we find that the very definitions of standard game-theoretical scheduler properties require change to apply to the characteristics of the privacy resource.

We develop a new algorithm for scheduling \privacyresources, called DPF (Dominant \Privacyresource Fairness).
DPF treats each \privacyresource as a
    {\em separate resource} that can be demanded (or not) by tasks.
Different tasks can demand different private blocks, creating heterogeneous resource demands and pointing to multi-resource scheduling algorithms, such as DRF~\cite{drf}, as a basis for DPF.
Similar to DRF, DPF allocates \privacyresources to the user that has the minimal \emph{dominant \privacyresource share} -- the maximum privacy budget requested by a user across the \privacyresources.
Different from DRF and other related scheduling algorithms~\cite{dynamicdrf,parkes2015beyond}, DPF releases privacy budgets progressively into the blocks, to ensure that future pipelines have access to the privacy resource in accordance to a fairness policy.
Moreover, DPF allocates requested budgets all-or-nothing to ensure that pipelines can achieve their accuracy goals.
We prove that DPF satisfies several important game-theoretic properties: sharing incentive, strategy-proofness, dynamic envy-freedom (a variant of traditional envy-freedom), and Pareto efficiency.
%Interestingly, we find that some of these definitions themselves require adaptations to the non-replenishable and dynamic nature of the privacy resource.

We evaluate \sysname on microbenchmarks and a workload on Amazon Reviews data.
We find that: (1) DPF grants more pipelines than baseline policies at a small cost in delay; (2) stronger DP semantics (such as User DP) require more budget and data,  increasing the need for judicious budget allocation as with DPF; (3) adapting DPF to R\'enyi DP~\cite{8049725}, the state-of-the-art composition method, enables allocation of either many more or much larger pipelines, and (4) our native integration of the privacy resource into Kubernetes lets us easily adapt the Grafana compute resource monitor to track privacy usage on par with compute usage.

Overall, this paper is the first to pose these questions:
(1)~what are the characteristics of the ``privacy resource'' in ML workloads,
(2)~how should scheduling algorithms support this resource, and
(3)~what kinds of game-theoretical properties can be guaranteed for this resource?
The answers, which form our primary contributions, are: (1)~the abstraction of the privacy resource as dynamically-arriving, non-replenishable \privacyresources, (2)~the DPF algorithm, and (3)~the theoretical properties of DPF.
All these are integrated into real systems, Kubernetes and Kubeflow, in a prototype that we have open-sourced: \url{https://github.com/columbia/privatekube}.

\vspace{-0.3cm}
\section{Threat Model and Background}       % 1 page
\vspace{-0.3cm}
\label{sec:threat-model-and-background}
% Section: Background
\subsection{Threat Model}
\label{sec:threat-model}

%Asaf: I am commenting this sentence out -- I don't think it's a great way to start the background section (many readers won't be familiar with Sage) and anyways we explain the threat model, so doesn't seem important to say this.
%We adopt a threat model similar to L\'ecuyer, et.al.~\cite{sage}.
We are concerned with the sensitive data exposure that may occur when pushing models trained over user data to untrusted locations, such as mobile devices~\cite{baylor2017tfx,hazelwood2018applied,ravi2017onDeviceMachineIntelligence}, model stores that are widely shared among teams in a company~\cite{Li:michelangelo,twitter-embeddings}, or even opened to the world via prediction APIs~\cite{modelzoo,aws-marketplace-sagemaker}.
Our focus is not on singular models, pushed once, but rather on workloads of many models, trained periodically over increasing data from user streams.
For example, a company may train an auto-complete model daily or weekly to incorporate new data from an email stream, distributing the updated models to mobile devices for fast predictions.
Moreover, the company may use the same email stream to periodically train and disseminate multiple types of models, for example for recommendations, spam detection, and ad targeting.
This creates ample opportunity for an adversary to collect models and perform {\em privacy attacks} to siphon personal data.

Two classes of privacy attacks are particularly relevant: (1) {\em membership inference}, in which the adversary infers whether a particular entry (e.g., user) is in the training set based on either white-box or black-box access to the model and/or predictions~\cite{backes2016membership,dwork2015robustTraceability,homer2008resolving,shokri2017membership}; and (2) {\em reconstruction attacks}, in which the adversary infers unknown sensitive attributes about entries in the training set based on similar white-box or black-box access~\cite{carlini2018theSecretSharer,dinurNissim2003revealing,dwork2017exposed}.
%We consider out of scope a third class of attacks: {\em correlation detection attacks} (a.k.a. {\em model inversion} in~\cite{frederikson2015modelInversion,frederikson2014privacy}), where the adversary uses models or predictions to infer sensitive information about an {\em arbitrary} entry, {\em whether or not} the entry participated in training the models/features~\cite{dwork2017exposed,mcsherry2016statistical}.
We aim to ensure that an entry's {\em participation} in a company's model {\em does not increase the risk} of an adversary learning something about that entry.

Of particular concern are attacks that can access {\em multiple} models or statistics trained on the same or overlapping portions of a data stream.  While individually these may leak limited information about specific entries, together they may leak significant information, especially when combined with side information about an entry.  Consider two statistics: (1) average value of a sensitive column s (say representing user salary); and (2) average value of column s across entries whose ID differs from ``1234.'' Individually, they reveal nothing specific about any entry in a dataset.  Together, they reveal the value of sensitive column s for entry ``1234.''
This is a trivialized example in which the queries are ideally chosen and the adversary has access to ideal side-information about their target: the ID.
However, research in more practical settings has shown that releasing multiple (versions of) ML models trained over overlapping datasets increases the attacker's membership inference power compared to releasing just one~\cite{beguelinccs2020}.
Moreover, many pieces of information, such as demographic traits and locations, can be pieced together to uniquely identify individuals and used as side information in such attacks~\cite{demontjoye2013unique,Narayanan:2008:RDL:1397759.1398064,backes2016membership}.
Thus, a significant data exposure threat stems from the repeated release of models/statistics from overlapping portions of a stream.

\subsection{Differential Privacy}
\label{sec:dp}

DP is known to address the preceding attacks~\cite{shokri2017membership,dwork2017exposed,carlini2018theSecretSharer,236254}.
At a high level, membership and reconstruction attacks work by finding data
points (which can range from individual events to entire users) that make the observed model more likely: if those points were in the
training set, the likelihood of the observed output increases.  DP prevents
these attacks by ensuring that no specific data point can drastically increase
the likelihood of the model outputted by the training procedure.

To prevent such information leakage, DP introduces {\em randomness} into the computation to hide details of individual entries.
%To prevent such information leakage, DP introduces {\em randomness} into the computation to upper bound the change in the output distribution that can be triggered by a small change in its input.   % [RG] this sentence makes a cleaner statement that might be useful to understand Renyi later on.  If needed, we can replace the one above with this one.
A randomized algorithm $\M : \D \rightarrow \Y$ is $(\epsilon, \delta)$-DP if for any
neighboring datasets $\D, \D'$ that differ in one row and for any $\ES \subseteq \Y$, we have:
$P(\M(\D) \in \ES) \leq e^\epsilon P(\M(\D') \in \ES) + \delta .$
Parameters $\e>0$ and $\delta \in [0,1]$ quantify the strength of the privacy guarantee: small values imply that one draw from such an algorithm's output gives little information about whether it ran on $D$ or $D'$.
The {\em privacy budget} $\e$ upper bounds an $(\epsilon, \delta)$-DP computation's privacy loss with probability (1-$\delta$).

A key strength of DP is its {\em composition} property, which in its basic form, states that the process of running an $(\e_1,\delta_1)$-DP and an $(\e_2,\delta_2)$-DP computation on the same dataset is $(\e_1+\e_2,\delta_1+\delta_2)$-DP.
Therefore, privacy loss accumulates linearly with the privacy loss of each computation.
Composition lets one account for the privacy loss resulting from a sequence of DP-computed outputs, such as the release of multiple models.  It is thus critical for enforcing a global DP guarantee.
There are more advanced forms of composition, such as R\'enyi DP~\cite{8049725}, which permit much tighter analysis of cumulative privacy loss (sublinear).  We discuss those in the latter parts of the paper, because they are vital to a well-performing globally DP system, but for the next two sections we assume basic composition for simplicity.

Multiple DP mechanisms exist, such as the Laplace and Gaussian mechanisms.
They add noise to the computation from a Laplace/Gaussian distribution scaled by a function of $\epsilon$, $\delta$, and the sensitivity of the computation.
The noise scale depends linearly in $1/\epsilon$ and at most logarithmically in $1/\delta$.
When enforcing a global DP guarantee, which we denote in this paper as $(\epsilon^G, \delta^G)$, both parameters become ``resources'' that must be allocated among the individual computations to ensure that cumulatively the computations do not exceed either.
However, because individual computations are much more sensitive to the allocated $\epsilon$ than to $\delta$, throughout this paper we will focus on $\epsilon^G$ as the sole global resource to schedule.
In evaluation, we set the individual $\delta$ requested by each pipeline small enough in comparison to $\delta^G$ ($10^{-9}$ and $10^{-7}$, respectively) such that $\epsilon^G$ is always the bottleneck.

The DP semantic can be instantiated at multiple granularities, the difference being what a ``row'' corresponds to.
    {\em Event DP} enforces DP on individual data points (e.g., individual clicks).
    {\em User DP} enforces DP on all data points contributed by a user.
It is stronger but challenging to sustain when new models must keep training on new data from the same users.
    {\em User-Time DP} is a middle-ground that enforces DP on all data points contributed by a user in a given period (\eg, one day).

\subsection{Assumptions}
\label{sec:threat-model-assumptions}

Our overarching goal is to {\em develop infrastructural support for organizations to enforce a global DP guarantee -- at Event, User, or User-Time level -- across the entire ML workload they operate on sensitive data streams}.
%We denote that guarantee $(\epsilon_G, \dot)$-DP.
This would let organizations control the leakage of personal information through the models.
The focus of this paper is on how to orchestrate the global privacy budget across {\em competing} but {\em trusted} ML training processes, each of which is assumed to be coded by their programmers to enforce DP.
We assume that the programmers are trusted to correctly implement DP training processes and to adhere to the protocols we establish for them.
Moreover, we assume that the training processes themselves, plus the compute infrastructure, are trusted.
For example, if our scheduler refuses to allocate a requested privacy budget to a training task, the task will not access the data.
If the scheduler allocates the task's requested budget, $\epsilon$, then the training process will not attempt to use more than $\epsilon$.
On the other hand, programmers may be incentivised to achieve higher accuracy for their models by requesting more $\epsilon$. Therefore, we must provide users with strong incentives to fairly share $\epsilon^G$.

\vspace{-0.3cm}
\section{\sysname Architecture}             % 2.25 pages
\vspace{-0.3cm}
\label{sec:privatekube-architecture}
% sec:privatekube-architecture

{\em \sysname} is a plug-in extension to the popular Kubernetes workload orchestrator.
It can be used to allocate privacy budgets for a dynamic workload of ML pipelines to enforce a global ($\epsilon^G, \delta^G$) DP semantic.
Our key insight is to incorporate the privacy budget as {\em a new, native resource} alongside traditional compute resources so developers can manage compute and privacy uniformly.
Despite one-to-one correspondence of our privacy resource abstractions to traditional Kubernetes abstractions, there are also significant semantic differences that cause us to re-think scheduling for the privacy resource.
This section gives an architectural view of our privacy resource abstraction, with the similarities and differences from Kubernetes' abstractions.
\S\ref{sec:dpf-algorithm} then describes {\em DPF}, the first scheduling algorithm suitable for the privacy resource.  \S\ref{sec:dpf-extensions} presents extensions of DPF to support both R\'enyi composition and all three DP semantics: Event, User, User-Time.  These, too, constitute firsts for the DP systems literature.

\begin{figure}[t]
    % Source of the figure in Google Docs:
    % https://docs.google.com/presentation/d/16ERv2Hvoz_A-kE7Gn4pOjbELsQqrRS8wZi7CFX4xdYM/edit?usp=sharing
    \includegraphics[width=\linewidth]{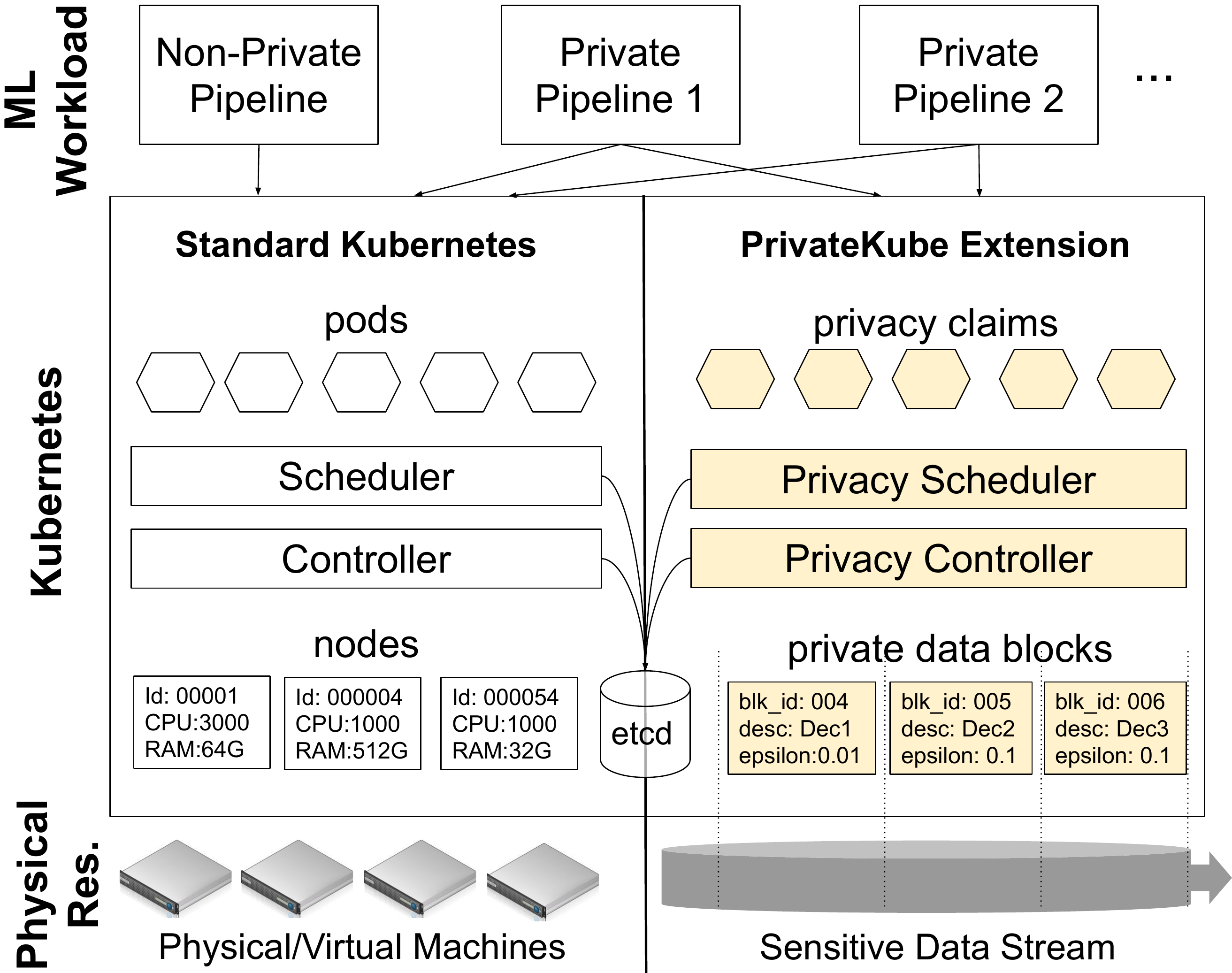}
    % \vspace{-0.5cm}
    \caption{{\bf \sysname architecture.} Clear components are standard Kubernetes.
        Highlighted components (yellow) are added by \sysname.}
    % \vspace{-0.5cm}
    \label{fig:architecture}
\end{figure}

\subsection{Overview}
\label{sec:overview}

\F\ref{fig:architecture} shows the \sysname architecture alongside the main components of a standard Kubernetes deployment.
It underscores the correspondence between traditional and privacy abstractions.
Kubernetes orchestrates the execution of a {\em workload} -- in our case an {\em ML workload} consisting of multiple training pipelines -- onto the {\em physical resources} available to the Kubernetes deployment.
In standard Kubernetes, the physical resources are physical or virtual machines.
The main abstractions that standard Kubernetes provides are: (1) {\em node}, an abstract representation for a physical or virtual machine; and (2) {\em pod}, a containerized unit of execution.
A pod specifies the container image to execute, plus the type and quantity of compute resources it demands, such as CPU, GPU, RAM, SSD.
A node specifies the type and quantity of compute resources it has available.
The primary functions of Kubernetes are to: (i) monitor for pods with unsatisfied resource demands (component {\em Controller} in \F\ref{fig:architecture}) and (ii) {\em bind} each pod to one node that has the demanded resources (component {\em Scheduler}).
Once a pod is bound to a node, the pod's image is executed.

\sysname extends Kubernetes to add a new type of physical resource: sensitive data streams.
We correspondingly add two new abstractions to Kubernetes: (1) {\em private data block} and (2) {\em privacy claim}.
Private data blocks (or {\em \privacyresources} for short) constitute non-overlapping portions of a sensitive data stream, such as daily windows of data from that stream.
\Privacyresources are the finest granularity at which data can be requested by a training pipeline, and the level at which \sysname keeps track of the total privacy loss incurred by an ML workload of multiple pipelines.
\Privacyresources specify the portion of the data they represent (e.g., the start and end times of the corresponding window), plus the privacy budget still available for use in that window.
    {\em Privacy claims} are used by training pipelines to demand privacy budget for the \privacyresources they are interested in.
A pipeline specifies in its privacy claims a selector for the \privacyresources it is requesting (such as the window of time from which they want data), plus the privacy budget it demands for these blocks.
The primary functions of \sysname are to: (i) monitor for privacy claims with unsatisfied \privacyresource demands (component {\em Privacy Controller} in \F\ref{fig:architecture}) and (ii) {\em bind} each privacy claim to the \privacyresources it demands (component {\em Privacy Scheduler}).

In a Kubernetes deployment with \sysname enabled, the workload may consist of a mix of non-private pipelines (which interact with insensitive data) and private pipelines (which interact with sensitive data).
%\F\ref{fig:architecture} shows one non-private pipeline and two private pipelines.
Each pipeline has multiple steps organized in a directed acyclic graph, including steps that read the data, transform it, train models, etc.
The non-private pipeline interacts with standard Kubernetes to schedule its steps for execution by registering a pod for each step as soon as the step's inputs are available.
The private pipeline 
interacts not only with standard Kubernetes (to allocate compute resources for each step) but also with \sysname (to allocate and consume privacy budget needed to execute the steps on the sensitive data in a privacy preserving way).

\subsection{\sysname Abstractions}
\label{sec:architecture:private-data-block}

\begin{figure}[t]
    \centering
    % Source of the figure in Google Docs:
    % https://docs.google.com/presentation/d/11EajNYgV4Iy06vEjSpWIqb3b3RohtjEcauYkLvD8K30/edit?usp=sharing
    \includegraphics[width=\linewidth]{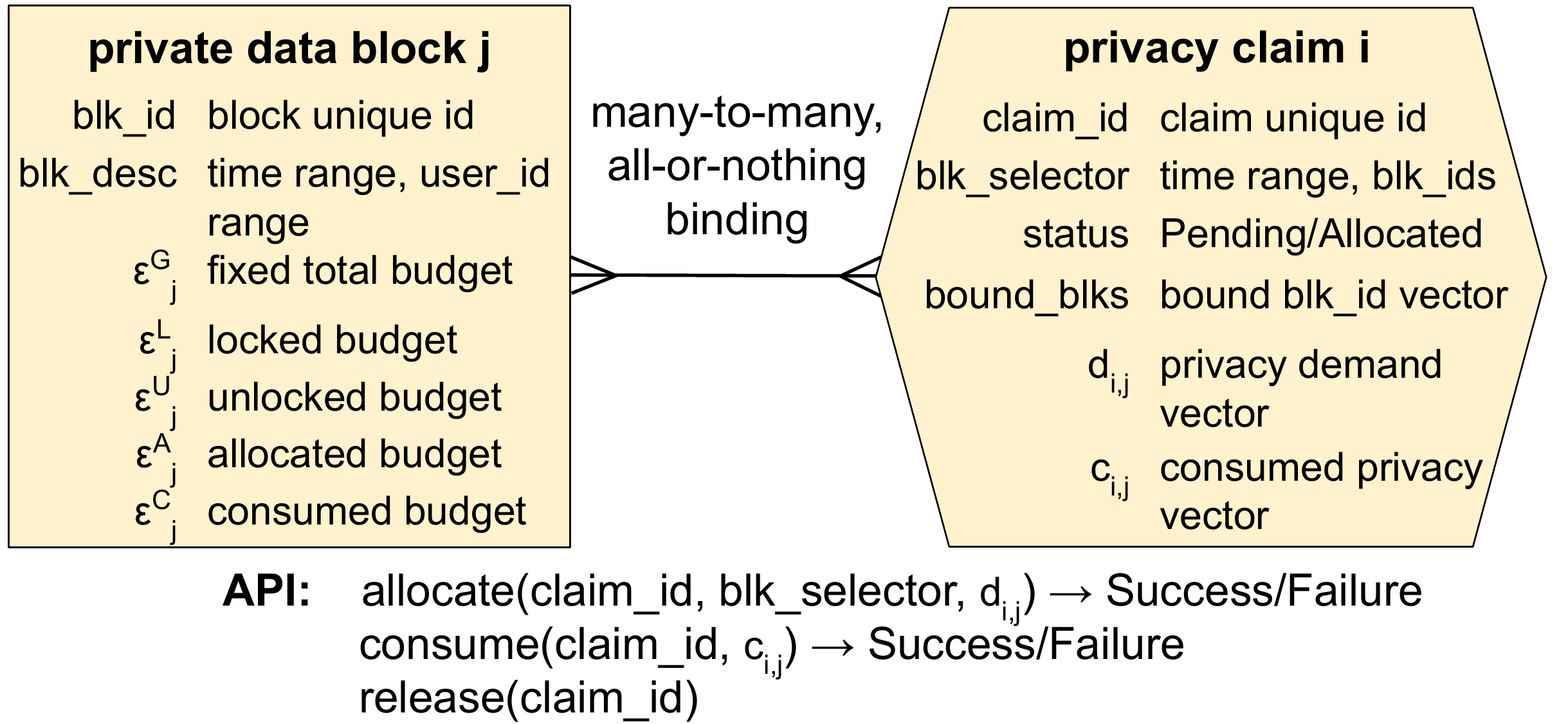}
    % \vspace{-0.5cm}
    \caption{\footnotesize {\bf \sysname abstractions and API.} Some variables are indexed by block ($j$) or claim ($i$) for consistency with notation needed in \S\ref{sec:dpf-algorithm}.}
    % \vspace{-0.5cm}
    \label{fig:abstractions-api}
\end{figure}

\sysname's abstractions are implemented {\em natively} in Kubernetes using its Custom Resource Definition extension API.
\F\ref{fig:abstractions-api} shows the state maintained for each abstraction.
As with standard abstractions, state for custom resources is stored in the fault-tolerant, strongly consistent etcd store.

\heading{\PrivacyResource} (\F\ref{fig:abstractions-api}, left):
This abstraction has three constant fields: a globally unique block id ({\code blk\_id}), a descriptor specifying the portion of the sensitive data stream it represents ({\code blk\_desc}), and the global privacy guarantee \sysname is configured to enforce against the entire stream ($\epsilon_j^G=\epsilon^G$).
\sysname supports multiple ways of splitting the stream into \privacyresources, and splitting determines the type of DP guarantee \sysname enforces: Event, User, or User-Time DP.
\S\ref{sec:dpf-extensions} shows how splitting works for each.
%Briefly, event DP splits the sensitive data stream by time; user DP splits it by user ID; and user-time DP splits it by both.
%{\code Blk\_desc} therefore specifies the time range, the user ID range, or both, depending on the DP semantic \sysname is configured for.

Each block $j$ also maintains four variable fields.
(1)~$\epsilon^C_j$ denotes the budget that has been consumed for the block. We leverage the theory we developed for Sage~\cite{sage} to justify that enforcing a global $\epsilon^G$ privacy guarantee over the entire stream reduces to ensuring that $\epsilon^C_j \le \epsilon^G_j=\epsilon^G$ for all blocks $j$ at all times.
Thus, when $\epsilon^C_j$ reaches $\epsilon^G$, we remove private block $j$ from Kubernetes and it no longer represents a resource.
(2)~$\epsilon^A_j$ denotes the part of block j's budget that has been allocated to some claims but not yet consumed.
(3)~$\epsilon^U_j$, called {\em unlocked budget}, is the unallocated and unconsumed budget made presently available for allocation to privacy claims.
(4)~$\epsilon^L_j$, called {\em locked budget}, is the unconsumed and unallocated budget not yet made available for allocation.
Our DPF algorithm (\S\ref{sec:dpf-algorithm}) leverages the last two fields to unlock budget from $\epsilon^G_j$ progressively to ensure that future pipelines have access to the privacy resource in accordance to a fairness policy.
Among all fields, the invariant is: $\epsilon^G_j=\epsilon^L_j+\epsilon^U_j+\epsilon^A_j+\epsilon^C_j$.

\heading{Privacy Claim} (\F\ref{fig:abstractions-api}, right):
This abstraction is used by pipelines to allocate and consume privacy budget from one or more \privacyresources.
When creating a privacy claim, the programmer specifies a selector for the data blocks relevant for their pipeline ({\code blk\_selector}).
Typically, this means specifying a time range from which the programmer wishes to obtain data samples (e.g., the past year).
\sysname then maps this descriptor onto the \privacyresources that contain data samples from that time range.
%Other options exist for specifying the {\code blk\_selector}, such as providing a listing of the blocks the programmer wishes to process.
In addition to the block selector, the programmer also specifies the demanded privacy budget for each of the blocks that match the selector.
While often the demanded privacy budget will be uniform across all selected blocks, we allow the programmer to specify a {\em demand vector}, $d_{i,j}$, with one separate entry for each selected block.

\heading{API} (\F\ref{fig:abstractions-api}, bottom):
We implement three functions on privacy claims: {\code allocate}, {\code consume}, and {\code release}.
A pipeline can invoke them multiple times on the same claim, and they will be executed sequentially.
    {\code allocate} invokes the Privacy Scheduler to allocate privacy demand, $d_{i,j}$, to blocks that match the {\code blk\_selector}.
The scheduler will perform the selection, verify that every matching block has sufficient unconsumed and unallocated budget to potentially honor $d_{i,j}$, and if so, binds the matching blocks to the claim.
It then adds the claim to its internal list of claims to schedule with the DPF algorithm.
The scheduler will ultimately decide to allocate the request, or not.
If it does, {\code allocate} succeeds and the caller is guaranteed that the entire demand vector $d_{i,j}$ has been allocated to the bound blocks.
If it does not, the blocks are unbound, and the caller can assume that none of the requested budgets in its demand vector were allocated.
    {\code consume} invokes the Privacy Controller to deduct a part of previously allocated budget, $c_{i,j}$, from blocks already bound to the claim.
The function is similarly not guaranteed to succeed, for example if the caller is asking to consume more than the budget it has left for a block.
    {\code release} invokes the Privacy Controller to reclaim a previous unconsumed allocation to a claim.
For example, a pipeline invokes {\code release} if it decides to stop early and not execute some steps.
The Privacy Controller can also invoke {\code release} if the pipeline that owns the claim fails.
%Finally, while DPF does not currently support preemption, {\code release} may be used for that in the future.

\subsection{Example Pipeline}
\label{sec:architecture:example-pipeline}

\begin{figure}[t]
    \centering
    \begin{subfigure}[t]{0.5\linewidth}
        % Source of this figure is in Google Docs. Editable link:
        % https://docs.google.com/presentation/d/1Glk4brON7wak5cCZbaZNdUiaC0Bnjo_Wk0AsNdxX5eE/edit?usp=sharing
        \centering
        \includegraphics[width=\linewidth]{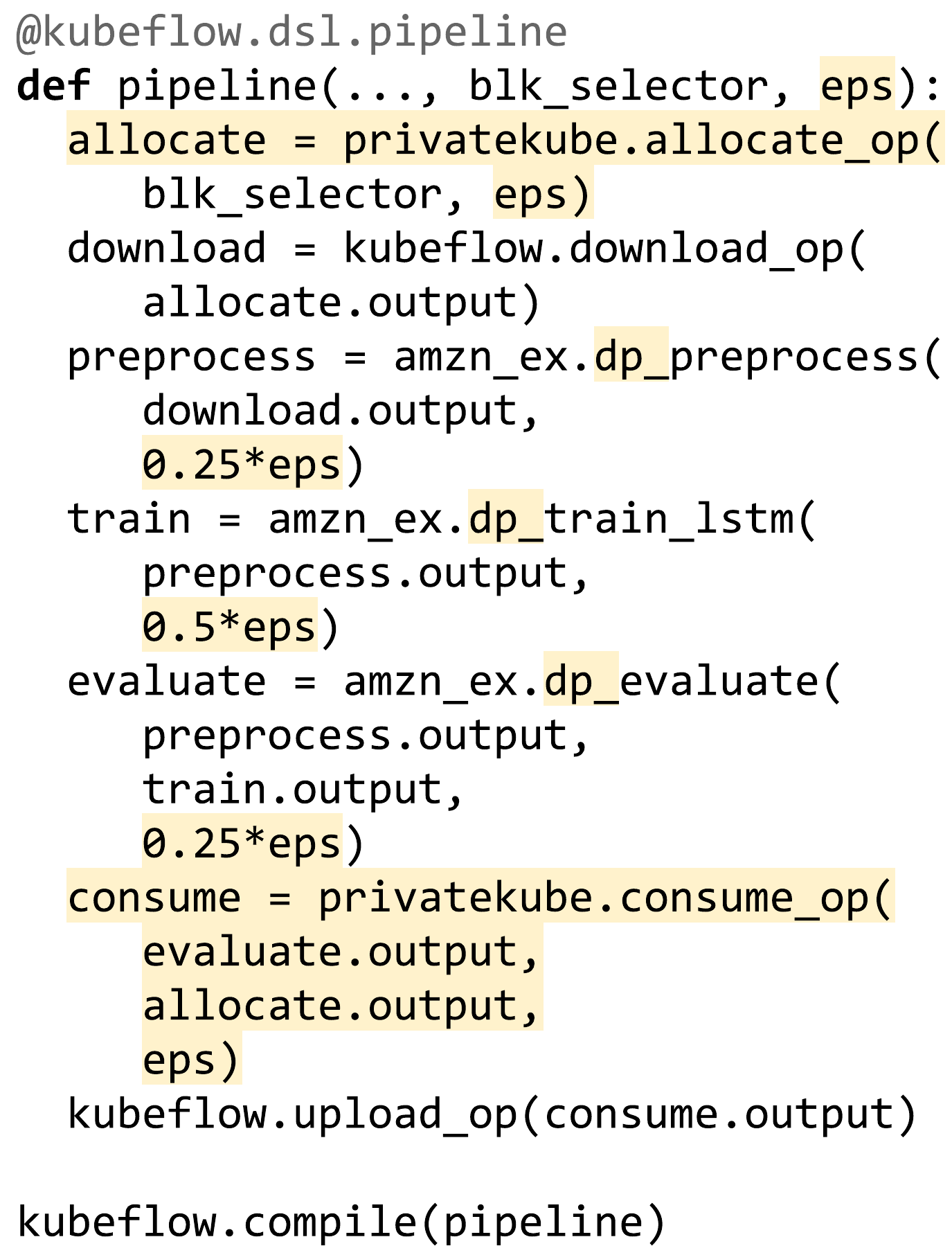}
        \caption{\footnotesize{\bf Pseudocode}}
        \label{fig:example-pipeline-pseudocode}
    \end{subfigure}%
    \begin{subfigure}[t]{0.5\linewidth}
        \centering
        \includegraphics[width=\linewidth]{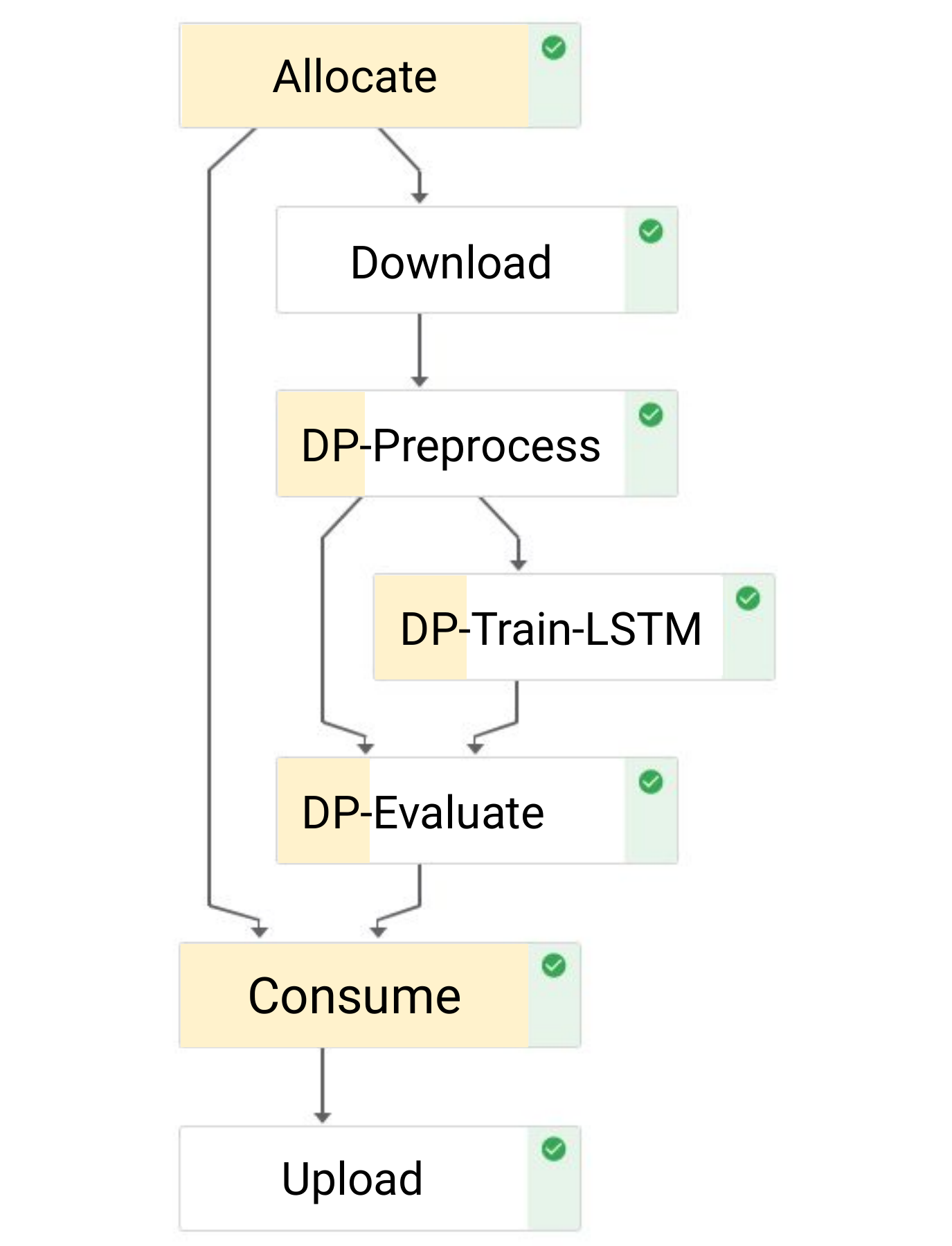}
        \caption{\footnotesize{\bf Execution Graph}}
        \label{fig:example-pipeline-execution-graph}
    \end{subfigure}
    % \vspace{-0.3cm}
    \caption{\footnotesize {\bf Example private Kubeflow pipeline.}
        Distinctions from the non-private version are highlighted in yellow background.
        %The pipeline combines components from multiple sources: {\code kubeflow} package gives standard Kubeflow components; {\code amzn\_ex} components are written by the programmer for this pipeline; and {\code privatekube} package gives components to interact with the \sysname API.
    }
    % \vspace{-0.5cm}
    \label{fig:example-pipeline}

    % Here is the real code for a DP pipeline with multiple components: https://github.com/columbia/sage/blob/master/kubeflow/pipelines/privacy_budget_claim/taxi_demo.py
    % Here is a compact (all the code in one component) pipeline for Amazon: https://github.com/columbia/sage/blob/master/workloads/amazon/pipelines/build_pipeline.py
    % Here are the standard Kubeflow components: https://github.com/kubeflow/pipelines/tree/master/components/google-cloud/storage
\end{figure}

To exemplify usage of \sysname's abstractions and API, we describe a pipeline from our evaluation (Product/LSTM in \S\ref{sec:evaluation:macrobenchmarks}).
It is built in Kubeflow, an ML pipeline orchestrator for Kubernetes, and trains an NLP model on Amazon Reviews to predict a product category. %for a consumer review.
\F\ref{fig:example-pipeline} shows (a) our code in Kubeflow DSL and (b) the pipeline's execution graph.
Highlighted are the distinctions between private and non-private versions.

%Focusing on the {\em non-private version} (without highlights),
The pipeline has three processing steps: {\code Preprocess} tokenizes the reviews; {\code Train-LSTM} trains an LSTM model with stochastic gradient descent (SGD); {\code Evaluate} validates that the model passes a baseline accuracy. The Kubeflow runtime executes each step in a separate pod and passes artifacts along the computation graph~\cite{kubeflow}.
If a step fails, its children in the graph will not be launched.
An important note for \sysname is that in Kubeflow, most steps of a pipeline are pure functions and do not communicate with the outside. Only a few well-defined Kubeflow components do, including: {\code Download} (loads data from an external source) and {\code Upload} (pushes an artifact to the serving infrastructure).
%\F\ref{fig:example-pipeline-private} shows two {\code Upload} components: one saves the statistics produced by {\code Preprocess} in a feature store for future reuse by other models; the other saves the trained model in a model store to be picked up by the serving infrastructure.

Focusing on the {\em private version} (highlighted parts of \F\ref{fig:example-pipeline}), the distinctions from a non-private pipeline are two-fold.
First, each step is coded by the programmer to enforce DP.
For example, the training step uses DP SGD instead of SGD.
The DP steps take an additional parameter: privacy budget ({\code eps}).
The programmer splits {\code eps} among the steps to enforce {\code eps} DP at pipeline level.
In the example, {\code dp\_preprocess} gets 25\% of {\code eps}, {\code dp\_train} 50\%, {\code dp\_evaluate} 25\%  (\F\ref{fig:example-pipeline-pseudocode}).
%We acknowledge that coding ML pipelines to enforce DP is not easy but leave it for future work to improve programmability of DP ML pipelines.

Second, the private pipeline interacts with \sysname to demand and consume {\code eps}.
This interaction is through drop-in Kubeflow components that we created to wrap \sysname's API.
This example highlights two such components: (1) {\code Allocate} and (2) {\code Consume}, wrappers around {\code allocate} and {\code consume}, respectively (\F\ref{fig:example-pipeline-execution-graph}).
The protocol is simple: place {\code Allocate} before any component accessing sensitive data (e.g., {\code Download}); place {\code Consume} before any component with externally visible side-effects (e.g., {\code Upload}).
(1) {\code Allocate} creates a privacy claim and invokes {\code allocate} on it with a block selector and {\code eps} privacy budget.
If {\code allocate} succeeds, then {\code Download} reads the data of the blocks bound to the claim ({\code bound\_blks}) and the training process begins.
If {\code allocate} fails, then {\code Download} is never launched and the sensitive data never accessed.
(2) {\code Consume} receives the privacy claim from {\code Allocate} and invokes {\code consume} on it with a privacy budget equal to the one that was consumed. %to produce the artifact. % that {\code Upload} wishes to externalize.
%In \F\ref{fig:example-pipeline-pseudocode}, {\code Consume1} deducts 25\% of {\code eps} to output the preprocessing statistics; {\code Consume2} deducts 75\% of {\code eps} to output the model.
If {\code consume} succeeds, then {\code Upload} runs and outputs the model artifact.
If {\code consume} fails, then {\code Upload} is never launched and the model never externalized.
Assuming programmers adhere to this protocol (\S\ref{sec:threat-model-assumptions}), the above ensures that \sysname controls the privacy loss resulting from externalizing ML artifacts.

\subsection{Kubernetes -- \sysname Distinctions}
\label{sec:architecture:kubernetes-kubeflow-distinctions}

Despite one-to-one mapping of our abstractions with Kubernetes' -- node::private block, pod::privacy claim -- there are also semantic differences.
First is the level at which we make scheduling decisions.
Consider the pipeline from \S\ref{sec:architecture:example-pipeline}.
The Kubernetes Scheduler performs a scheduling decision for each step.
It schedules our {\code Allocate} and {\code Consume} pods, as well as the functional pods.
In \sysname, we decided to {\bf \em allocate privacy at the level of entire pipelines}.
Indeed, after being allocated compute resources, the {\code Allocate} pod creates a privacy claim and invokes {\code allocate} on it.  This is when the Private Scheduler makes a scheduling decision for the privacy resource.
The privacy claim is then kept for the entirety of the pipeline and passed among its components as needed.
%
%This design decision is motivated by the fact that pipelines typically assume that all their steps will eventually complete.
%If we made a privacy claim for each step -- hence a Privacy Scheduler decision for each pod -- then a pre-processing step might run and {\code consume} budget, but the training step may never be granted, wasting privacy.

Second, in Kubernetes, the binding of pod to node is many-to-one: one pod can be bound only to one node, but the same node can be bound to multiple pods.
In \sysname, the binding is many-to-many: a privacy claim can be bound to many private blocks, and the same block can be bound to multiple claims.
This leads to a question of atomicity for the binding across multiple blocks.
A critical design decision we have made is an {\bf \em all-or-nothing semantic} for scheduling: a pipeline can expect {\code allocate} on its privacy claim to either fail or guarantee that (1) all the blocks matching the claim's selector were bound to the privacy claim, and (2) for each block, the demanded privacy budget was allocated in full.
This decision, which has significant impact on the scheduling algorithm (\S\ref{sec:dpf-algorithm}), should be thought of as a plausible assumption, though not the only reasonable one.
Multiple use cases justify all-or-nothing.
Many DP algorithms have complex interactions with hyper-parameters, such as learning rate and batch size; programmers may want to run on the budget for which those were tuned.
Other use cases include the need for comparable models and DP budget searches on a fixed schedule (as proposed in Sage~\cite{sage}).
Furthermore, the {\bf \em non-replenishable} nature of the privacy budget suggests that the scheduler should grant no more budget than a pipeline demanded, to keep as much budget available for future pipelines.
% These distinctions have major consequences for the scheduler, as next described.

\vspace{-0.3cm}
\section{DPF Algorithm}                     % 2.5 pages
\vspace{-0.3cm}
\label{sec:dpf-algorithm}
% DPF: Design and Analysis

Given the preceding integration of private blocks as a new resource in Kubernetes, we now explore how scheduling should work for this resource.
Can we achieve for privacy the same types of theoretical guarantees that compute schedulers often achieve?  How should scheduling algorithms change given the semantic differences between privacy and compute resources?
To obtain initial answers, we focus on max-min fairness guarantees and algorithms that support them.

Our idea is to model each \privacyresource as a {\em separate resource} that must be allocated to different pipelines based on their demands.
Demands will differ across pipelines, both in the blocks they select and in the privacy budgets they request for selected blocks.
Consider four blocks ($B0, B1, B2, B3$) and three pipelines requesting: $d_1=(0.5, 0.5, 0.5, 0.0)$; $d_2=(0.0, 0.1, 0.1, 0.1)$; and $d_3=(0.0, 0.0, 0.0, 0.01)$.
The pipelines could be: a large model (user embedding) registered before block B3 appeared; a smaller model that needs recent data (news recommendation) registered after B3 appeared; and a daily statistic invoked on B3.
Privacy demands being heterogeneous, the four blocks will have heterogeneous capacities left after the pipelines complete.

The preceding formulation points to DRF (Dominant Resource Fairness)~\cite{drf} -- an algorithm that achieves max-min fairness for multiple, heterogeneous compute resources (\eg, CPU, memory) -- as a basis for scheduling privacy.
However, as we will show, DRF's max-min fairness guarantees do not hold for scheduling privacy.
We next describe the limitations of DRF and several variations for privacy scheduling, after which we present the design and analysis of our new algorithm, {\em DPF} ({\em Dominant \Privacyresource Fairness}).

\subsection{Limitations of DRF and Variations}
\label{sec:dpf-algorithm:motivation}

We identify three limitations of DRF with respect to the privacy resource.
First, DRF assumes static resources and sometimes even static workloads.
In \sysname, we focus on a {\bf \em dynamic setting}: both pipelines and \privacyresources arrive to the system dynamically.
If we applied DRF on private blocks, at every point in time, DRF would try to consume the entire available budget to satisfy the demands of all present tasks. This would make it violate the sharing incentive guarantee of max-min fairness. A new task arriving to the system that asks for its fair share of privacy budget might not be able to get it, since DRF had already allocated the budget to previous tasks.

Second, DRF, like most scheduling algorithms for compute resources~\cite{hug,carbyne,pisces,tetris,graphene}, assumes these resources are \emph{replenishable}: a resource can grant utility (\ie via CPU cycles, network bandwidth) indefinitely.
For instance, if multiple pipelines need to time-share a CPU core, prior work assumes that
if a pipeline was assigned to the core in time interval $T_1$, the core will naturally be
available for other pipeline in time intervals $T_2$, $T_3$, \etc and provide them
with the same amount of CPU cycles per time slot.
In contrast, an individual \privacyresource is a {\bf \em non-replenishable resource}. If a
pipeline is assigned a budget for a particular \privacyresource, that budget is consumed forever, and there may not be sufficient budget remaining for another pipeline in that particular block.
    {\em Dynamic DRF}~\cite{dynamicdrf}, a more recent extension of DRF, considers both dynamic settings and non-replenishable resources.  Unfortunately, Dynamic DRF has its own limitation, as follows.

Third, as discussed in \S\ref{sec:architecture:kubernetes-kubeflow-distinctions}, \sysname adopts an {\bf \em all-or-nothing semantic}: a pipeline is either allocated all of its demanded budget, or none at all.
Therefore, pipelines have an all-or-nothing utility function, where they can only be scheduled (with a utility of 1) if
\emph{their entire demand vector is allocated}, otherwise their utility is 0.
Once a pipeline is allocated its entire demand vector, it leaves the system.
Having an all-or-nothing utility function departs from both Dynamic DRF and DRF, which assume compute resources with continuous utility. In fact, an all-or-nothing utility function would break the Pareto efficiency of Dynamic DRF and DRF alike, which allocate resources proportionally based on demand (see \S\ref{sec:related-work}).

\subsection{DPF}
\label{sec:dpf-design}

Due to the {\bf \em dynamic} arrival of pipelines and the {\bf \em non-replenishable} nature of \privacyresources, we need
to {\em gradually unlock} privacy budget as pipelines arrive to the system, in order to award those pipelines their fair share.
Therefore, we need to define a more constrained notion of a {\em fair share} that divides the budget of \privacyresources over some particular number of pipelines, or a particular time period.
This section presents a version of DPF that defines a fair share over the \emph{first $N$ pipelines that select particular \privacyresources}, and provides formal fairness guarantees {\em for those first $N$ pipelines}.
For any subsequent pipelines (after the first $N$) that request a budget for those particular blocks,
\sysname will {\em not} guarantee them a fair share, but will make a best-effort to schedule them with leftover budget.
\S\ref{sec:DPF-extensions} discusses a version of DPF that instead of dividing resources by pipelines, divides
resources by time intervals, and has weaker fairness guarantees.
In both cases we ensure that DPF schedules budget {\bf \em all-or-nothing}, so that no budget is wasted on tasks that will not end up being scheduled, thus violating Pareto efficiency.

Algorithm~\ref{alg:dpf} gives pseudocode for DPF.
When a new block $j$ is created ($\textproc{OnDataBlockCreation}$), its per-block budget, $\epsilon^G_j$, is determined by the fixed global privacy budget $\epsilon^G$.
To ensure that the first $N$ tasks that request $j$ get their fair share, $j$'s budget is initially completely {\em locked} ($\epsilon^U_j=0$).

Recall that each pipeline in \sysname has in its privacy claim a privacy demand vector, $d$, whose entries represent the epsilon demand for the \privacyresources matching the claim's selector.
We define the {\em privacy budget fair share} of each \privacyresource $j$ as: $\epsilon^{FS}_j=\epsilon^G_j / N$.
DPF guarantees the fair share of a given \privacyresource $j$ to the first $N$ pipelines that arrive to the system that have a non-zero demand for $j$.

%\heading{Gradually Unlocking Budget.}
We unlock the budget as pipelines arrive (function $\textproc{OnPipelineArrival}$): a new pipeline $i$ that requests budget from a particular block $j$ unlocks $\epsilon^{FS}_j$ of that block's budget, up until all the block's budget is unlocked.
The scheduler's responsibility is to allocate the total unlocked budget ($\epsilon^U$) among the different pipelines.

\begin{algorithm}[!t]
    \caption{{\bf DPF} (max-min fairness for first $N$ pipelines).}
    \begin{algorithmic}
        \State {\color{gray} \# Config.: ($\epsilon^G, \delta^G$) global DP guarantee to enforce.}
        \Function{onDataBlockCreation}{block index $j$}
        \State $\epsilon^G_j \gets \epsilon^G, \epsilon^U_j\gets 0, \epsilon^A_j \gets 0, \epsilon^C_j \gets 0$
        \EndFunction
        \Function{onPipelineArrival}{demand vector $d_{i}$}
        \For{$\forall j: d_{i,j}>0$}
        \State $\epsilon^U_j \gets \min(\epsilon^G_j, \ \epsilon^U_j + \frac{\epsilon^G_j}{N})$
        \EndFor
        \EndFunction
        \Function{onSchedulerTimer}{waiting pipelines wp}
        \State $\textrm{sorted\_pipelines} \gets$ sortBy(\textproc{DominantShare}, wp)
        \For{$i$ in  $\textrm{sorted\_pipelines}$}
        \If{\Call{CanRun}{$d_i$}}
        \State \Call{Allocate}{$d_i$}
        \State Run task $i$, which either consumes $d_{i,j}$ (moving
        \State it to $\epsilon^C_j$) or releases it (moving it back to $\epsilon^U_j$).
        \EndIf
        \EndFor
        \EndFunction
        \Function{DominantShare}{demand vector $d_i$}
        \State return $\max_{j: \ d_{i,j} > 0} \ \frac{d_{i,j}}{\epsilon^G_j}$
        \EndFunction
        \Function{CanRun}{demand vector $d_i$}
        \State return $\forall j: \ d_{i,j} \leq \epsilon^U_j$
        \EndFunction
        \Function{Allocate}{demand vector $d_i$}
        \For{$\forall j$}
        \State $\epsilon^U_j \gets \epsilon^U_j - d_{i,j}$
        \State $\epsilon^A_j \gets \epsilon^A_j + d_{i,j}$
        \EndFor
        \EndFunction
    \end{algorithmic}
    \label{alg:dpf}
\end{algorithm}

To determine which pipeline gets scheduled first, the scheduler maintains a sorted list of the waiting pipelines,
based on their \emph{dominant \privacyresource share}.
This is defined as the maximum demand within each pipeline's demand vector:
\vspace{-.2cm}
\begin{equation}
    \label{eq:dominantshare}
    \textrm{DominantShare}_i = \max_{j} \frac{d_{i,j}}{\epsilon^G_j} ,
    \vspace{-.2cm}
\end{equation}
where $d_{i,j}$ is the demand for block $j$ of pipeline $i$ and $\epsilon^G_j$ is the total budget of \privacyresource $j$.
%, and $unlocked_j$ is the
%amount of unlocked budget in block $j$.
%In other words, for each pipeline, the dominant \privacyresource share 
%is the maximum ratio between the pipeline's demand for a particular \privacyresource
%and the amount of budget that is unlocked in that block.
The scheduler sorts pipelines by their dominant \privacyresource share, with the smallest
share ranked first (function $\textproc{OnSchedulerTimer}$).
If there are one or more pipelines that have the same
dominant \privacyresource share, DPF will sort them by taking the smallest of the
second-most dominant \privacyresource share of each pipeline,
followed by the smallest third-most dominant share, \etc

%Whenever a new pipeline arrives, %and new privacy gets unlocked, the scheduler needs to decide which
%pipeline to assign the remaining unallocated budget to, among the waiting pipelines. Algorithm~\ref{algo:dpf}
%shows the pseudocode for DPF's scheduling algorithm.
DPF tries to allocate pipelines based on their order in the list. It tries to allocate \emph{all of the demanded privacy budget vector of the pipeline at once}. If it cannot allocate the pipeline fully (function $\textproc{CanRun}$ returns false), then it moves to the next one in the list, until it reaches the end of the list.

\begin{figure}[t]
    \centering
    \includegraphics[width=0.8\columnwidth]{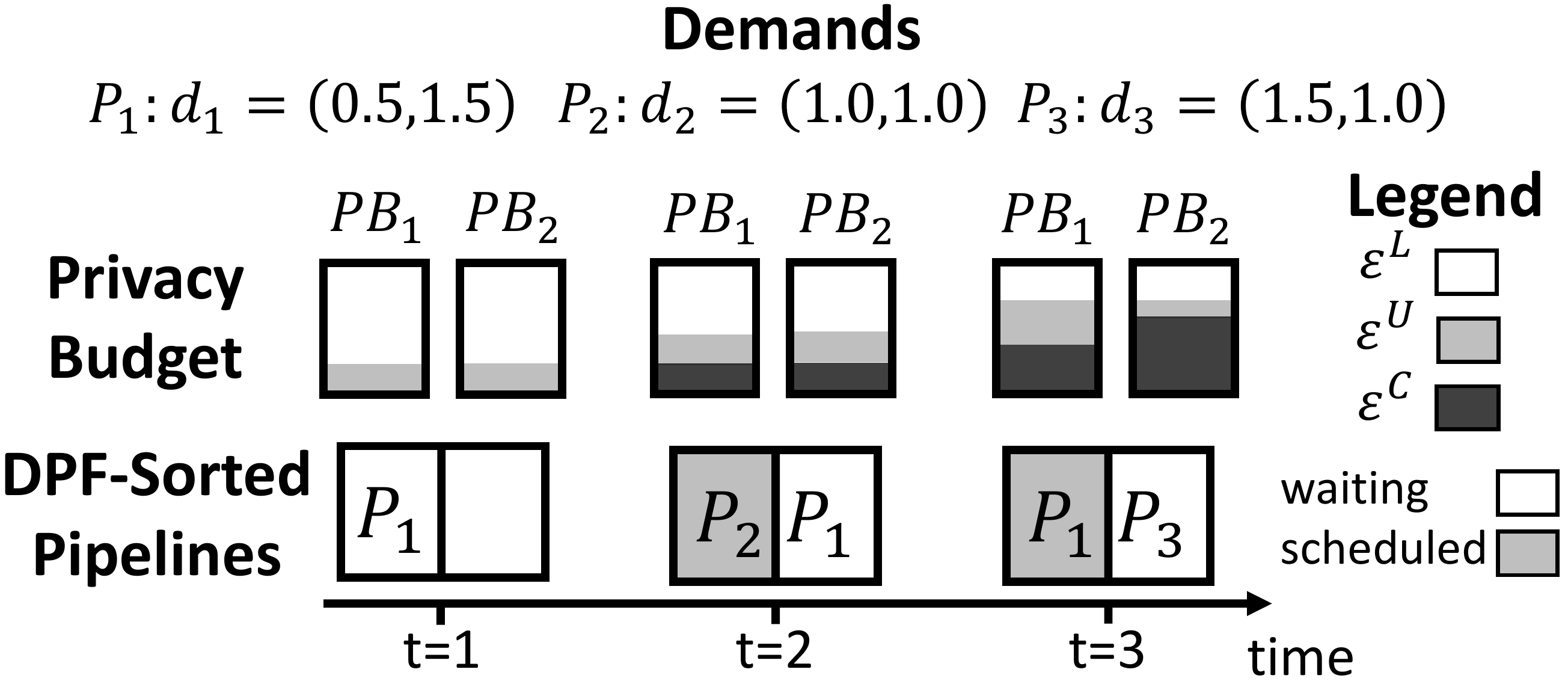}
    \caption{\footnotesize {\bf DPF example.} DPF is scheduling three pipelines ($P_1, P_2, P_3$) over two \privacyresources ($PB_1$, $PB_2$), over time. Shows the state of DPF's sorted list, and what portion of each \privacyresource is locked ($\epsilon^L$), unlocked ($\epsilon^U$), and consumed ($\epsilon^C$).  Assumes budget is consumed instantaneously ($\epsilon^A=0$).}
    \label{fig:dpf_example}
\end{figure}

%Note that due to the all-or-nothing utility function, DPF tries to allocate the chosen pipeline \emph{all of its demanded budget vector at once}.
%If it cannot, it does not allocate any budget to that pipeline (or any other pipeline), and all remaining pipelines will have to wait for more budget to be unlocked.
%If DPF is able to allocate budget to the pipeline, it will repeat these three steps again.

\paragraph{Example.}
\F\ref{fig:dpf_example} shows an example run of DPF with three pipelines and two \privacyresources.
Suppose the fair share ($\epsilon^{FS}$) of each block is equal to 1.
Pipeline 1 ($P_1$) arrives at $t=1$, then $P_2$ and $P_3$ at each time unit.
The demand vector of $P_1$ is $d_1=(0.5, 1.5)$, while the vector of $P_2$ is $d_2=(1.0, 1.0)$ and $P_3$'s demand is $d_3=(1.5, 1.0)$.
The bottom of the figure depicts the state of of DPF's sorted list at each time unit, where the shaded
pipeline in the list is the one that is scheduled at that time unit, while the unshaded one remains waiting.

When $P_1$ arrives it unlocks a privacy budget of 1 in each block. Since it is the only pipeline in the system
(and therefore has the minimum dominant resource), the scheduler tries to allocate it a budget.
However it is unable to do so, since $P_1$ requires a budget of 1.5 from $PB_2$ but only 1 is unlocked.

When $P_2$ arrives, more budget is unlocked. The dominant resource of $P_1$ is then the second block (with a demand of 1.5)
and the dominant resource of $P_2$ is either block 1 or 2, each of which has a share of 1.
Therefore, the scheduler tries to allocate budget to $P_2$, and does so successfully. It then tries to allocate budget to $P_1$, but is
unable to (since there is only a budget of 1 left in $PB_2$). $P_1$ will have to keep waiting.
When $P_3$ arrives, its dominant share is for block 1 (1.5), while the dominant share for $P_1$ is block 2 (1.5).
Since their dominant share is the same, DPF orders them based on their second highest share, which is 0.5 for $P_1$ and 1.5 for $P_2$.
Therefore, the scheduler allocates the budget for $P_1$. $P_3$ must wait, since the remaining unlocked budget for block 2 is only 0.5.

\subsection{DPF Analysis}
\label{sec:analysis}

We prove four properties of DPF: {\em sharing incentive}, {\em strategy-proofness}, {\em dynamic envy-freeness}, and {\em Pareto efficiency}.
We use the same definitions for these properties defined for dynamic environments based on Kash, et.al.~\cite{dynamicdrf}.

\begin{definition}[{\em fair demand pipeline}]
    \label{def:fair-demand}
    A fair demand pipeline has two properties:
    (a) the pipeline is within the first N pipelines that requested some budget for all its requested blocks,
    and (b) its demand for each one of the blocks is smaller or equal to the fair share (\ie for pipeline $i$, $\forall j: d_{i,j} \leq \epsilon^{FS}_j$).
\end{definition}

\begin{theorem}[{\em sharing incentive}]
    \label{thm:sharing-incentive}
    A fair demand pipeline is granted immediately.
\end{theorem}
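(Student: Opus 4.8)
The plan is to show that in the first invocation of $\textproc{onSchedulerTimer}$ after a fair demand pipeline $i$ arrives, the call $\textproc{CanRun}(d_i)$ returns true, so $i$ is allocated right away. Two structural facts drive the argument. First, by property (b) of Definition~\ref{def:fair-demand}, $d_{i,j}\le\epsilon^{FS}_j=\epsilon^G_j/N$ for every $j$ with $d_{i,j}>0$, so $i$'s dominant share $\max_j d_{i,j}/\epsilon^G_j$ is at most $1/N$; hence every pipeline $\textproc{onSchedulerTimer}$ considers \emph{before} $i$ has dominant share at most $1/N$ and therefore demands at most $\epsilon^{FS}_j$ from any block $j$ it requests. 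Second, by property (a), $i$ is among the first $N$ pipelines to request each of its blocks, so at the instant $i$ arrives at most $N$ pipelines have ever requested a given block $j\in J_i:=\{j:d_{i,j}>0\}$; consequently the cumulative amount unlocked from $j$ so far, which by the $\textproc{onPipelineArrival}$ rule and the invariant $\epsilon^G_j=\epsilon^L_j+\epsilon^U_j+\epsilon^A_j+\epsilon^C_j$ equals $\min\bigl(\epsilon^G_j,(\#\text{requesters})\cdot\epsilon^{FS}_j\bigr)$, has not yet hit the $\epsilon^G_j$ cap.

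The core computation is this: let $\rho_j\le N$ be the rank of $i$ among the requesters of $j\in J_i$. Immediately after $\textproc{onPipelineArrival}(d_i)$ runs, the total unlocked from $j$ is exactly $\rho_j\,\epsilon^{FS}_j$, whereas $\epsilon^A_j+\epsilon^C_j$ cannot exceed the amount unlocked \emph{before} $i$ arrived, namely $(\rho_j-1)\,\epsilon^{FS}_j$ — no pipeline can allocate or consume budget that has not yet been unlocked, and $i$'s arrival does not change the set of already-allocated pipelines. Therefore $\epsilon^U_j\ge \rho_j\,\epsilon^{FS}_j-(\rho_j-1)\,\epsilon^{FS}_j=\epsilon^{FS}_j\ge d_{i,j}$ for every $j\in J_i$. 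So if $i$ were the very first pipeline processed in the ensuing round, $\textproc{CanRun}(d_i)$ would already succeed.

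It remains to show that processing the (necessarily lower-dominant-share) pipelines ahead of $i$ in that round does not drain any $\epsilon^U_j$ below $d_{i,j}$. For this I would maintain, by induction over the event sequence (block creations, arrivals, allocations, consumptions, releases), the global invariant that \emph{no fair demand pipeline is ever left waiting}. Under this hypothesis, any pipeline allocated before $i$ in the current round is not fair demand; since its dominant share is $\le 1/N$ it satisfies property (b), so it must violate property (a). The accounting then proceeds per block $j\in J_i$: at most $\rho_j-1$ distinct pipelines ever withdraw budget from $j$ before $i$ is reached (each requests $j$, each arrived before $i$, and $i$ itself has not withdrawn); those demanding at most $\epsilon^{FS}_j$ from $j$ take at most one fair share each, while any pipeline demanding more than $\epsilon^{FS}_j$ from $j$ has dominant share $>1/N$, was therefore sorted after $i$, and could only have been allocated in an \emph{earlier} round — whence the budget it withdrew is bounded by what was unlocked strictly before $i$ arrived. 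Combining these bounds with the core computation yields $\epsilon^U_j\ge d_{i,j}$ when $i$ is finally considered, and $\textproc{CanRun}(d_i)$ succeeds.

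The main obstacle is making that last step fully rigorous. A small pipeline that was waiting can become runnable precisely because of the $\epsilon^{FS}_j$ that $i$ just unlocked, and may then consume part of a block $i$ needs; and a pipeline with one large block-demand may have hoarded several fair shares of that block in the past. To close this cleanly I would take the first fair demand pipeline ever denied an immediate grant (so that the inductive invariant is available for all prior events) and derive a contradiction by tracing the chain of pipelines that blocked it — each blocked on some block, each block over-withdrawn — back to a block whose cumulative unlocked budget provably dominates its cumulative withdrawals, using the fact that every withdrawal exceeding $\epsilon^{FS}_j$ forces a proportionally larger number of prior requesters of that same block.
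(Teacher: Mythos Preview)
Your inductive skeleton---maintain the invariant that no fair demand pipeline is ever left waiting, and use the fact that $i$'s own arrival unlocks a fresh $\epsilon^{FS}_j$ on every $j\in J_i$ so that $\epsilon^U_j\ge\epsilon^{FS}_j\ge d_{i,j}$ immediately afterward---is exactly the paper's. The divergence is entirely in how the inductive step is closed. The paper does it in one line: from ``no fair pipeline is waiting'' it infers that every waiting pipeline $k$ has $\mathrm{DominantShare}_k>1/N\ge\mathrm{DominantShare}_i$, so $i$ is sorted \emph{first} in the scheduler's list and is granted before any other pipeline can touch the freshly unlocked budget. Under that claim your ``main obstacle'' simply does not arise---there are no pipelines ahead of $i$ in the round---and the per-block accounting, the split into previously-granted versus currently-granted pipelines, and the minimal-counterexample chase are all unnecessary.

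Your hesitation is not baseless: the implication ``$k$ not fair $\Rightarrow$ $\mathrm{DominantShare}_k>1/N$'' silently assumes the only way to fail Definition~\ref{def:fair-demand} is via clause~(b), whereas a waiting $k$ could fail clause~(a) (not among the first $N$ on some block) while still having every demand at or below the fair share. The paper's four-line proof does not address this case. If you want to patch it, the useful structural fact is that any such $k$ must fail~(a) on a block $j'$ \emph{outside} $J_i$: for every $j\in J_i$ pipeline $i$ is among the first $N$ requesters and $k$ arrived strictly earlier, so $k$ is among the first $N$ on $j$ as well. Working from that observation---which localizes $k$'s ``lateness'' away from the blocks $i$ cares about---is a much shorter route than the recursive budget-tracing you propose, and is the missing ingredient that would let you either recover the paper's ordering claim or show directly that such $k$ cannot be unblocked by $i$'s arrival.
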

\begin{compactproof}
    Consider a fair demand pipeline $i$ with demand $d_i$.  We proceed by induction over the number of waiting pipelines.
        {\em Base case:} no waiting pipelines.  $d_{i,j} > 0 \Rightarrow \epsilon^{FS}_j \leq \epsilon^U_j$, since $\epsilon^{FS}_j$ is unlocked by $d_i$. $d_i$ is fair so $d_{i,j} \leq \epsilon^{FS}_j \leq \epsilon^U_j$. The pipeline is granted, and no fair pipeline is waiting.
        {\em Induction step:} Consider any waiting pipeline $k$ with demand $d_k$ and dominant share $\textrm{DominantShare}_k$.
    By the induction assumption no fair pipeline is waiting, so $\textrm{DominantShare}_k > \epsilon^{FS}_j \geq \textrm{DominantShare}_i$.
    As before, $d_{i,j} > 0 \Rightarrow d_{i,j} \leq \epsilon^{FS}_j \leq \epsilon^U_j$, and $d_i$ can be granted.
    $d_i$ is ordered first so it is granted.
\end{compactproof}

\begin{theorem}[{\em strategy-proofness}]
    \label{thm:strategy-proofness}
    A pipeline has no incentive to misreport its demand.
\end{theorem}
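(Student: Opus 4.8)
The plan is to treat the pipeline's reported demand relative to its \emph{true} demand $d_i$, i.e.\ the (minimal) budget vector on its requested blocks with which it can meet its accuracy target; with the all-or-nothing utility model of \S\ref{sec:dpf-algorithm:motivation}, the pipeline gets utility $1$ exactly when it is allocated a vector componentwise $\geq d_i$ on those blocks, and $0$ otherwise. Fixing all other pipelines' arrivals and reports, I would split a candidate misreport $d_i'$ (on the same requested block set, as the block selector reflects real data needs) into two cases. If $d_{i,j}' < d_{i,j}$ for some block $j$ with $d_{i,j}>0$, then by the all-or-nothing semantic the pipeline, if granted at all, is allocated exactly $d_i'$, which is insufficient for its task, so its utility is $0$ --- never better than truthful. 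Thus the only case to rule out is an \emph{over-report}: $d_i' \geq d_i$ componentwise. (A spurious extra block in the claim is similar: it only adds a \textproc{CanRun} constraint and weakly raises the dominant share, so it cannot help; and a pipeline that over-reports, gets granted, then releases the surplus gains nothing, since such a grant is strictly harder to obtain via \textproc{CanRun}.)

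For the over-report case the heart of the argument is a coupling between the run of Algorithm~\ref{alg:dpf} under the truthful report $d_i$ (the ``$d$-run'') and under $d_i'$ (the ``$d'$-run''), everything else identical. Two structural facts make this tractable: (i) $\textproc{OnPipelineArrival}$ unlocks $\epsilon^{FS}_j = \epsilon^G_j/N$ for each block with $d_{i,j}>0$ regardless of the reported magnitude, so the unlocking schedule is \emph{identical} in both runs; and (ii) since $d_i' \geq d_i$ componentwise, $i$'s sorted-descending demand multiset dominates, so $i$ sits weakly \emph{later} in the \textproc{DominantShare} order in the $d'$-run, while every other pipeline keeps the same relative position in both runs. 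I would then prove by induction on scheduler ticks the invariant: up to the first tick at which $i$ is granted in \emph{either} run, the two runs have identical $(\epsilon^U_j,\epsilon^A_j,\epsilon^C_j)$ state and have granted the same set of pipelines $\neq i$. The inductive step processes one tick: the two runs agree on all pipelines ordered before $i$ in the $d$-run (same state, same order, deterministic greedy); then $i$ is tried --- if $\textproc{CanRun}(d_i)$ holds, $i$ is granted in the $d$-run and we are done; otherwise $i$ is skipped in the $d$-run (a no-op), and the pipelines $i$ ``jumped over'' (plus the rest) are processed from exactly the state the $d'$-run reached at the same point, so both runs end the tick in the same state. Applying this at the tick $t^*$ where $i$ is granted in the $d'$-run: the state at the start of $t^*$ is identical, so in the $d$-run $i$ reaches its (earlier) turn with $\epsilon^U_j$ at least as large as it had at its (later) turn in the $d'$-run; since $d_i \leq d_i'$, $\textproc{CanRun}(d_i)$ holds and $i$ is granted at $t^*$ in the $d$-run. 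Hence truthful reporting grants $i$ no later than any over-report, giving weakly higher (discounted) utility.

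\textbf{Main obstacle.} I expect the bookkeeping in the inductive step to be the delicate part: carefully arguing that when $i$ is skipped in the $d$-run the jumped-over pipelines see exactly the budget state they saw in the $d'$-run, so lockstep is preserved --- this relies on the greedy loop being order-deterministic and on $i$'s skip being a genuine no-op. Two secondary points to pin down are the tiebreaking rule (the ``$k$-th largest component'' monotonicity under $d_i' \geq d_i$ must be consistent with the lexicographic tiebreak in \textproc{OnSchedulerTimer}) and confirming, via Theorem~\ref{thm:sharing-incentive}, that even a deviation engineered to make $i$ a fair-demand pipeline via under-reporting does not help: it would be granted immediately, but with a vector insufficient for its task, hence utility $0$.
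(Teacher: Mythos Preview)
Your proposal is correct and follows the same two-case decomposition as the paper: under-reporting yields utility $0$ by the all-or-nothing semantic, and over-reporting can only push $i$ weakly later in the \textproc{DominantShare} order without any compensating benefit. The paper's own proof, however, is a three-sentence sketch: for over-reporting it simply asserts that ``(a) its utility would not increase if it obtains more budget than it needs, (b) its dominant share will be greater or equal so it can only become less likely to get scheduled,'' and for under-reporting that utility drops to zero.

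What you have supplied is the formal content behind clause~(b): the tick-by-tick coupling, the observation that \textproc{OnPipelineArrival} unlocks the same $\epsilon^{FS}_j$ regardless of the reported magnitude, the monotonicity of the sorted-descending share vector under componentwise increase (which is what makes the lexicographic tiebreak go through), and the lockstep invariant that lets you conclude $i$ is granted no later under truthful reporting. None of that machinery appears in the paper --- it stops at the intuition. So your route is not different, just substantially more rigorous than what the paper actually writes; for a systems venue the sketch was deemed sufficient, but your argument is what one would need to make the claim airtight.
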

\begin{compactproof}
    A pipeline has no incentive to ask for more budget than its real demand, because: (a) its utility would
    not increase if it obtains more budget than it needs, (b) its dominant share will be greater or equal so it can only become less likely to get scheduled.
    A pipeline also has no incentive to ask for less budget than its real demand, because its utility will drop
    to zero if it is not allocated its demanded budget.
\end{compactproof}

\begin{theorem}[{\em dynamic envy-freeness}]
    \label{thm:dynamic-envy-freeness}
    A pipeline present at time $t$ cannot envy the allocation of another pipeline present at time $t$, except if their $\textrm{DominantShare}$s are identical.
\end{theorem}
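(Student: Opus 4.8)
The plan is to unwind the all-or-nothing utility, reduce ``$i$ envies $k$'' to a plain comparison of demand vectors, and then show that a strictly smaller dominant share would have forced DPF to grant the envious pipeline first.

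First I would recall the utility model: a pipeline has utility $1$ once its full demand vector is allocated and $0$ otherwise, and a pipeline leaves the waiting list the moment it is granted (releasing budget only sends the pipeline out of the system, never back to waiting). Fix two pipelines $i$ and $k$ in the system at time $t$ and suppose $i$ envies $k$. If $i$ has already been granted by time $t$ its utility is $1$, which is maximal, so assume $i$ is still waiting at $t$. If $k$ has not been granted by time $t$ then, since DPF never makes partial allocations ($\textproc{Allocate}$ is invoked only after $\textproc{CanRun}$ and allocates all of $d_k$ at once), $k$'s allocation is empty and gives $i$ utility $0$ as well, so no strict envy; hence $k$ was granted in some scheduler round at a time $t'\le t$, receiving exactly $d_k$. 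Using the dynamic envy-freeness notion of Kash \etal, $i$ envies $k$ only if $i$'s utility for the part of $k$'s allocation granted while $i$ was present strictly exceeds $i$'s own; with the observations above this forces (a) $i$ to have already arrived by the start of round $t'$ (if $k$ was served before $i$ ever arrived, the dynamic relaxation says there is nothing to envy), and (b) $d_{i,j}\le d_{k,j}$ for every block $j$, so in particular every block demanded by $i$ is also demanded by $k$.

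Next I would compare dominant shares. From $0\le d_{i,j}\le d_{k,j}$ for all $j$ we get $d_{i,j}/\epsilon^G_j \le d_{k,j}/\epsilon^G_j$ termwise, hence $\textrm{DominantShare}_i \le \textrm{DominantShare}_k$. If equality holds we are in the excepted case and there is nothing more to prove, so assume $\textrm{DominantShare}_i < \textrm{DominantShare}_k$ and derive a contradiction. At the start of round $t'$ pipeline $i$ is waiting (it is still waiting at $t\ge t'$, and an ungranted pipeline was never granted before), so it is in the sorted list built by $\textproc{OnSchedulerTimer}$; because its dominant share is strictly smaller, $i$ is processed strictly before $k$ in that round. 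The key monotonicity fact is that within a single scheduler round each block's unlocked budget $\epsilon^U_j$ only decreases: every $\textproc{Allocate}$ subtracts a nonnegative quantity and no released budget re-enters $\epsilon^U$ until a subsequent round. When $k$ is processed and granted, $\textproc{CanRun}(d_k)$ held, i.e.\ $d_{k,j}\le \epsilon^U_j$ for all $j$ at that moment; by monotonicity the value of $\epsilon^U_j$ at the earlier moment when $i$ was processed was at least that large, so $d_{i,j}\le d_{k,j}\le \epsilon^U_j$ for all $j$, meaning $\textproc{CanRun}(d_i)$ held too. Then DPF would have granted $i$ in round $t'$, contradicting that $i$ is still waiting at time $t\ge t'$. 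Hence $\textrm{DominantShare}_i = \textrm{DominantShare}_k$, which is exactly the exception the theorem permits.

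I expect the main obstacle to be the first step rather than the scheduling argument: stating the dynamic envy-freeness notion precisely enough (following Kash \etal) that a pipeline arriving after another has already consumed its fair share does not count as ``envy,'' while still making the reduction to $d_i\le d_k$ airtight. This is precisely the subtlety created by the dynamically-arriving, non-replenishable nature of \privacyresources that plain (static) envy-freeness never faces. A secondary, more mechanical point is justifying the within-round monotonicity of $\epsilon^U_j$ given the run/release step in Algorithm~\ref{alg:dpf}; I would resolve this by fixing that released budget re-enters $\epsilon^U$ only in a later round (consistent with the instantaneous-consumption view used in the running example of \F\ref{fig:dpf_example}), after which the argument above goes through unchanged.
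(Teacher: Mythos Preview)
Your proof is correct and rests on the same two ideas as the paper's: all-or-nothing utility means envy forces $d_i\le d_k$ componentwise, and then the scheduling order plus within-round monotonicity of $\epsilon^U_j$ makes it impossible for $k$ to be granted while $i$ (with strictly smaller dominant share) sits waiting.

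The organization differs slightly. The paper does a two-way split on whether $\textrm{DominantShare}_j<\textrm{DominantShare}_i$ or $\textrm{DominantShare}_j>\textrm{DominantShare}_i$ and in each branch argues that $i$ cannot fit in $j$'s allocation (hence no envy). You instead extract $d_i\le d_k$ from envy at the outset, which immediately kills the first branch and reduces the second to ``$i$ would have passed $\textproc{CanRun}$ before $k$ did.'' Your route is a touch more explicit about the mechanics (in particular you spell out the within-round monotonicity of $\epsilon^U_j$, which the paper uses silently), while the paper's two-case version is terser. Your acknowledged caveat about released budget not re-entering $\epsilon^U$ until a later round is exactly the assumption the paper's argument also needs, so you are not adding any hypothesis beyond what is implicit there.
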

\begin{compactproof}
    Consider pipeline $i$. There are two cases.
    Case 1: $i$ was granted. Its utility cannot improve due to all-or-nothing utility, there is no envy.
    Case 2: $i$ is waiting. Consider any pipeline $j$ that $i$ envies and is non identical ($i$ and $j$ are strictly ordered by DPF).
    We show by contradiction that $j$ was granted before $i$ entered the system.  Suppose that was not the case.
    When $j$ was granted: either $\textrm{DominantShare}_j < \textrm{DominantShare}_i$ and $j$ could be granted; or $\textrm{DominantShare}_j > \textrm{DominantShare}_i$ but $i$ could not be granted while $j$ could.
    In both bases $i$ cannot be granted from $j$'s allocation, which would give $i$ a utility of zero. $i$ cannot envy $j$, which is a contradiction.
\end{compactproof}

\begin{theorem}[{\em Pareto efficiency}]
    \label{thm:Pareto-efficiency}
    No allocation from unlocked budget can increase a pipeline's utility without decreasing another pipeline's utility.
    % It should not be possible to schedule a waiting pipeline without decreasing the allocation of some pipeline that was already allocated budget.
\end{theorem}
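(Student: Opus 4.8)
The plan is to use the all-or-nothing (binary) utility structure to collapse Pareto efficiency to a single feasibility statement about the residual $\epsilon^U$, and then contradict that statement using the greedy, sorted-order behavior of $\textproc{OnSchedulerTimer}$.

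First I would record what a Pareto improvement must look like here. Every pipeline's utility lies in $\{0,1\}$ and equals $1$ exactly when its whole demand vector $d_i$ has been allocated, and a pipeline that is granted then leaves the system. So any reallocation of the unlocked budget that strictly increases some pipeline's utility must grant at least one still-waiting pipeline; pick one such pipeline $i^\star$, so the reallocation allocates all of $d_{i^\star}$ to it. The clause ``without decreasing another pipeline's utility'' forces every already-granted pipeline to stay granted, so the budget committed to granted pipelines is frozen; the only budget available to $i^\star$ is therefore the residual unlocked-and-unallocated budget, which is exactly the $\epsilon^U_j$ present at the end of the scheduler round. It thus suffices to show: for every still-waiting pipeline $i^\star$ there is a block $j^\star$ with $d_{i^\star,j^\star} > \epsilon^U_{j^\star}$ at round end (since granting $i^\star$, a fortiori granting any set of waiting pipelines containing it, would require $d_{i^\star,j^\star} \le \epsilon^U_{j^\star}$).

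Next I would invoke the scheduler's own admission test. When the loop of $\textproc{OnSchedulerTimer}$ reached $i^\star$ in the sorted list, $\textproc{CanRun}(d_{i^\star})$ returned false, so some block $j^\star$ satisfied $d_{i^\star,j^\star} > \epsilon^U_{j^\star}$ at that moment. After that moment the only in-round updates to $\epsilon^U_{j^\star}$ are the $\textproc{Allocate}$ subtractions for pipelines ranked after $i^\star$, so $\epsilon^U_{j^\star}$ is non-increasing and $d_{i^\star,j^\star} > \epsilon^U_{j^\star}$ still holds at round end. This supplies the required block, so no waiting pipeline's demand fits in the residual $\epsilon^U$, contradicting feasibility of the improvement; hence DPF's allocation is Pareto efficient.

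The main obstacle is not the conceptual reduction, which is immediate from binary utilities, but making the monotonicity step airtight: in the pseudocode, ``Run task $i$ \dots releases it (moving it back to $\epsilon^U_j$)'' could in principle raise $\epsilon^U_{j^\star}$ after $i^\star$ has already been skipped. I would close this either (i) by stating the theorem for the regime in which allocated budget is held until after the round (consistent with the $\epsilon^A=0$ simplification used in the running example), so $\textproc{Allocate}$ is the only in-round modifier of $\epsilon^U$; or (ii) by carrying a loop invariant through $\textproc{OnSchedulerTimer}$ --- ``after each iteration every still-waiting pipeline has a block on which its demand exceeds the current $\epsilon^U$'' --- and checking it is preserved by both $\textproc{Allocate}$ and the release step (a release only returns budget the releasing pipeline did not use, so although a skipped pipeline might become runnable, it is picked up only on the next round, which does not make the current round's allocation Pareto-dominated). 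Either route yields the theorem.
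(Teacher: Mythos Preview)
Your approach is essentially the paper's: collapse Pareto improvement to ``some waiting pipeline must be grantable from the residual $\epsilon^U$,'' then contradict this using the fact that DPF already tried every waiting pipeline. The paper's proof is a three-line version that simply asserts ``DPF grants pipelines until no pipeline can be allocated'' and ``allocating $d_i$ would require extra budget, which can only come from another allocated pipeline''; it does not spell out the monotonicity step you add.

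Your caution about mid-round releases is a fair reading of the literal pseudocode, and your option~(i)---treating releases as deferred until after the round---is exactly the model the paper implicitly works in. Your option~(ii), however, does not close the gap as you have written it. The loop invariant ``every still-waiting pipeline has a block on which its demand exceeds the current $\epsilon^U$'' is \emph{not} preserved by a release that pushes $\epsilon^U_{j^\star}$ back above $d_{i^\star,j^\star}$; you concede as much when you say ``a skipped pipeline might become runnable.'' The parenthetical escape---that it will be ``picked up only on the next round, which does not make the current round's allocation Pareto-dominated''---is incorrect: if at round end a waiting pipeline fits entirely in $\epsilon^U$, then allocating it now \emph{is} a Pareto improvement, regardless of what future rounds do. So keep the core argument and close the release issue via option~(i) only.
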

\begin{compactproof}
    Consider pipeline $i$.
    If $d_i$ was already allocated, its utility cannot improve due to all-or-nothing utility.
    If $i$ is waiting, it cannot be allocated from unlocked budget as DPF grants pipelines until no pipeline can be allocated.
    Allocating $d_i$ would require extra budget, which can only come from another allocated pipeline. Since each allocated pipeline has exactly its requested budget this would decrease its utility from one to zero, which is not Pareto-improving.
\end{compactproof}

% Due to the all-or-nothing utility, the pipelines that were allocated budget are not better-off by obtaining more budget.
% Only waiting pipelines can be better-off by getting more budget. 

% We argue by contradiction that a waiting pipeline cannot be allocated more unallocated budget unless it is taken from other
% other already-allocated pipelines.

% Suppose a waiting pipeline $i$ can be allocated unallocated budget. Since this pipeline is waiting, it means that DPF caused it to wait, 
% either because its demand could not be fulfilled (contradiction), or due
% to some other pipeline, $j$ that had a lower dominant resource ratio, which it could not allocate budget to.
% However, this scenario is not possible, because the dominant resource ratio of pipeline $j$,
% which DPF could not allocate budget to, had a dominant resource ratio higher than 1, since its demand was higher
% than the unlocked budget for at least one block. Therefore, the waiting pipeline had to have an even higher dominant resource
% ratio, and therefore there was not enough unlocked budget to fulfill its demand.

\subsection{Best-effort Scheduling for Higher Demands}

While DPF only guarantees immediate allocation for fair demand pipelines,
the algorithm has a best-effort approach to schedule pipelines that do not have a fair demand.
There are two scenarios where pipelines do not have a fair demand.
First, a pipeline's demand may be higher than its fair share for at least one block.
From Theorem~\ref{thm:sharing-incentive}, fair demand pipelines always get immediately scheduled. Therefore, if there is any
leftover unallocated budget after a fair demand pipeline gets scheduled, that
budget can be used to schedule pipelines with higher demands.
This budget will not be needed by any future fair demand pipeline, since they
unlock a budget equal to the fair share.
In \F\ref{fig:dpf_example}, even though pipeline 1 has a higher demand than its
fair share for block 1, it still gets scheduled.
Second, for the same reason, DPF can safely schedule pipelines that are not
among the first $N$ to request budget from some blocks, if there is
leftover unallocated budget in those blocks.

\subsection{Scheduling Compute Alongside Privacy}
\label{sec:otherresources}

DPF only schedules \privacyresources. However, a pipeline will also need computing resource.
Currently, our \sysname prototype implements two schedulers: the privacy scheduler (based on DPF) schedules \privacyresources to private pipelines. The default Kubernetes scheduler schedules traditional computing resources for non-private pipelines, and for private pipelines that have been allocated their privacy budget.
DPF's game theoretic properties hold {\em if} the system is bottlenecked by privacy budget, rather than computing resources.
We leave open the problem of scheduling privacy together with computing resources while guaranteeing game theoretic properties.

\vspace{-0.3cm}
\section{DPF Extensions}                      % 1.25 pages
\vspace{-0.3cm}
\label{sec:dpf-extensions}
\label{sec:DPF-extensions}
% \xxx{Intro needs to change since we also include Time-based DPF here.}
% For intuition and ease of description 
We have focused so far on the core version of DPF that unlocks budget based on pipeline arrival, and uses basic DP composition and Event DP.
% the version of DPF presented so far focused on basic composition and Event DP, the most basic and straight forward DP semantics.
%In practice however, these characteristics are not always well suited to the task at hand.
We consider three extensions of DPF to address limitations of this core version: unlocking budget by time, using a stronger DP composition (R\'enyi) and stronger DP semantics (User and User-Time DP).
% We next discuss these limitations, and present three extensions to DPF addressing them.
% In practice however, these semantics are not always well suited to the task at hand. There are two dimensions along which we can improve the DP semantics supported by \sysname.
%
%First, pipeline-based DP budget unlocking forces \sysname's operator to configure a pre-set number of pipelines for which DP budget will be kept, regardless of future workload changes.
%Instead, it may be more natural to pace budget unlocking over a pre-set data block lifetime, to ensure that the data will have DP budget remaining while still accessible.
%
%Second, basic composition, which sums the privacy budget of each pipeline to determine the total budget spent, grows linearly in the sense that $k$ pipelines of budget $\epsilon$ consume $\epsilon k$ total budget.
%However, there exist stronger composition theorems that scale by $O(\epsilon\sqrt{k})$ and thus enable many more pipelines to share the total budget.
%
%Third, Event DP, while the easiest DP protection granularity to provide, is also the weakest. Stronger semantics such as User DP or User-Time DP, which rely on different notions of neighboring datasets, often map more naturally to users' expectations of privacy.

\subsection{Time-based DPF}
\label{sec:dpf-t}

\begin{algorithm}[!t]
	\caption{{\bf DPF-T} (shows what changes in Alg.~\ref{alg:dpf}).}
	\begin{algorithmic}
		\State {\color{gray} \# Replace \textproc{onPipelineArrival} with:}
		\Function{onPrivacyUnlockTimer}{data lifetime L}
		\For{$\forall j$}
		\State $\epsilon^U_j \gets \min(\epsilon^G_j, \ \epsilon^U_j + \frac{\epsilon^G_j}{L})$
		\EndFor
		\EndFunction
	\end{algorithmic}
	\label{alg:dpf-t}
\end{algorithm}

Gradually unlocking privacy budget is key to dealing with a non-replenishable resource and a dynamic workload.
The preceding DPF algorithm unlocks $\epsilon^{FS}_j$ for each requested block $j$, whenever a new pipeline arrives.
We also define a version of DPF that unlocks budget {\em over time}, regardless of workload.
Many organizations already enforce an expiration period, $L$, for collected data.
In time-based DPF (Algorithm~\ref{alg:dpf-t}), each block gradually unlocks its budget over its lifetime $L$, and the fair share
is defined as $\epsilon^{FS}_j=\frac{t}{L}\epsilon^G_j$, where $t$ is the interval of time at which \privacyresource budgets are unlocked.
The advantage of this version is the budget unlocking is predictable and independent of the pipeline arrival patterns. Moreover, by pacing budget unlocking over the data's lifetime, we ensure that the data will have DP budget remaining while still accessible.

Unfortunately, time-based DPF does not guarantee the sharing incentive.
A fair-share pipeline may overlap with many other, smaller pipelines that are ordered first and consume budget when it becomes available, forcing it to wait longer than $t$ or even never be granted.
% A pipeline that arrives early and has a demand higher than its fair-share can
% acquire that budget before a second pipeline that only demands its fair-share
% arrives.
% Since the budget is unlocked based on time, pipeline 2 may have to wait longer than $t$ after it has arrived even though it demanded its fair share,
% because pipeline 1 has already been scheduled and had been allocated more than $\epsilon_{FS}$ for a particular block.

However, the other three properties are guaranteed by this policy.
We briefly sketch out the proofs for each.
Strategy-proofness is guaranteed because there is no advantage in demanding more than the real demand,
since the pipeline will need to wait longer for the budget to be unlocked.
Envy-freeness is guaranteed for the same reason as in the base version of DPF.
At any given time DPF will prioritize the pipeline with minimum dominant \privacyresource, so a pipeline with a higher dominant resource can only be scheduled earlier than another pipeline by being granted before the other pipeline arrives.
Finally, Pareto efficiency is guaranteed by the combination of all-or-nothing utility and allocation.
%We evaluate both pipeline-based and time-based DPF in \S\ref{sec:evaluation:microbenchmarks}.

\subsection{DPF with R\'enyi DP}
\label{sec:dpf-extensions:renyi}

R\'enyi DP~\cite{8049725} is an alternative DP definition that is stronger than $(\epsilon, \delta)$-DP for $\delta \in (0,1]$ (in the sense that R\'enyi DP always implies $(\epsilon, \delta)$-DP but the converse is not true) and is weaker than $(\epsilon, 0)$-DP ($(\epsilon, 0)$-DP always implies R\'enyi DP).
The great benefit of R\'enyi DP is that it permits convenient composition of multiple mechanisms that scales much better than the basic composition we have been assuming so far.
We thus believe it is important for any globally DP system to support R\'enyi DP, and for this reason we describe our integration of it in \sysname.
However, the definition and formulas of R\'enyi DP are more complex than those of $(\epsilon, \delta)$-DP, so we will not attempt to detail them here.
Instead, we include a R\'enyi DP primer in Appendix~\ref{apx:rdp} and only state here a few facts needed to understand our method.

\heading{R\'enyi DP Facts.}
As described in \S\ref{sec:dp}, DP in general upper bounds the change in the output distribution of a randomized algorithm that can be triggered by a small change in its input.
Making $\delta=0$ in the DP definition in \S\ref{sec:dp}, we see that $(\epsilon, 0)$-DP puts a {\em multiplicative bound} on the change in the output distribution: $\forall \ES . \frac{P(\M(\D) \in \ES)}{P(\M(\D') \in \ES)} \leq e^\epsilon$.
$(\epsilon, \delta)$-DP loosens this multiplicative bound with an {\em additive factor}, $\delta$.
In contrast to these definitions, R\'enyi DP puts an upper bound on the {\em R\'enyi divergence}, a particular measure of distance between the output distributions: $\textrm{R\'enyiDivergence}_\alpha(\M(\D),\M(\D')) \leq \epsilon$.
We state three facts about this distance and R\'enyi DP.

First, R\'enyi divergence is parameterized by a parameter, $\alpha > 1$, hence R\'enyi DP is expressed in terms of two parameters: $(\alpha, \epsilon)$.
Second, for every value of $\alpha$, there is a direct translation from R\'enyi DP to $(\epsilon, \delta)$-DP.
The formula is: $(\alpha, \epsilon-\frac{log(1/\delta)}{\alpha-1})$-R\'enyi DP implies $(\epsilon, \delta)$-DP for any value of $\epsilon>0$, $\delta \in (0,1]$, and $\alpha>1$.
Also, $(\infty,\epsilon)$-R\'enyi DP is equivalent to $(\epsilon, 0)$-DP for any value of $\epsilon>0$.
Thus, the $\alpha$ parameter can be seen as adding a spectrum between pure $(\epsilon, 0)$-DP and $(\epsilon, \delta)$-DP; from any point on that spectrum, one can reconstruct back the traditional $(\epsilon, \delta)$-DP guarantee.
For our work, this means that \sysname can use R\'enyi DP internally while exposing the same $(\epsilon^G, \delta^G)$-DP guarantee externally.

Third, R\'enyi DP allows tighter analysis of the privacy loss from multiple mechanisms.
For example, the scale of the Gaussian distribution required to achieve $(\epsilon, \delta)$-DP depends linearly on $1/\epsilon$.
The scale of the Gaussian required to achieve $(\alpha, \epsilon)$-R\'enyi DP depends on $1/\sqrt{\epsilon}$ (and on $\alpha$).
In traditional DP, when composing (summing the $\epsilon$'s of) $k$ Gaussian mechanisms with the same scale, $\sigma$, the composite mechanism is equivalent to a Gaussian mechanism with $\sigma/k$ scale, so it's ``k times less private.''
But in R\'enyi DP, when composing the same $k$ Gaussian mechanisms, the composite mechanism is equivalent to a Gaussian mechanism with $\sigma/\sqrt{k}$ scale, so it's just ``$\sqrt{k}$ less private.''
Thus, R\'enyi DP scales much better in the number of computations and should enable more pipelines to share the global budget.
%This is why we consider it important for any globally DP system.
%Of course, translating R\'enyi DP back into traditional DP attenuates some of the benefit, depending on the value of $\alpha$ (see the translation formula above), but when $k$ grows, the benefit should become significant.

\paragraph{DPF with R\'enyi DP.}
Our goal is to take advantage of R\'enyi composition without sacrificing DPF's game-theoretical properties.
One option is to pick one point in the R\'enyi DP spectrum (one value of $\alpha>1$) and apply DPF as is, internally using R\'enyi to analyze and compose privacy loss, and ultimately translating the R\'enyi guarantee back into traditional DP.
Unfortunately, when composing multiple, heterogeneous mechanisms (think different $\sigma$ for Gaussian) in R\'enyi DP, it is unclear \textit{a priori} which parameter $\alpha$ will ultimately give the best traditional-DP guarantee; this is because both the R\'enyi analysis of privacy loss and the translation to traditional DP depend on $\alpha$, in inverse directions (see Appendix~\ref{apx:rdp} for details).
In \sysname, we thus choose to track a set $A$ of $\alpha>1$ values, and to use one that ultimately gives the best traditional-DP guarantee.
As the R\'enyi DP author shows~\cite{8049725}, and as we observed experimentally, fine-grained choice of values is not important, so we select several values based on recommendations from~\cite{8049725}: $A=\{2,3,4,8,...,32,64\}$.

Algorithm~\ref{alg:dpf-rdp} summarizes the changes DPF requires to support R\'enyi.
For each private block, $j$, \sysname initializes a {\em vector of R\'enyi budgets}, with one entry for each value of $\alpha \in A$, based on the preceding translation formula (function $\textproc{OnDataBlockCreation}$).
Other privacy variables maintained in the block similarly become vectors in $\alpha$ ($\epsilon^U$, $\epsilon^A$, etc.).
Moreover, a pipeline's privacy demand also becomes a vector {\em for each block}: $d_{i,j}(\alpha)$.
In practice, a developer will decide on the mechanism and noise scale to use (e.g. Gaussian mechanism with scale $\sigma$), based on which a library can compute the R\'enyi privacy demand vector for the tracked $\alpha$'s.
When a pipeline is allocated (function $\textproc{Allocate}$), the requested budget is deducted from each block, {\em and for each $\alpha$}.

With these changes, the question becomes how to schedule over the $\alpha$ vectors.
One approach is to treat each $(block, \alpha)$ tuple as a separate resource.
Since DPF already supports multiple resources, its game-theoretical guarantees should hold.
Indeed, this is how we compute the $\textproc{DominantShare}$ under R\'enyi: return the maximum demand over all requested blocks and $\alpha$ orders.
However, treating each $(block, \alpha)$ tuple as a separate resource does not work when deciding if a pipeline $\textproc{CanRun}$.
Indeed, doing so would allocate pipelines only when enough budget is unlocked for {\em all} $\alpha$ values.
However, recall that in R\'enyi DP, {\em any} $\alpha$ with sufficient privacy budget can be translated to an $\epsilon,\delta$-DP guarantee.  Requiring {\em all} to have that would just block progress until the largest $\alpha$ acquires sufficient budget, which removes the benefits of R\'enyi composition.
Instead, we allow allocation of any pipeline in which each requested block has enough unlocked budget $\epsilon^{U}_j(\alpha)$ for {\em any} $\alpha$ (potentially at different $\alpha$ across blocks).

\heading{Analysis.}
Under this behavior, the consumed budget at some $\alpha$ values may be higher than the unlocked budget, and even the global one.
However, for each block $j$ there will always remain one $\alpha$ such that $0 \leq \epsilon^{U}_j(\alpha) \leq \epsilon^{G}_j(\alpha)$.
The global $(\epsilon^G, \delta^G)$-DP guarantee is thus preserved (proof in Appendix~\ref{apx:rdp-dpf-proofs}).
Moreover, DPF's four properties (\S\ref{sec:analysis}) can be proven to hold under the following definition of a fair pipeline: $\forall j, d_{i,j}(\alpha) \leq \epsilon^{FS}_j(\alpha)$, where $\epsilon^{FS}_j(\alpha) = \frac{\epsilon^G_j(\alpha)}{N}$ (proofs in~\cite{privatekube_arxiv}).

\begin{algorithm}[!t]
    \caption{{\bf DPF-R\'enyi} (shows what changes in Alg. \ref{alg:dpf}).}
    \begin{algorithmic}
        \State {\color{gray} \# Config.: ($\epsilon^G, \delta^G$): global DP guarantee to enforce;}
        \State {\color{gray} \# \ \ \ $A$: R\'enyi parameters (default: $\{2,3,4,8,...,64\}$).}
        \Function{onDataBlockCreation}{block index $j$}
        \State $\forall \alpha \in A: \ \epsilon^G_j(\alpha) \gets \epsilon^G - \frac{\log(1/\delta^G)}{\alpha-1}$
        \EndFunction
        \State {\color{gray} \# Either \textproc{onPipelineArrival} or \textproc{onPrivacyUnlock}- \# \textproc{Timer}, modified to unlock budget for each alpha.}
        \Function{DominantShare}{demand vector $d_i(\alpha)$}
        \State return $max_{j:d_{i,j}>0}\max_{\alpha \in A}\frac{d_{i,j}(\alpha)}{\epsilon^G_j(\alpha)}$
        \EndFunction
        \Function{CanRun}{demand vector $d_i(\alpha)$}\label{line:can-run}
        \State return $\forall j: \ \exists \alpha \ \textrm{s.t.} \ d_{i,j}(\alpha) \leq \epsilon^U_j(\alpha)$
        \EndFunction
        \Function{Allocate}{demand vector $d_i(\alpha)$}
        \For{$\forall j \ \textrm{and} \ \forall \alpha \in A$}
        \State $\epsilon^U_j(\alpha) \gets \epsilon^U_j(\alpha) - d_{i,j}(\alpha)$
        \State $\epsilon^A_j(\alpha) \gets \epsilon^A_j(\alpha) + d_{i,j}(\alpha)$
        \EndFor
        \EndFunction
    \end{algorithmic}
    \label{alg:dpf-rdp}
\end{algorithm}

% \paragraph{DPF-R\'enyi Properties.} The two changes described above are sufficient to preserve DPF's four properties from \S\ref{sec:analysis}, under the following definition of a fair pipeline: $\forall j, d_{i,j}(\alpha) \leq \epsilon^{FS}_j(\alpha)$, where $\epsilon^{FS}_j(\alpha) = \frac{\epsilon^G_j(\alpha)}{N}$.
% The proofs are similar to DPF's, \xxx{and appear in full in our appendx?}.
% Sharing incentive: For any block $j$, $\exists \alpha, \epsilon^U_j(\alpha) \geq 0$; when a fair pipeline arrives among the first $N$ for this block, the budget unlocked by the pipeline is enough to allocate it, and no runnable pipeline with a smaller dominant share can be in the system.
% Strategy-proofness and pareto efficiency: Proofs are enabled by our all-or-nothing utility assumption.
% Dynamic envy-freeness: DPF proof applies after seeing that the dominant share considers all (block, $\alpha$) tuples as a different resource.
% As we will see in the \S\ref{sec:evaluation}, the strong composition enabled by RDP under these changes provides significant improvements over basic composition in the number of pipelines that can be allocated.

\subsection{Supporting Varied DP Semantics}
\label{sec:varied-dp-semantics}

\begin{figure*}[t!]
    \centering
    \hfill
    \begin{subfigure}{0.29\linewidth}
        \centering
        \includegraphics[width=\linewidth]{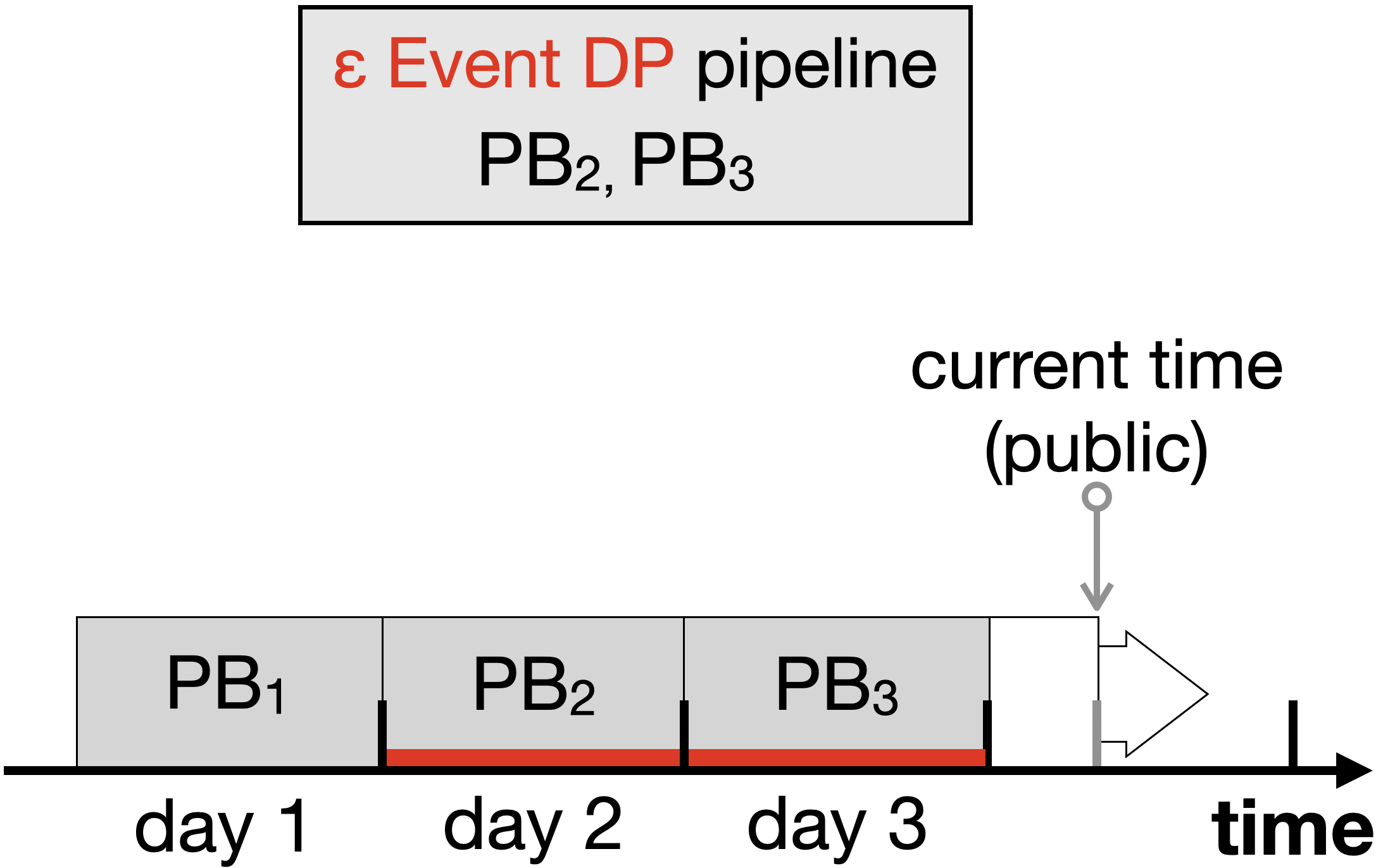}
        \caption{\footnotesize {\bf Event DP.}
            Same as Sage~\cite{sage}.
            % A ``file'' represents one data point, and blocks show the total privacy budget available for this data.
            % Data is split in private blocks covering non-overlapping time ranges.
            % Models request a sequence of blocks to train on (here days 2-3).
            % The data from all blocks is pooled together, the computation enforces Event $\epsilon$-DP.
            % An $\epsilon$ budget is consumed from each used block, shown in bright red.
            % When a block's budget is exhausted (i.e. the whole block is red), the data cannot be accessed anymore.
        }
        \label{f:time_splitting}
    \end{subfigure}
    \hfill
    \begin{subfigure}{0.29\linewidth}
        \centering
        \includegraphics[width=\linewidth]{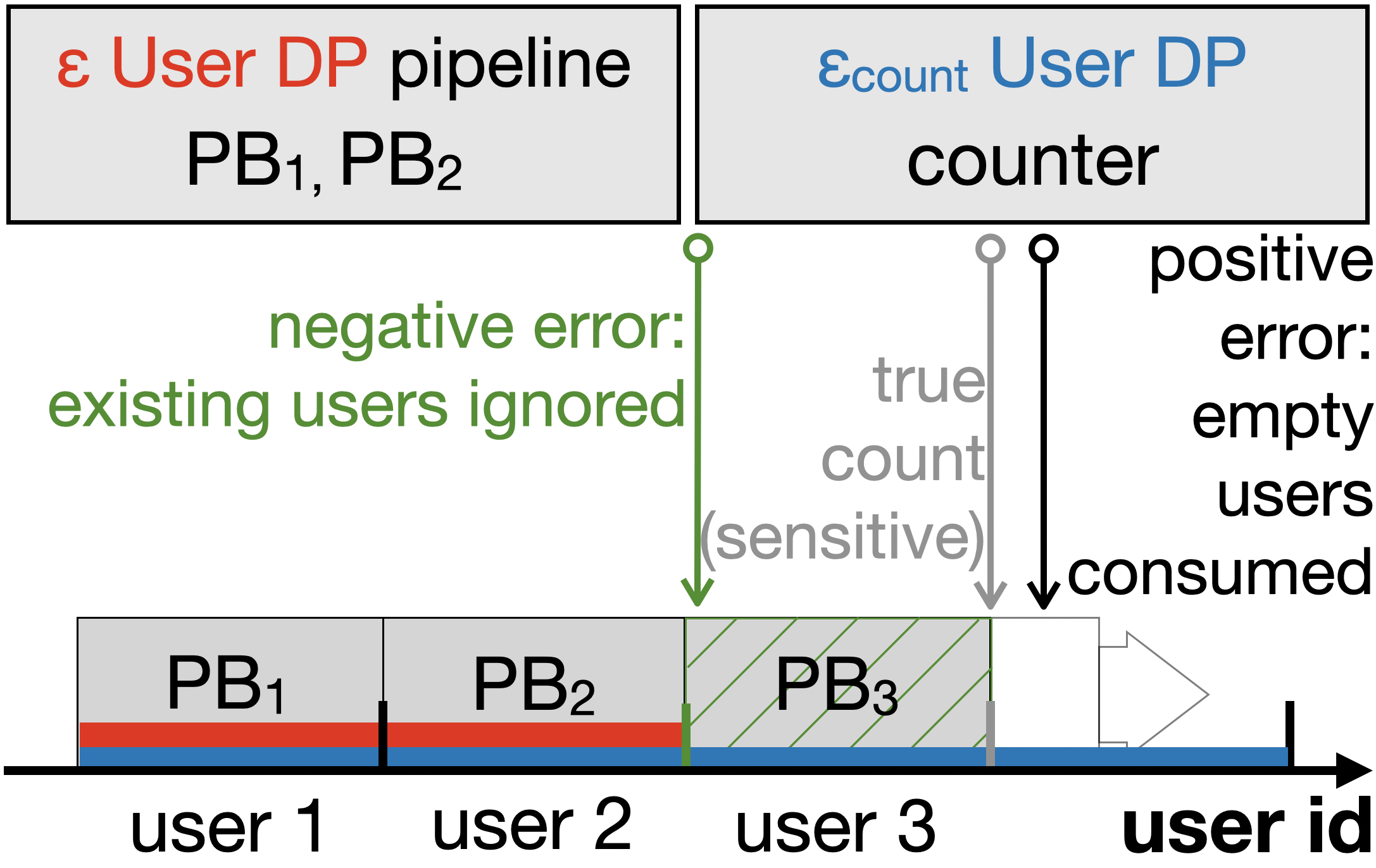}
        \caption{\footnotesize {\bf User DP.}
            New in \sysname.
            % Data is split by user.
            % An $\epsilon_{count}$-DP continuous counter maintains a count of all users in the stream.
            % All user blocks start with $\epsilon_{count}$ budget used for the counter, in dark red.
            % Models can request blocks below a lower-bound of the counter (here users 1-2).
            % When merging the data from all requested blocks, data remains grouped by user and the computation is User $\epsilon$-DP.
            % A budget of $\epsilon$ is consumed on all used blocks, in bright red.
        }
        \label{f:user_splitting}
    \end{subfigure}
    \hfill
    \begin{subfigure}{0.331\linewidth}
        \centering
        \includegraphics[width=\linewidth]{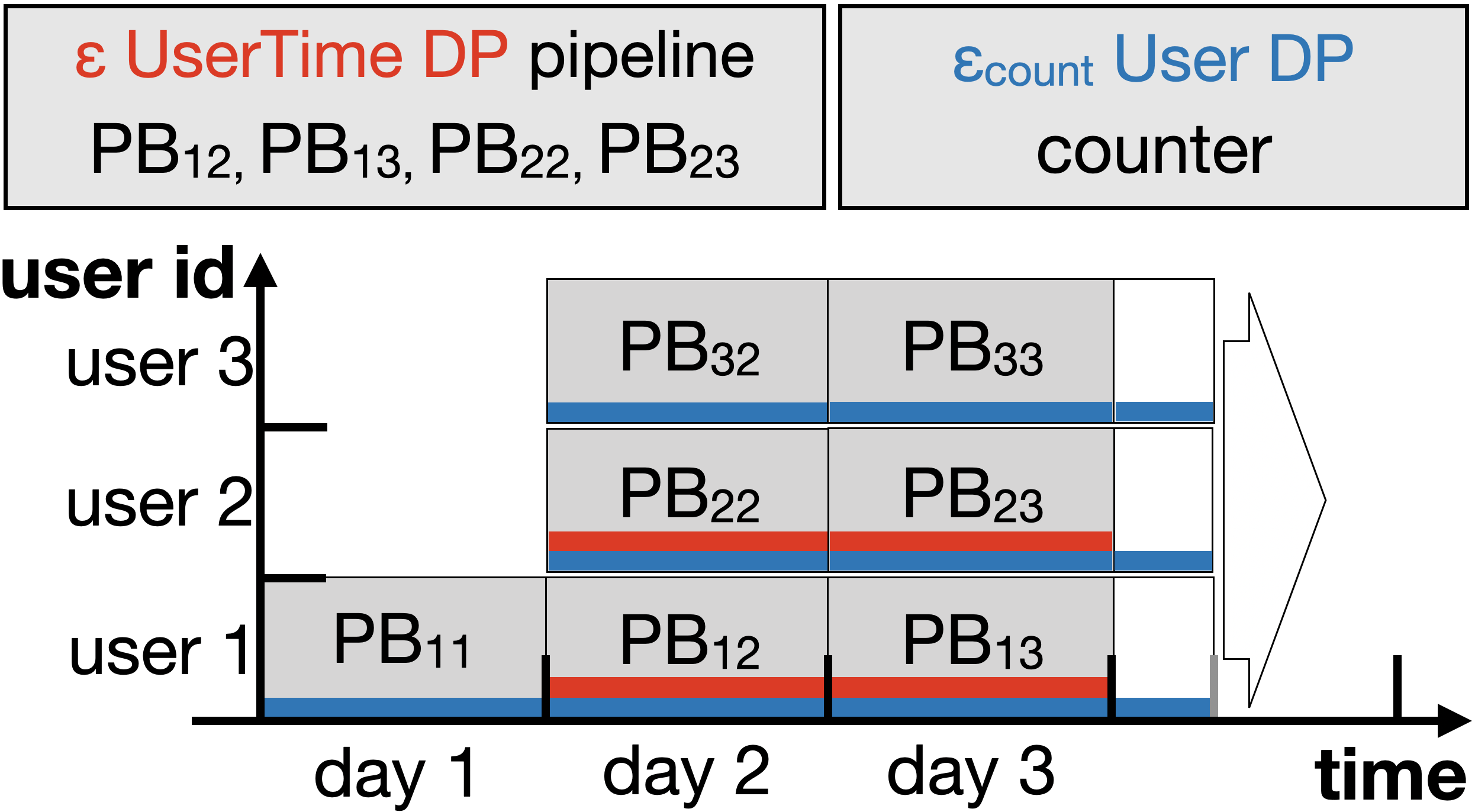}
        \caption{\footnotesize {\bf User-Time DP.}
            New in \sysname.
            % Data is split both by user and time: a block holds all data from a user id within a time period.
            % The first (smallest time) block for a user id is created when the upper-bound of the user counter reaches this user id, since this user may have contributed data.
            % All blocks are discounted $\epsilon_{count}$ budget for the counter, in dark red.
            % Models query user ids and a time range (here users 1-2 and days 2-3).
            % Data remains grouped by block, and the computation is block $\epsilon$-DP (like User DP, with blocks as users).
            % A budget of $\epsilon$ is consumed on all used blocks, in bright red.
        }
        \label{f:usertime-splitting}
    \end{subfigure}
    \hfill
    \caption{\footnotesize {\bf \sysname's support for diverse DP semantics.}
        Shows how the data is split into blocks and how pipelines request them.
        Light-gray private blocks can be requested by pipelines, white blocks are in-progress.
        A block's area represents its $\epsilon_j^G$ budget. Red portions are consumed by pipelines, blue by counters.
        %A block with exhausted budget (all colored) cannot be requested.
        % A black ``file'' shape shows one data point, or event.
        % Blocks show the total DP budget available for the data they hold.
        % Models request blocks to use with DP, which consumes budget (bright red).
        % A block with exhausted budget (all red) cannot be requested.
    }
    \label{f:block_splitting}
\end{figure*}

Finally, we detail how we incorporate support for all three DP semantics -- Event, User, and User-Time DP -- in our private block abstraction. To our knowledge, no one has shown how to support all three with one abstraction, and since we believe that a DP system should support diverse semantics, suitable for different cases, we describe here how we do so.

DP conceals a change between neighboring $D$ and $D'$ that are identical with a row added or removed.
This neighboring definition, or what we treat as a row that is added or removed, defines the protection semantic.
In {\em Event DP}, the most common but weakest semantic, $D$ and $D'$ differ in one event (\eg, one click).
DP thus conceals the impact of adding or removing one such event (e.g. yesterday's click on a health related post about a specific condition), but since one user can contribute a large number of such events, important aspects of a user's behavior can still leak though DP computations (e.g. repeated clicks related to said medical condition).
In {\em User DP}, the strongest semantic, neighboring datasets differ by all the data of one user.
User DP conceals the entire contribution of a user regardless of the amount of data (\eg, many clicks in a health app).
This semantic can be challenging to enforce on streams, since users with an exhausted privacy budget cannot contribute to new computations, even if they generate new data.
    {\em User-Time DP} is a middle-ground~\cite{Kifer2020GuidelinesFI}, in which neighboring datasets $D$ and $D'$ differ by the addition or removal of all data from one user in a given time period (\eg, one day). Repeated actions of a user in that time period are protected (\eg, a browsing session with repeated clicks), and newly generated data in the next period can still be used.

\F\ref{f:block_splitting} illustrates how we support all three DP semantics in our private block abstraction.
It requires instantiating two aspects: (1) how data is split into private blocks and (2) how blocks are requested by the pipelines.

\heading{Event DP (\F\ref{f:time_splitting}).}
{\em (1)~Splitting data:} At pre-set time intervals (\eg, a day), the data collected in this interval forms a new private block with a total of $\epsilon^G$ privacy budget.
    {\em (2)~Requesting blocks:} Because time is public, we always know which past blocks have been created and filled with data.
Pipelines registered on \sysname can thus request blocks from a time range of interest without risking consuming budget from an empty block.
In \F\ref{f:time_splitting}, blocks for the first three days are available. The pipeline requests data from the last two days, thereby consuming budget only for those.
This design is identical to Sage~\cite{sage}, which supports {\em only} Event DP.

\heading{User DP (\F\ref{f:user_splitting}).}
{\em (1)~Splitting data:}
Computing on any user's event must consume DP budget for the entire user; time-based splitting is therefore insufficient because a user's clicks can span large time intervals.
Instead, \sysname maintains a private block for each (group of) user id(s) that will ever exist in the system, lazily instantiated.
% , lazily instantiated with an initial $\e_G$ DP budget.
New data is added to the block responsible for the corresponding user without changing its remaining DP budget, or to a newly created block if this user is new.
For instance, in \F\ref{f:user_splitting}, only the first three users contributed data so far.

    {\em (2)~Requesting blocks:}
This raises a challenge.  Unlike in Event DP, where we know which past blocks have been created and filled with data, in User DP we do not know which users exist in the system at a given time.
Knowing that would leak information about which users join when, violating User DP.
Instead, \sysname maintains a DP counter that estimates, in a user-DP way, the number of users in the system at any time.
The counter is updated periodically (e.g., daily) and consumes a bit of DP budget from every block (in blue on \F\ref{f:user_splitting}).
Since the count is noisy, pipelines requesting user blocks may sometimes overshoot and consume budget from users that do not yet exist (and therefore cannot possibly supply any data).
To avoid consuming budget from empty user blocks, our design has pipelines request user blocks based on a {\em high probability lower-bound of the true count}.  This ensures the true count is under-estimated with high probability, so no empty user is wastefully requested.
Appendix~\ref{apx:counter} gives the specific formulas to obtain this lower bound.

\iffalse
    We apply a known algorithm for maintaining a DP continuous counter (\hspace*{-3px}\cite{Chan2011PCR}, Algorithm 2).
    With this counter, based on a binary tree cover of the data, any released count is a sum of at most $\log_2(T)$ numbers with noise from a Laplace distribution with parameter $\frac{\epsilon_{count}}{\log_2(T)}$ ($T$ is the number of daily updates done on the counter)~\cite{Chan2011PCR}.
    This implies an error lower than $\frac{4}{\epsilon}\log^{1.5}_2(T)\log(\frac{T}{\beta})$ at every time interval (\hspace*{-3px}\cite{dwork2010differential}, Theorem 4.1), with probability at least $1-\beta$.
    For instance, with a DP budget of $\epsilon_{count}=0.1$ for the counter and $T=2^{12}$, or more than ten years of daily updates, subtracting $25k$ from the count ensures it is under-estimated with probability $0.999$.
    This makes it unlikely to consume budget for users who do not yet exist.
\fi

The counter does consume some $\epsilon_{count}$-DP budget, which is a configuration parameter of \sysname, fixed when \sysname is deployed.
The budget is deducted once for each data block, upon the block's creation.
For example, for R\'enyi-DP, $\textproc{OnPrivateBlockCreation}(j)$ initializes j's global R\'enyi budget vector to: $\epsilon_j^G(\alpha) = \epsilon^G - \frac{\log(1/\delta^G)}{\alpha-1} - 2\epsilon_{count}^2 \alpha$, where the last term corresponds to the Renyi consumption of the $\epsilon_{count}$-DP counter.
Since DPF always works from this $\epsilon_j^G(\alpha)$, all DPF properties are preserved.

\heading{User-Time DP (\F\ref{f:usertime-splitting}).}
A middle-ground between Event and User DP, User-Time DP combines both mechanisms.
    {\em (1)~Splitting data:}
Data is split over both user and time; newly collected data is assigned to the block managing the corresponding user and the time range that includes the data creation.
Some of the blocks may be empty (e.g. user 1, day 2), but since no new data can ever be added to them once their timeframe passes, there is no cost to the future of using their DP budget now.
    % One can also spend some privacy budget to estimate the number of data points in a block to avoid using blocks with too little data.
    %
    {\em (2)~Requesting blocks:}
Blocks are requested on both time and a continuous DP counter of the number of users.
The counter works similarly to User DP, except that the first (smallest time) block for a user id is created when the upper-bound of the user counter reaches this user id.
This corresponds to the first time a user may have contributed data.

\vspace{-0.3cm}
\section{Evaluation}                        % 3 pages
\vspace{-0.3cm}
\label{sec:evaluation}
% Evaluation index.
We implemented \sysname on Kubernetes 1.17. Our experiments run on Google Cloud with managed GKE on two pools of CPU (n1-standard8 machines) and GPU (n1-standard8 machines with one Tesla K80 GPU) servers. Each pool is autoscaled by Kubernetes up to a cap of 10 servers per pool.

Our evaluation seeks to answer six questions: %1) How does DPF compare to baseline scheduling policies? 2) How do workload characteristics impact DPF? 3) How does the DP semantic impact model accuracy? 4) How does the DP semantic impact DPF?, and 5) What is the performance of \sysname's API?
\begin{denseenum}
    \item[{\bf Q1:}] How does DPF compare to baseline scheduling policies?
    \item[{\bf Q2:}] How do workload characteristics impact DPF?
    \item[{\bf Q3:}] How does R\'enyi DP impact DPF?
    \item[{\bf Q4:}] How does the DP semantic impact model accuracy?
    \item[{\bf Q5:}] How does the DP semantic impact DPF?
    \item[{\bf Q6:}] Does native integration facilitate tool reuse?
\end{denseenum}

We develop two methodologies.
First, we create a simple, controlled {\em microbenchmark} that helps us explore DPF under varied workload characteristics (Q1, Q2, Q3).
Second, we create a {\em macrobenchmark} consisting of multiple ML pipelines trained on Amazon Reviews~\cite{amazon-reviews} to investigate Q1, and Q4-6.
%\footnote{https://nijianmo.github.io/amazon/index.html} 

\heading{Metrics and Baselines.}
Across our experiments, we use the following metrics.
    {\em Number of allocated pipelines} is the number of pipelines that were successfully allocated their privacy budget throughout the experiment.
    {\em Scheduling delay} is the time measured from when a pipeline arrives to the point where it is allocated its privacy budget.
    {\em Accuracy} is the percentage of correct classification of a model.

We compare DPF to two baseline scheduling algorithms.
    {\em First-come-first-serve (FCFS)} tries to allocate pipelines by their order of arrival on available privacy budget. All the budget is immediately available to pipelines (\ie unlocked) from the outset.
    {\em  Round robin (RR)} allocates budget evenly among pipelines that are currently in the system. We implement two versions of RR that correspond to the two versions of DPF. The first one unlocks $\epsilon^{FS}_j$ of budget for each pipeline that arrives that demands a block $j$, and the second one unlocks budget in the block over time in proportion to its lifetime.
For example, if the data lifetime is a year, a third of the budget of a block will be released after four months.
This latter policy is similar to the one used by the Sage system~\cite{sage}.

\heading{Evaluation Highlights.}
DPF is able to grant more pipelines than the baselines at the cost of a small delay (Q1), especially over heterogeneous workloads (Q2). R\'enyi DP enables allocation of either many more or much larger pipelines (Q3). Stronger DP semantics require more DP budget and data (Q4), which increases the need for judicious budget allocation as with DPF (Q5). Our native integration enables reuse of existing tooling for privacy resource management, such as using Grafana to monitor privacy consumption (Q6).

% \paragraph{Default parameters.}

% Unless specified otherwise, we use the following default parameters.
% Pipelines and \privacyresources are released using an exponential distribution,
% with a mean rate of 10 pipelines per unit of time and one block per unit of time.
% Pipelines time out if they have not been allocated after 50 units of time.

\subsection{Microbenchmark (Q1, Q2, Q3)}
\label{sec:evaluation:microbenchmarks}
% Microbenchmarks

% \subsubsection{Methodology}
% \label{sec:evaluation:microbenchmarks:methodology}

Our microbenchmarks evaluate the performance of DPF compared to the two baselines.
We assume pipeline arrival follows the Poisson process.
In the single-block experiment, the pipeline arrival rate is 1 per second. We generate two types of pipelines, mice and elephants, split $75\%$ to $25\%$ by default, with respective demands of $\epsilon = 0.01 \epsilon^G$, and $\epsilon = 0.1 \epsilon^G$. 
% In the multi-block experiment the demand is either $0.01 \epsilon_G$ probability $0.75$ or $0.1 \epsilon_G$ with probability $0.25$.
In the multi-block experiment, blocks are created every 10 seconds. By default, pipeline's demand $\epsilon$ follows the same distribution as single-block. However, it can either request the last block with probability $0.75$, or the last 10 blocks with probability $0.25$, independently of the requested $\epsilon$. We used a load that emphasizes the differences between the policies, where newly arrived pipelines' average demand is $13.5\times$ of the newly generated blocks. This results in the basic composition experiments using an arrival rate of $12.8$ per second, and the R\'enyi experiments using $234.4$ per second. If not allocated, pipelines timeout after 300 seconds.

%For the R\'enyi DP decomposition, we set the Poisson pipeline arrival rate to $234.4/s$, to explore the use cases with higher contention.

\subsubsection{DPF Behavior on a Single Block}
\label{sec:evaluation:microbenchmarks:single-block}

We first evaluate the performance of DPF in the simplest possible setup: with a single
\privacyresource. In this case, the demand vector of each pipeline will only contain one item,
and DPF will prioritize the pipeline with the lowest demand. %75\% of the pipelines are mice, and the rest
%are elephants.

\begin{figure}[t]
	\centering
	\begin{subfigure}{0.5\linewidth}
		\centering
		\includegraphics[width=\linewidth]{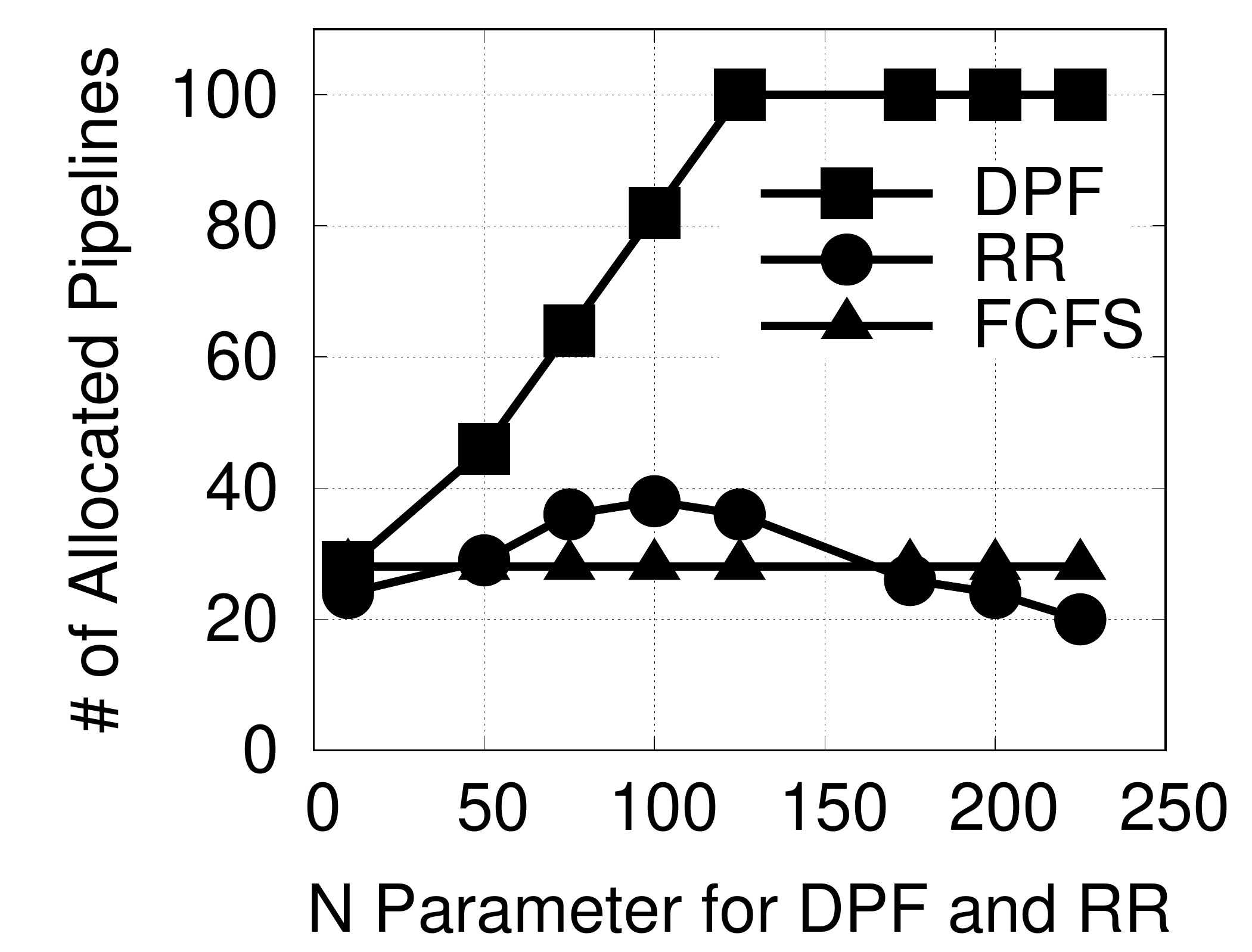}
		\caption{\footnotesize {\bf Number of pipelines allocated.}}
		\label{fig:evaluation:microbenchmark:single-block:completed}
	\end{subfigure}%
	~
	\begin{subfigure}{0.5\linewidth}
		\centering
		\includegraphics[width=\linewidth]{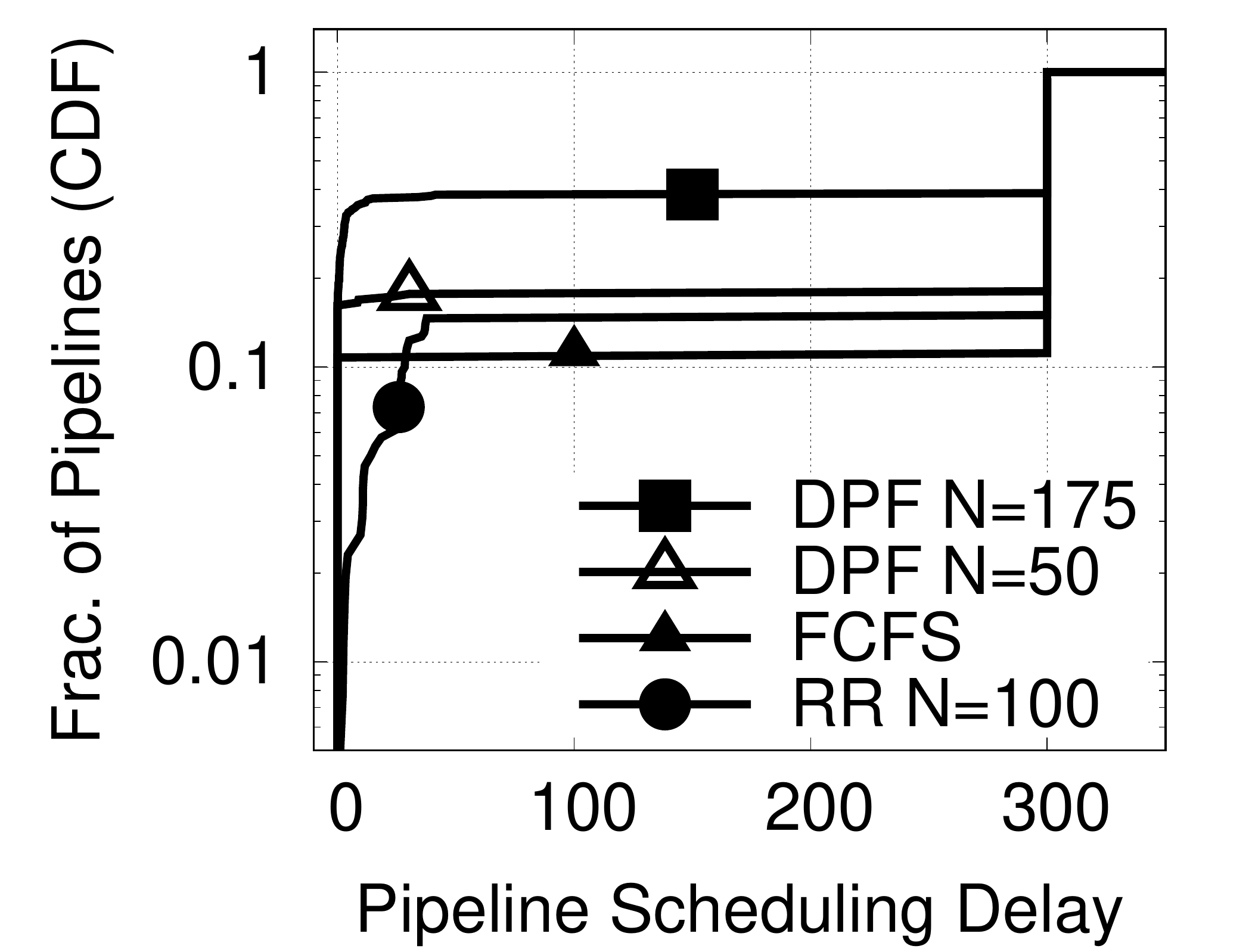}
		\caption{\footnotesize {\bf Scheduling delay.}}
		\label{fig:evaluation:microbenchmark:single-block:wait-time}
	\end{subfigure}
	% \vspace{-0.3cm}
    \caption{\footnotesize {\bf DPF behavior on a single block.}}
	\label{fig:evaluation:microbenchmark:single-block}
	% \vspace{-0.3cm}
\end{figure}

\F\ref{fig:evaluation:microbenchmark:single-block} shows DPF and RR under different
$N$ values, and FCFS.
\F\ref{fig:evaluation:microbenchmark:single-block:completed} shows allocated pipelines.
With FCFS early elephants take away the budget of many mice, only $28$
pipelines are granted.
With RR, a low value of $N$ directly unlocks all DP budget, behaving like FCFS.
When $N$ is high enough to maintain a large number of mice, but low enough to eventually grant them, RR is able to grant up to $38$ pipelines (more than FCFS).
At large $N$ RR's proportional allocation creates multiple partially granted pipelines and only $20$ are granted.
Neither outperforms DPF.
When $N$ is equal to 1, the first pipeline unlocks all the budget and DPF behaves like FCFS.
At higher values of $N$, DPF prefers mice over elephants and a higher number of pipelines get allocated, up to the maximum possible of $100$.
Since DPF never wastes budget on unallocated pipelines it outperforms RR when $N$ is large.

As expected, granting more jobs comes at the cost of increased delay (\F\ref{fig:evaluation:microbenchmark:single-block:wait-time} shows scheduling delay at notable operating points for each policy).
With DPF at $N=50$ some elephants experience scheduling delays before being granted from unlocked budget.
%Mice are scheduled immediately if they are among the first $N$ pipelines.
At $N=175$ some mice wait since $\epsilon^{FS}$ is higher than the mice requests, but only mice are granted.
%$T$ has a similar impact, but RR's proportional allocation is not efficient.

To summarize, DPF is always able to allocate budget to more pipelines than FCFS or RR.
$N$ presents a trade-off between the number of pipelines that are successfully allocated and the scheduling delay the pipelines experience.

\subsubsection{DPF Behavior with Mice Percentage}
\label{sec:evaluation:microbenchmark:mice-percentage}

\begin{figure}[t]
	\centering
	\begin{subfigure}{0.5\linewidth}
		\centering
		\includegraphics[width=\linewidth]{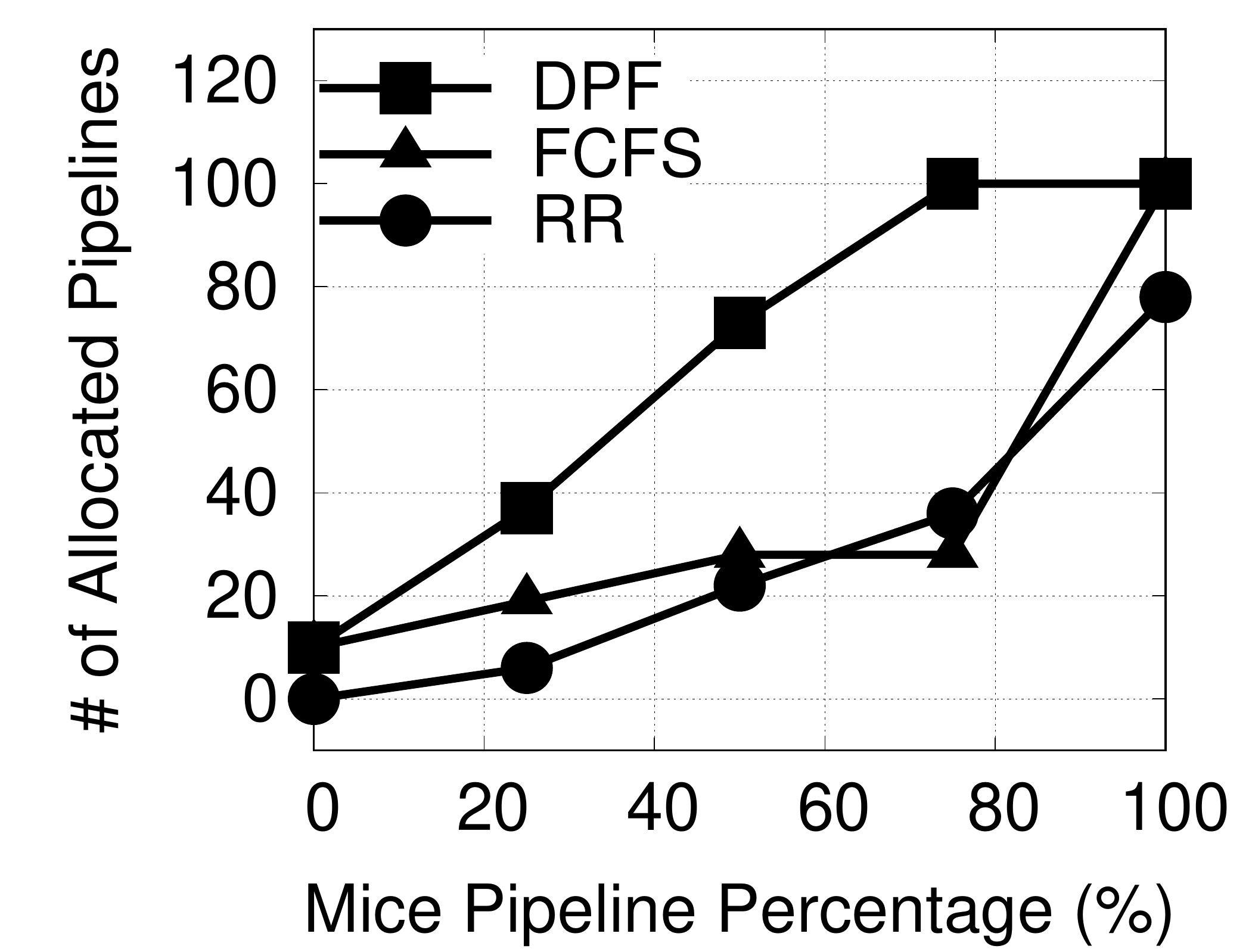}  % what's the file????
		\caption{\footnotesize {\bf Number of pipelines allocated.}}
		\label{fig:evaluation:microbenchmark:mice-percentage:completed}
	\end{subfigure}%
	~
	\begin{subfigure}{0.5\linewidth}
		\centering
		\includegraphics[width=\linewidth]{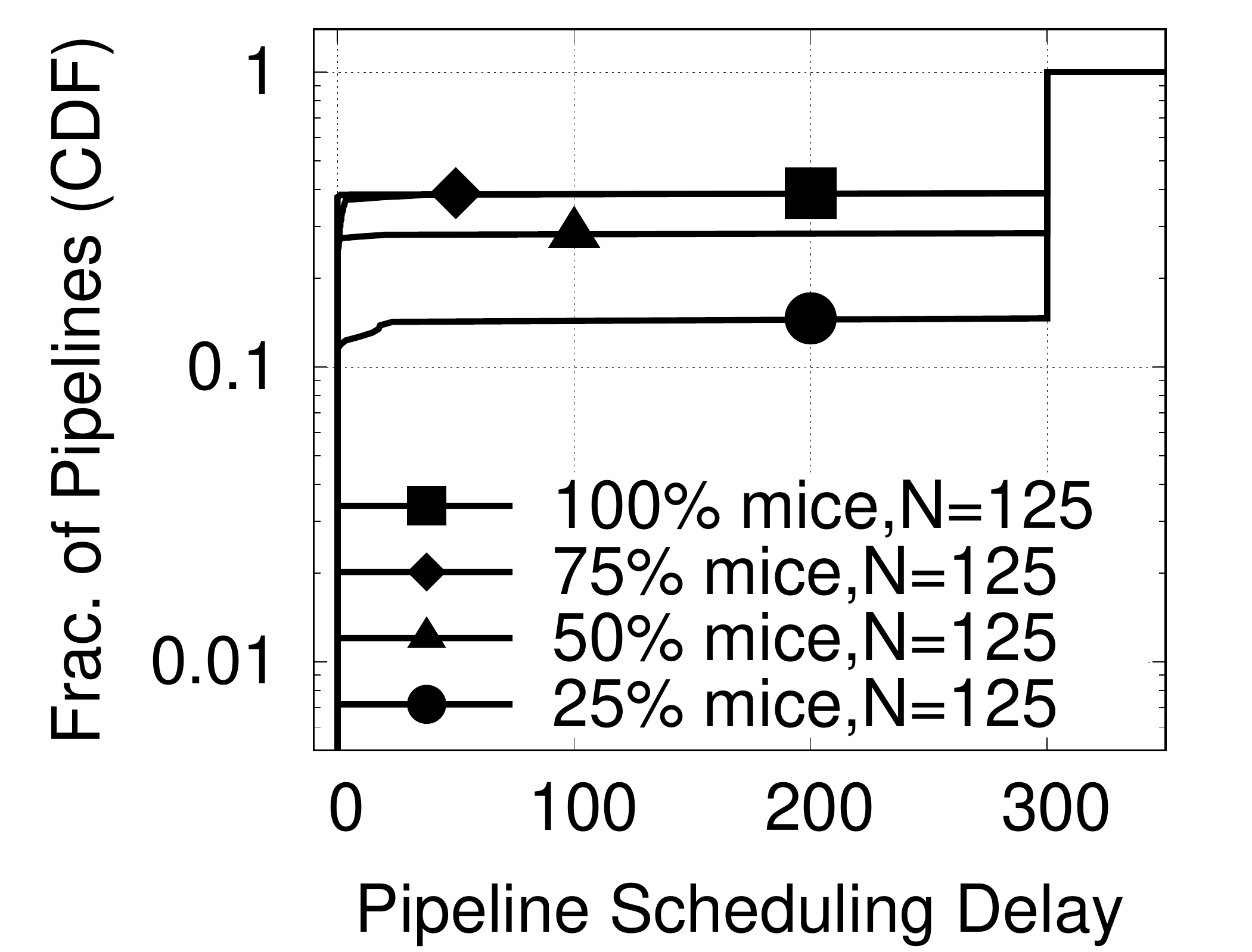}
		\caption{\footnotesize {\bf Scheduling delay.}}
		\label{fig:evaluation:microbenchmark:mice-percentage:wait-time}
	\end{subfigure}
	% \vspace{-0.3cm}
	\caption{\footnotesize {\bf DPF with varied workload mix, single block.} (b) DPF N=125. }
	\label{fig:evaluation:microbenchmark:mice-percentage}
	% \vspace{-0.3cm}
\end{figure}

\F\ref{fig:evaluation:microbenchmark:mice-percentage} compares the three scheduling policies under a variable percentage of mice and elephants.
At either extreme, all pipelines are identical so DPF and FCFS allocate
the same number of pipelines. In this case, the scheduling delay of FCFS is slightly better,
since it always immediately schedules these pipelines. However, when there is a mix of pipelines,
DPF always allocates more pipelines.
RR performance is mixed: for some workloads it is able to allocate slightly more
pipelines than FCFS, since it assigns a higher percentage of budget to mice;
for others it underperforms FCFS, since it wastes budget
on pipelines that are never scheduled.

\subsubsection{DPF Behavior on Multiple Blocks}
\label{sec:evaluation:microbenchmark:multiple-blocks}

\begin{figure}[t]
	\centering
	\begin{subfigure}{0.5\linewidth}
		\centering
		\includegraphics[width=\linewidth]{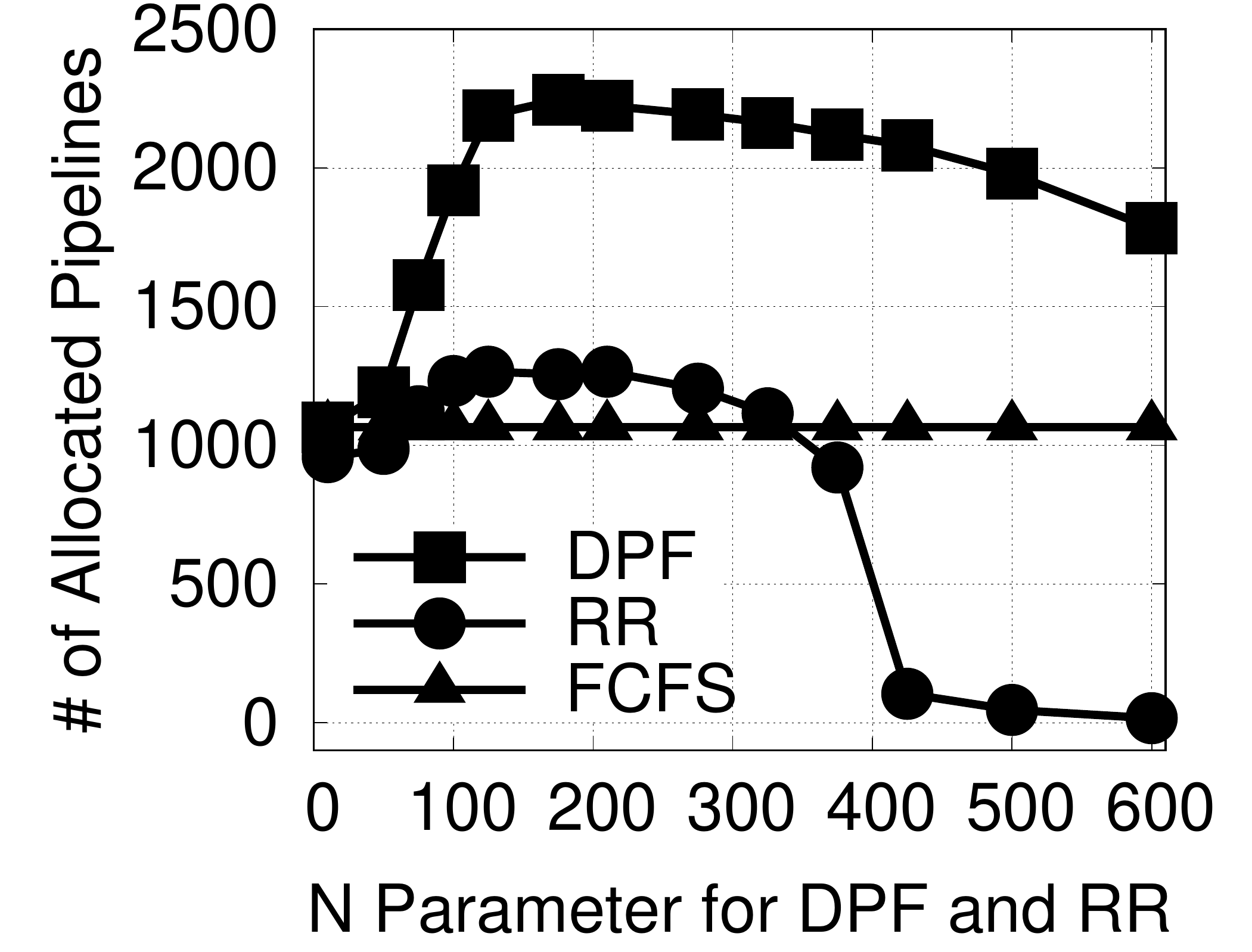}
		\caption{\footnotesize {\bf Number of pipelines allocated.}}
		\label{fig:evaluation:microbenchmark:multiple-blocks:completed}
	\end{subfigure}%
    ~
	\begin{subfigure}{0.5\linewidth}
		\centering
		\includegraphics[width=\linewidth]{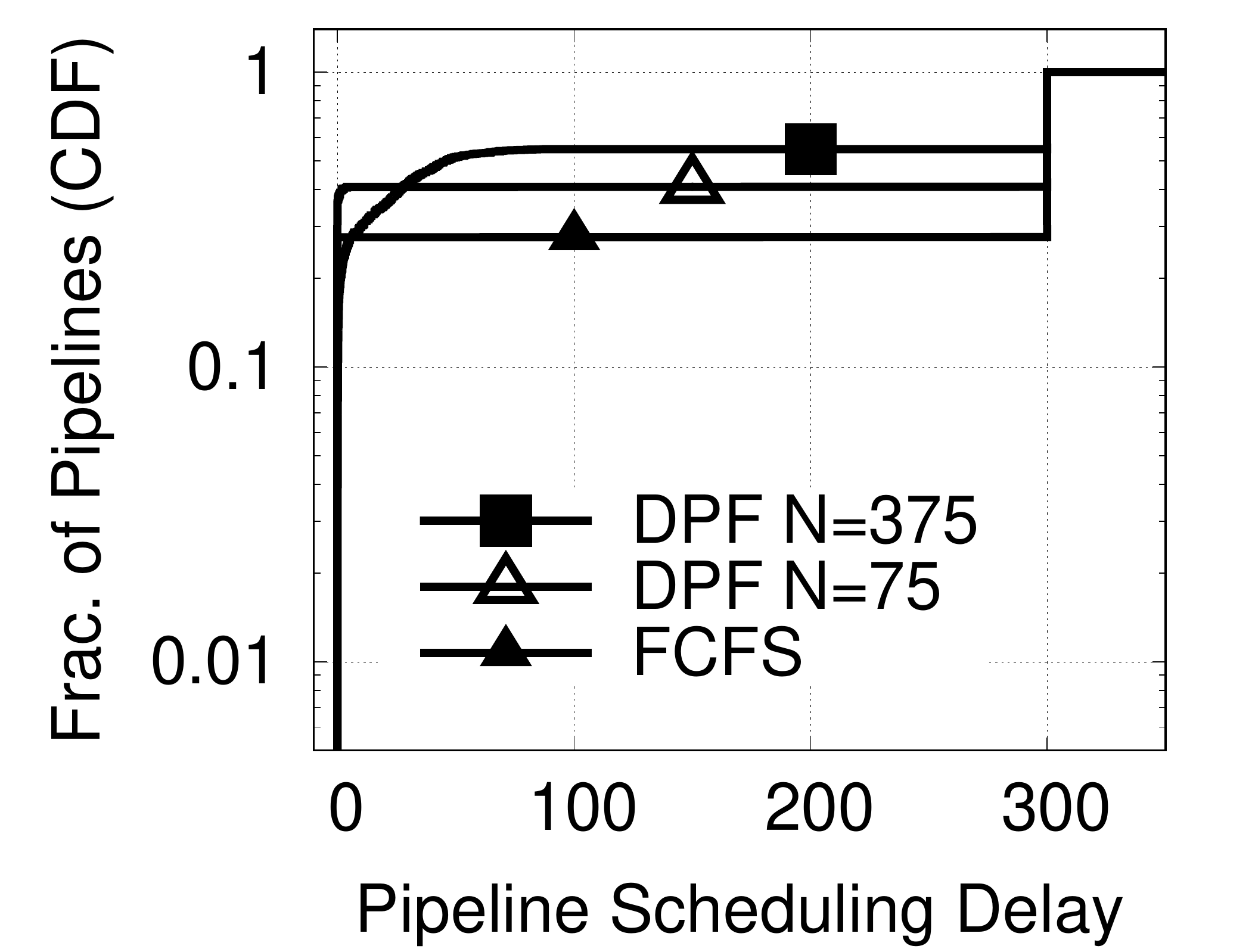}
		\caption{\footnotesize {\bf Scheduling delay.}}
		\label{fig:evaluation:microbenchmark:multiple-blocks:wait-time}
	\end{subfigure}
    % \vspace{-0.3cm}
	\caption{\footnotesize {\bf DPF behavior on multiple blocks.} }
	\label{fig:evaluation:microbenchmark:multiple-blocks}
	% \vspace{-0.3cm}
\end{figure}

\F\ref{fig:evaluation:microbenchmark:multiple-blocks} shows the multi-block experiment results are similar to the single-block experiment.
The main difference is that DPF performance with very large $N$ drops,
because some blocks do not see enough requests to unlock
all their budget.
For RR, proportional allocation helps cross-blocks pipelines to
be granted (small $N$), yielding a small improvement over FCFS and $N=1$ DPF. When
$N>400$, the multiple blocks create more DP budget spread over
ungrantable pipelines, and there is no high allocation peak:
RR grants collapse while DPF shows a $2\times$ increase over FCFS.

\subsubsection{DPF-N vs. DPF-T}
\label{sec:evaluation:microbenchmark:dpfn-dpft}

\begin{figure}[t]
	\centering
	\begin{subfigure}{0.5\linewidth}
		\centering
		\includegraphics[width=\linewidth]{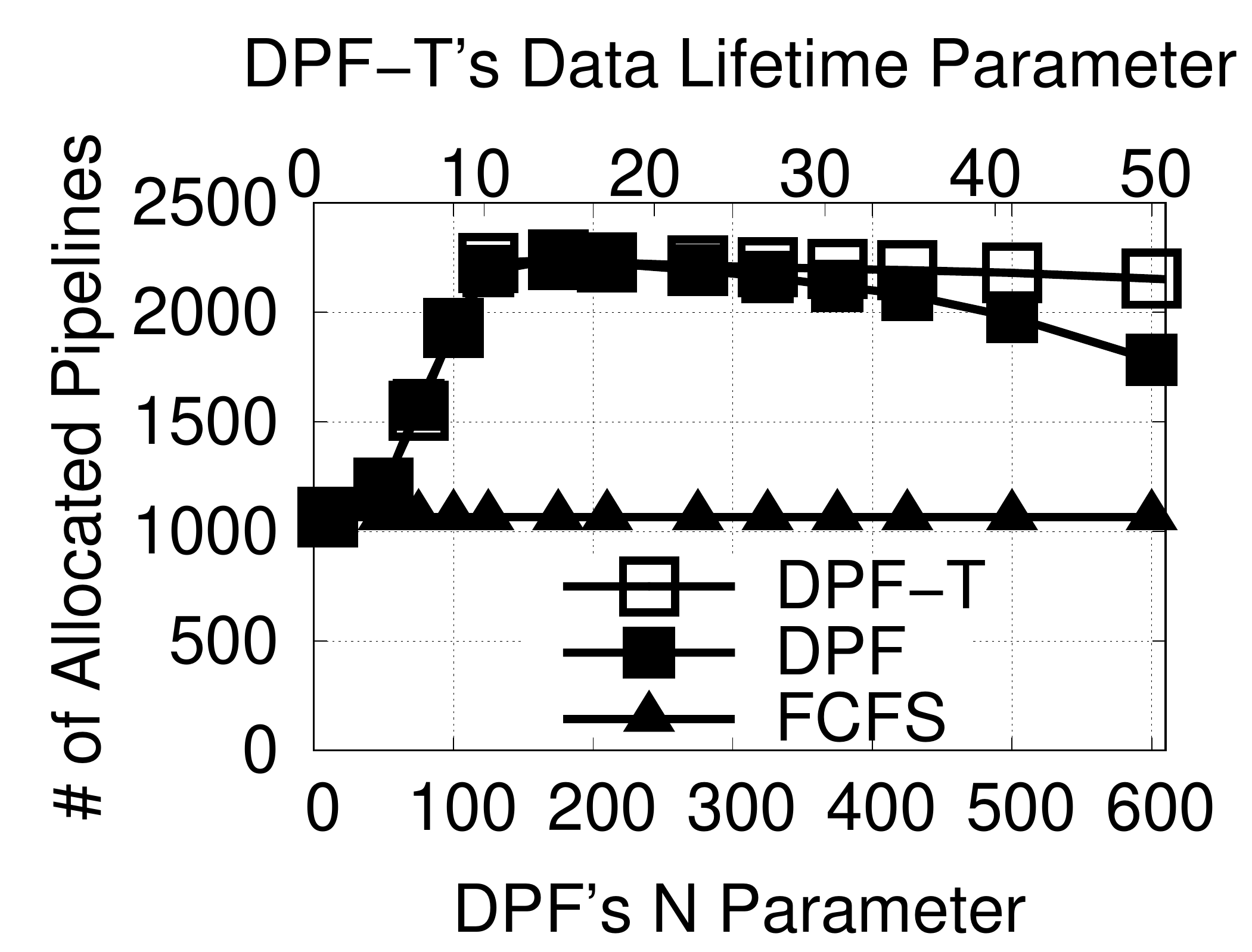}
		\caption{\footnotesize {\bf Number of pipelines allocated.}}
		\label{fig:evaluation:microbenchmark:dpfn-dpft:completed}
	\end{subfigure}%
    ~
	\begin{subfigure}{0.5\linewidth}
		\centering
		\includegraphics[width=\linewidth]{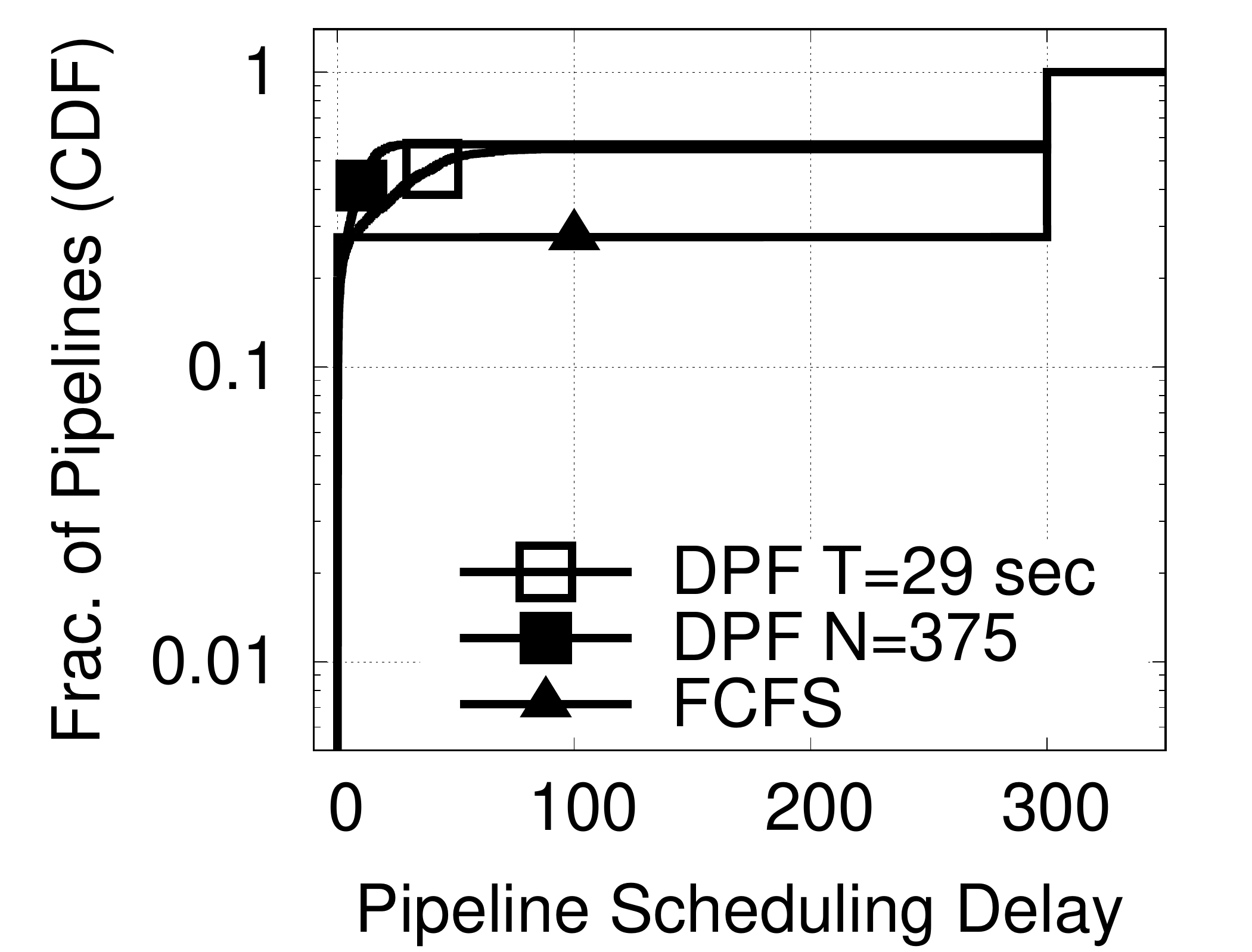}
		\caption{\footnotesize {\bf Scheduling delay.}}
		\label{fig:evaluation:microbenchmark:dpfn-dpft:wait-time}
	\end{subfigure}
    % \vspace{-0.3cm}
	\caption{\footnotesize {\bf DPF and DPF-T behavior on multiple blocks.} }
	\label{fig:evaluation:microbenchmark:dpfn-dpft}
	% \vspace{-0.3cm}
\end{figure}

\F\ref{fig:evaluation:microbenchmark:dpfn-dpft} compares DPF-N, the version used throughout the paper, which unlocks budget based on arriving pipelines,
and DPF-T, which releases budget based on time
(\S\ref{sec:DPF-extensions}).
%Overall, both variants perform similarly.
We observe that on low $N$ and $T$ %(roughly below $400$ and $4000$, respectively)
they behave almost identically.  This is because DPF-T will release budget
on less queried blocks, sometimes allowing multi-block pipelines
to be prematurely granted. On large $N$ and $T$ values DPF-T does much better, as all budget is eventually unlocked and some waiting pipelines can
be granted, even when no new request is made to the blocks they demanded.
\F\ref{fig:evaluation:microbenchmark:dpfn-dpft:wait-time} shows the delay
for equivalent $N$ and $T$ values.

\subsubsection{Traditional DP vs. R\'enyi DP}
\label{sec:evaluation:microbenchmark:dpfn-rdp}

\begin{figure}[t]
	\centering
	\begin{subfigure}{0.5\linewidth}
		\centering
		\includegraphics[width=\linewidth]{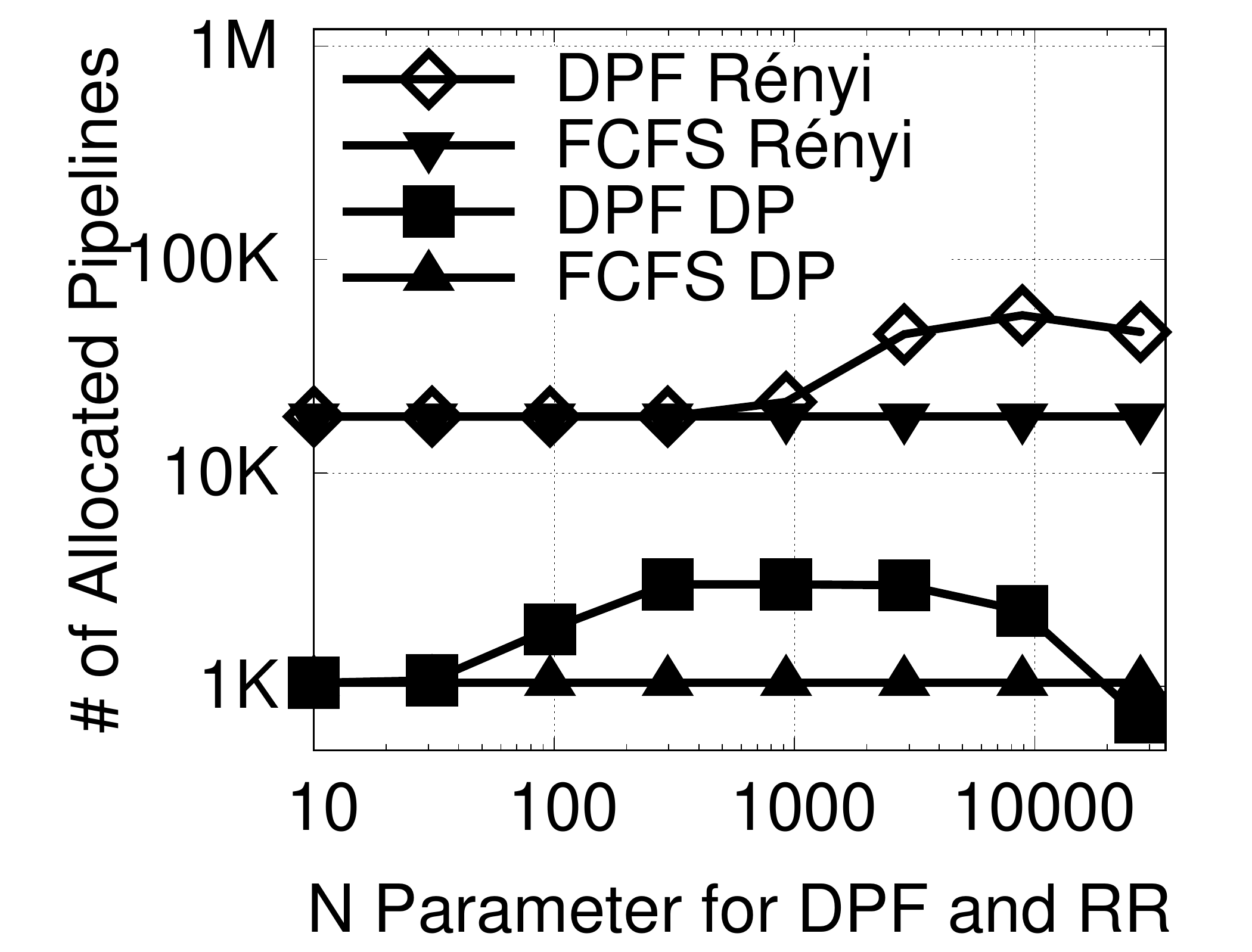}
		\caption{\footnotesize {\bf Number of pipelines allocated.}}
		\label{fig:evaluation:microbenchmark:dpfn-rdp:completed}
	\end{subfigure}%
    ~
	\begin{subfigure}{0.5\linewidth}
		\centering
		\includegraphics[width=\linewidth]{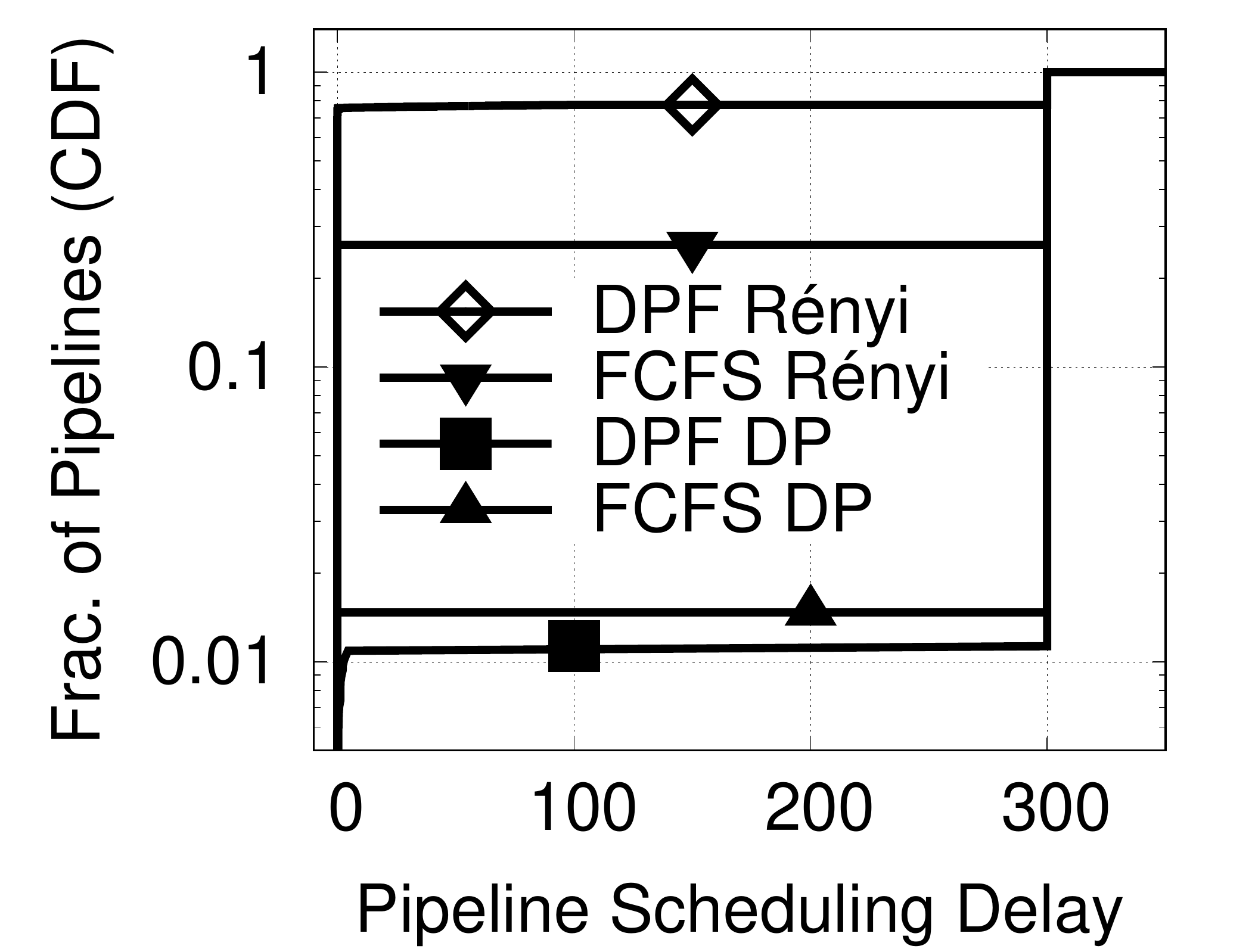}
		\caption{\footnotesize {\bf Scheduling delay.}}
		\label{fig:evaluation:microbenchmark:dpfn-rdp:wait-time}
	\end{subfigure}
    % \vspace{-0.3cm}
	\caption{\footnotesize {\bf Traditional vs. R\'enyi DP, multiple blocks.} (a) Note log axes. Workload is highly amplified to saturate R\'enyi. (b) DPF N=8875.}
	\label{fig:evaluation:microbenchmark:dpfn-rdp}
	% \vspace{-0.3cm}
\end{figure}

%\xxx {tao commment: Because R\'enyi DP decomposition inherently is able to grant more pipelines than Traditional DP. In order to evaluate R\'enyi DP, we increase our task arrival rate from 12.8/sec to 234.4/sec to achieve sensible contention ratio.}
\F\ref{fig:evaluation:microbenchmark:dpfn-rdp} compares the DPF algorithm with traditional DP (the default DP composition used in the paper), against R\'enyi DP, including FCFS with both compositions as a baseline. The results show that switching to R\'enyi DP results in much better pipeline allocation: R\'enyi DP allows DPF to allocate more than $17\times$ more pipelines than traditional DP, at their respective peaks. Even FCFS using R\'enyi DP significantly outperforms DPF with traditional DP. Note that DPF provides a benefit at different values of $N$ for the two compositions, since R\'enyi DP requires a higher $N$ value to reach the point where DPF starts prioritizing small pipelines. We conclude that switching to R\'enyi DP leads to much more efficient privacy budget utilization, regardless of the scheduling policy.

\subsection{Macrobenchmark (Q1, Q4, Q5)}
\label{sec:evaluation:macrobenchmarks}
% Macrobenchmarks

% \subsubsection{Methodology}
% \label{sec:evaluation:macrobenchmark:methodology}

We use a subset of Amazon Reviews~\cite{amazon-reviews} in which users and products have 5 reviews or more, and keep product categories with 1M+ reviews.
Each event has a review, timestamp, user, 1-5 rating, and product in one of eleven categories (\eg, books, clothing).
We keep the reviews from 01-01-13 to 01-01-18, in total 43.4M reviews from 3.7M users.
\T\ref{tab:evaluation:pipelines-hyperparameters} specifies our workload: eight ML pipelines and six summary statistics pipelines.
For ML, we define four types of models for each of two tasks: product classification (assigns a review to its product category) and sentiment analysis (predicts whether a review is positive).
Reviews are embedded using a Wikipedia-trained GloVe~\cite{glove} except for the fine-tuned BERT model.
% , and the resulting six ML pipelines trained using Opacus~\cite{opacus}.
% Our goal is not to compete with the non DP state-of-the art, but to use realistic ML pipelines coherent with today's DP models performance.
%Good architectures (\eg, hidden layers' size) for DP ML are usually different from those of non-private models, so we run an architecture search to find suitable parameters, on a separate $1\%$ subset of the data.
%We did not account for this search's privacy loss. To the best of our knowledge DP hyper-parameter search for deep learning is still an open problem, and we leave its integration to \sysname for future work.
We run non-DP architecture searches for non-DP and DP pipelines on a 1\% hold-out.

We set an accuracy goal for each pipeline: for summary statistics, $5\%$ relative error; for ML models, an accuracy reachable by User DP (\eg, $60\%$ for LSTM/Product).
Each pipeline demands the minimum amount of private blocks necessary to reach its goal with $\epsilon \in \{0.01, 0.05, 0.1\}$ (``mice,'' \ie statistics) and $\epsilon \in \{0.5, 1, 5\}$ (``elephants,'' \ie ML models). The demands range from 1 to 500 \privacyresources.
Models use $\delta = 10^{-9}$.  The workload draws 75\% mice and 25\% elephants.
Each \privacyresource holds one day of data and has $\epsilon^G=10$.  The experiments replay 50 days of the dataset.
Pipelines register with \sysname at exponentially distributed time intervals, at a rate of $300$ pipelines per day.

\begin{table}[t]
    \footnotesize
    \centering
    \begin{tabular}{|l|c|c|c|}
        \hline
        {\bf Task}        & {\bf Model}                                               & {\bf Architecture$^*$}  & {\bf Training}                      \\
        \hline

                          & Linear                                                    & 75; 100; []             & Optimizer: Adam                     \\
                          &                                                           & 1,111 parameters        & (for DP, non-DP).                   \\
        \cline{2-3}
                          & FF$^{\dagger\dagger}$                                     & 60; 100; [185, 150]     &                                     \\
        {\bf Product}     &                                                           & 48,246 parameters       & DP algo: DP-SGD                     \\
        \cline{2-3}
        {\bf classifi-}   & LSTM                                                      & 30; 100; [40]$^\dagger$ & (Opacus).                           \\
        {\bf cation}      &                                                           & 23,171 parameters       &                                     \\

        \cline{2-3}       & BERT                                                      & L 4; H 256; A 4$^\S$    & Epochs: non-DP,                     \\
                          &                                                           & 858,379 parameters      & event/event-time                    \\

        \cline{1-3}

                          & Linear                                                    & 50; 100; []             & DP: 15; user DP: 60.                \\
                          &                                                           & 101 parameters          &                                     \\

        \cline{2-3}
                          & FF$^{\dagger\dagger}$                                     & 30; 100; [150, 110]     & Batch: non-DP: 256;                 \\
        {\bf Sentiment}   &                                                           & 31,871 parameters       & DP: $\sqrt{N}$ for $N$ train        \\

        \cline{2-3}
        {\bf  analysis}   & LSTM                                                      & 50; 100; [40]$^\dagger$ & samples (per~\cite{abadi2016deep}). \\
                          &                                                           & 22,761 parameters       &                                     \\

        \cline{2-3}       & BERT                                                      & L 4; H 256; A 4$^\S$    & DP clipping: flat,                  \\
                          &                                                           & 855,809  parameters     & max norm = $1$.                     \\

        \hline
                          & \multicolumn{2}{c|}{ Reviews: total \#, per category \# } & Laplace. Bounded                                              \\
        \cline{2-3}
        {\bf  Statistics} & \multicolumn{2}{c|}{Tokens: total \#, avg, stdev }        & user contribution:                                            \\
        \cline{2-3}
                          & \multicolumn{2}{c|}{ Rating: avg  }                       & 20/day, 100 in total                                          \\
        \hline
    \end{tabular}
    % \vspace{-0.3cm}
    \caption{\footnotesize {\bf Macrobenchmark pipelines.}
        $*$: Architecture column: the first line, $x; y; z$, shows the input sequence length ($x$), embedding size ($y$), and the list of hidden layers' size ($z$).  The second line shows the number of trainable parameters.
        $^{\dagger\dagger}$: Fully-connected feed-forward neural network.
        $\dagger$: The LSTM is single directional and has no dropout.
        $\S$: We use a pretrained BERT model and fine-tune the last transformer layer with over 850K trainable parameters.
    }
    \label{tab:evaluation:pipelines-hyperparameters}
    % \vspace{-0.3cm}
\end{table}

\subsubsection{Accuracy of Individual Models with DP Semantic}
\label{sec:evaluation:macrobenchmark:individual-model-accuracy}

\begin{figure*}[t]
    \centering
    \begin{subfigure}{0.24\linewidth}
        \centering
        \includegraphics[width=\linewidth]{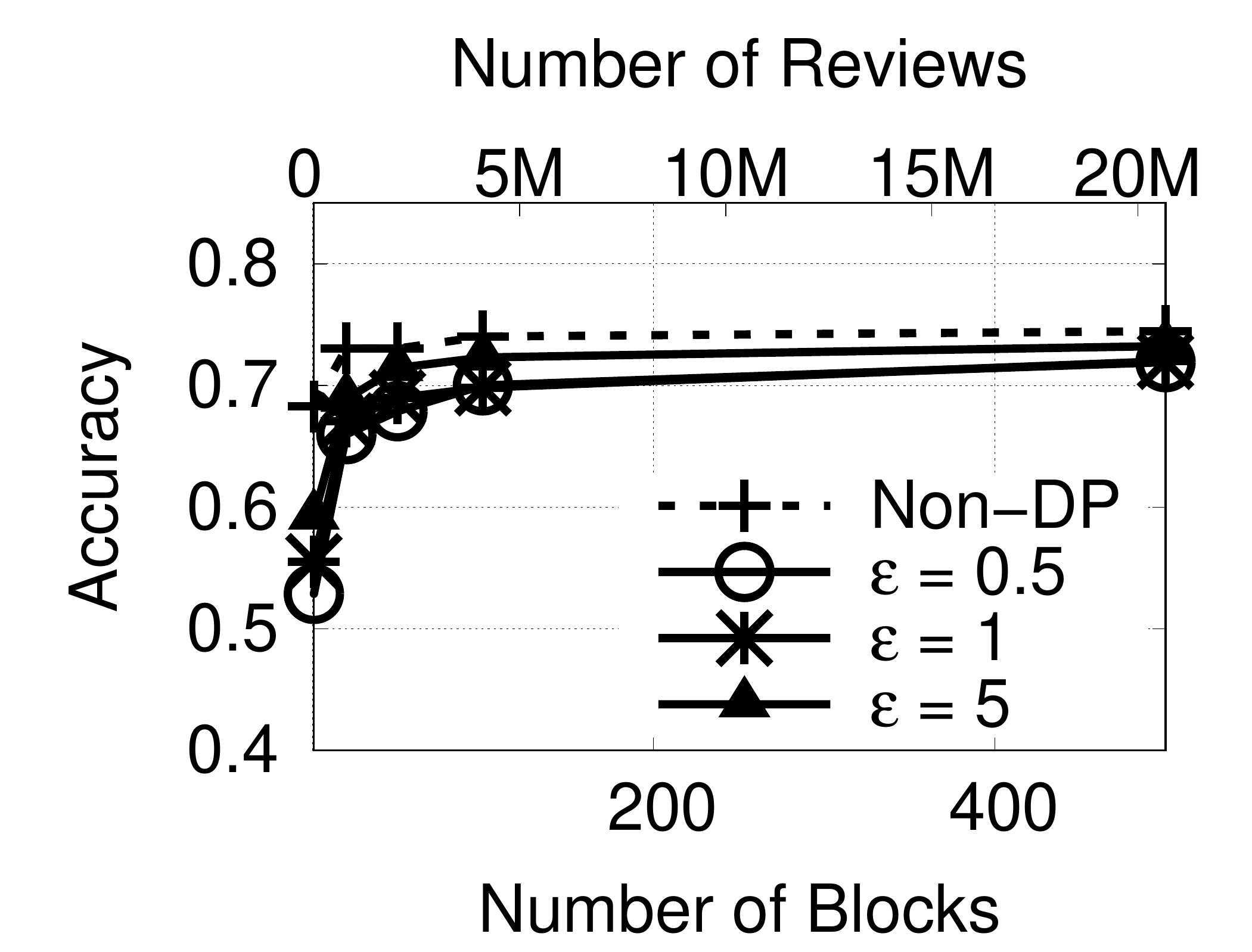}
        \caption{\footnotesize {\bf Product/LSTM: Event DP}}
        \label{fig:evaluation:macrobenchmark:individual-model-accuracy:event}
    \end{subfigure}%
    ~
    \begin{subfigure}{0.24\linewidth}
        \centering
        \includegraphics[width=\linewidth]{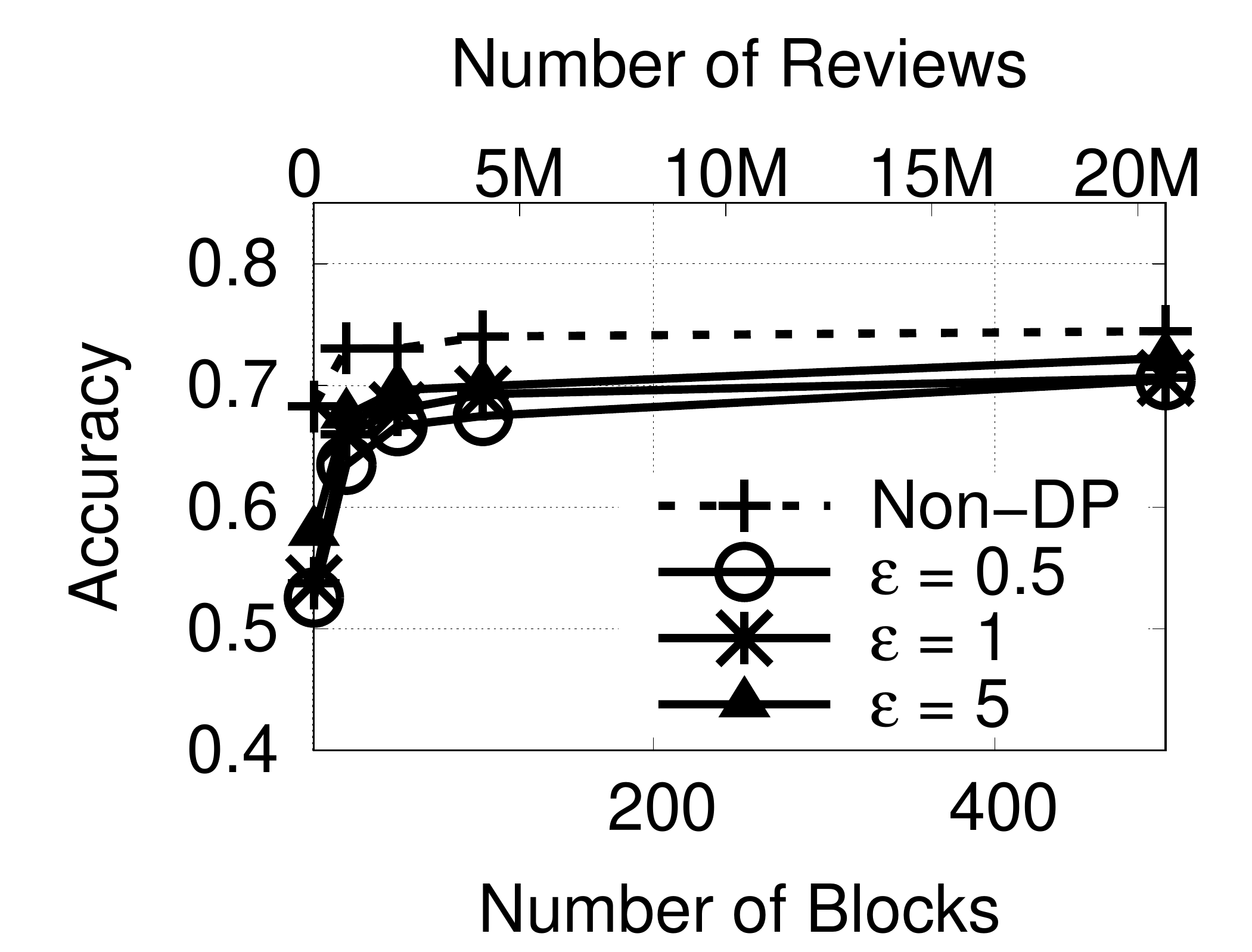}
        \caption{\footnotesize {\bf Product/LSTM: User-Time DP}}
        \label{fig:evaluation:macrobenchmark:individual-model-accuracy:user-time}
    \end{subfigure}%
    ~
    \begin{subfigure}{0.24\linewidth}
        \centering
        \includegraphics[width=\linewidth]{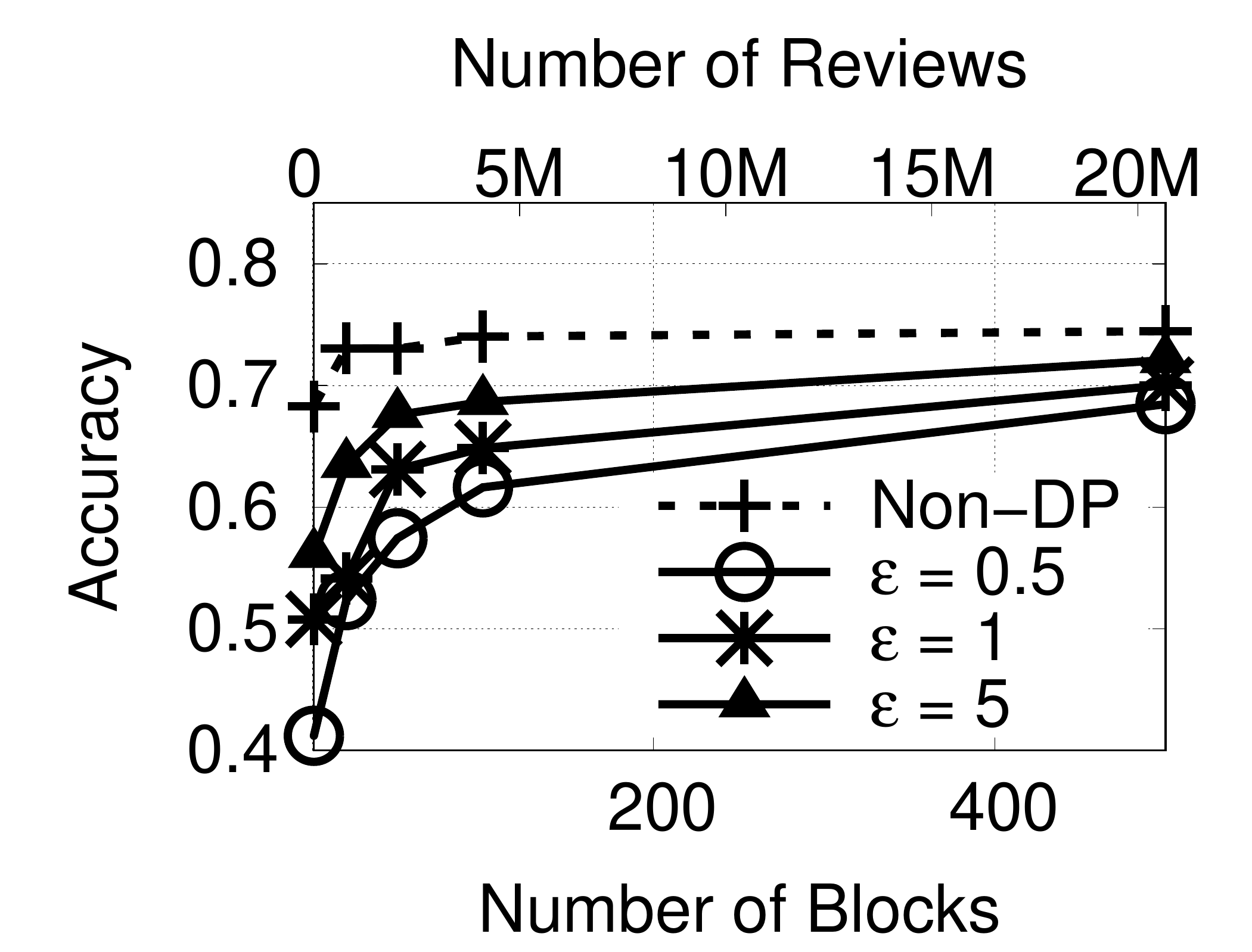}
        \caption{\footnotesize {\bf Product/LSTM: User DP}}
        \label{fig:evaluation:macrobenchmark:individual-model-accuracy:user}
    \end{subfigure}%
    ~
    \begin{subfigure}{0.24\linewidth}
        \centering
        \includegraphics[width=\linewidth]{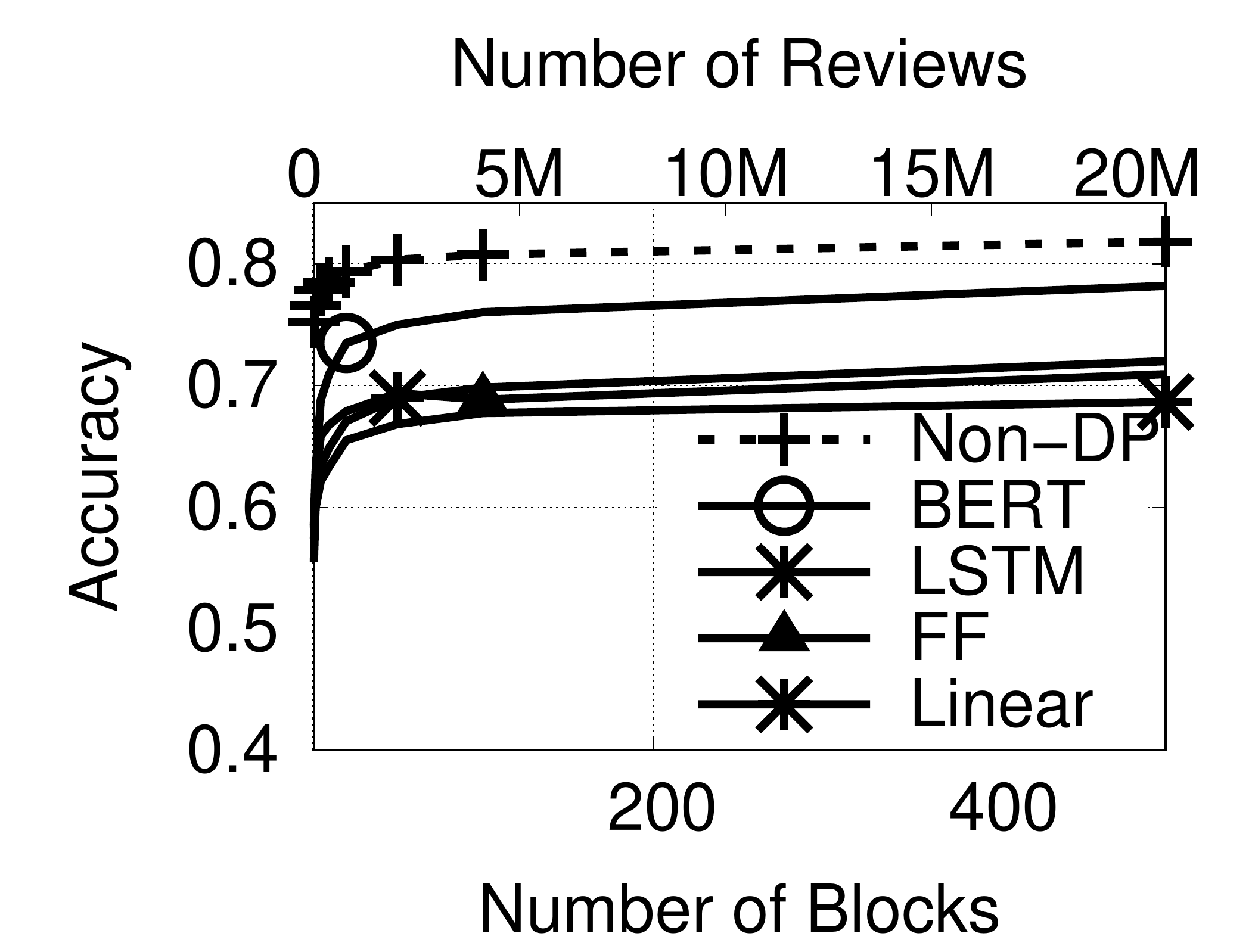}
        \caption{\footnotesize {\bf All Product: Event DP}}
        \label{fig:evaluation:macrobenchmark:individual-model-accuracy:all}
    \end{subfigure}%
    % \vspace{-0.3cm}
    \caption{\footnotesize {\bf Performance of macrobenchmark Product models.} (a)-(c) Accuracy of the product classification LSTM with various DP semantics. (d) Accuracy of all four product classification models with $\epsilon = 1$ and Event DP.  The dotted baseline is non-DP BERT, whose accuracy is highest.  The y axes start at 0.4, the accuracy of the naive classifier for this task (\ie the classifier that returns the most common class). }
    \label{fig:evaluation:macrobenchmark:individual-model-accuracy-and-dpf-with-dp-semantic}
    % \vspace{-0.3cm}
\end{figure*}

\F\ref{fig:evaluation:macrobenchmark:individual-model-accuracy-and-dpf-with-dp-semantic} shows the LSTM's product classification accuracy with increasing data, with no DP and for $\epsilon \in \{0.5, 1, 5\}$ for each DP semantic.
% For reference a non DP state of the art based on BERT reached \xxx{XX} for this task.
Other pipelines show similar trends.
% We run the models from \T\ref{tab:evaluation:pipelines-hyperparameters} for each DP semantic, with the three privacy budgets we consider, and for increasing numbers of blocks.
% \F\ref{fig:evaluation:macrobenchmark:individual-model-accuracy-and-dpf-with-dp-semantic} shows the resulting accuracy for the LSTM on product classification, as well as the accuracy of the same model without DP.
% For reference a state of the art, non DP, BERT-based model reached \xxx{XX} for this task.
% Other tasks and models show similar trends.
We make two observations.
First, DP semantic has a large impact on accuracy for a given DP budget and data size.
As expected, Event DP, the weakest semantic, provides the highest accuracy: 73\%, 72\%, and 72\%, for DP budgets of 5, 1, and 0.5 respectively, on 20M datapoints.
The larger budgets get close to the non-DP baseline, at 77\%.
User DP requires larger budgets: the largest reaches 72\% while the smallest yields 68\%.
User-time DP's behavior is closer to, but lower than, Event DP, with accuracies of 72\%, 71\%, and 70\%.
%The differences are even larger with fewer blocks.

Second, increasing data or budget improves accuracy: the DP models approach the baseline slowly, but can reach it given enough data and DP budget.
The relationship between accuracy, data, and budget however is non linear.
For event DP with 20M datapoints, increasing the budget from 0.5 to 5 increases accuracy from 72\% to 73\%, while at 2.5M datapoints the same increase goes from 68\% to 71\%.
This relationship also depends on DP semantics, with low budget models being disproportionately impacted by smaller amounts of data and budget.
For user DP for instance, the accuracies go from 68\% to 72\% for 20M datapoints, and from 57\% to 68\% for 2.5M.

\subsubsection{DPF Behavior with Macrobenchmark}
\label{sec:evaluation:macrobenchmark:dpf}

\begin{figure}[t]
    \centering
    \begin{subfigure}{0.45\linewidth}
        \centering
        \includegraphics[width=\linewidth]{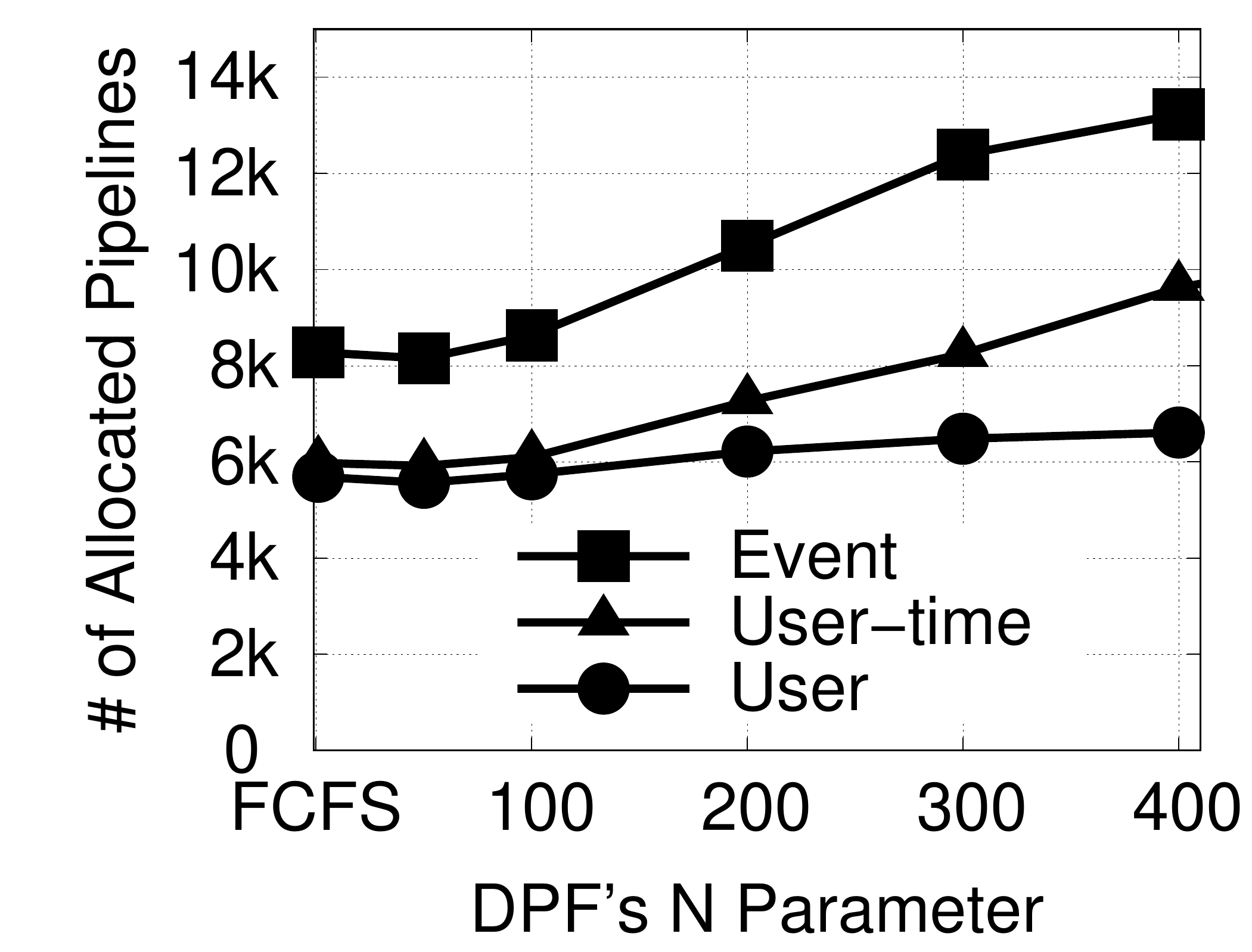}
        \caption{\footnotesize {\bf Allocated for 3 DP semantics.}}
        \label{fig:evaluation:macrobenchmark:dpf-on-event-dp-workload:completed}
    \end{subfigure}%
    ~
    \begin{subfigure}{0.45\linewidth}
        \centering
        \includegraphics[width=\linewidth]{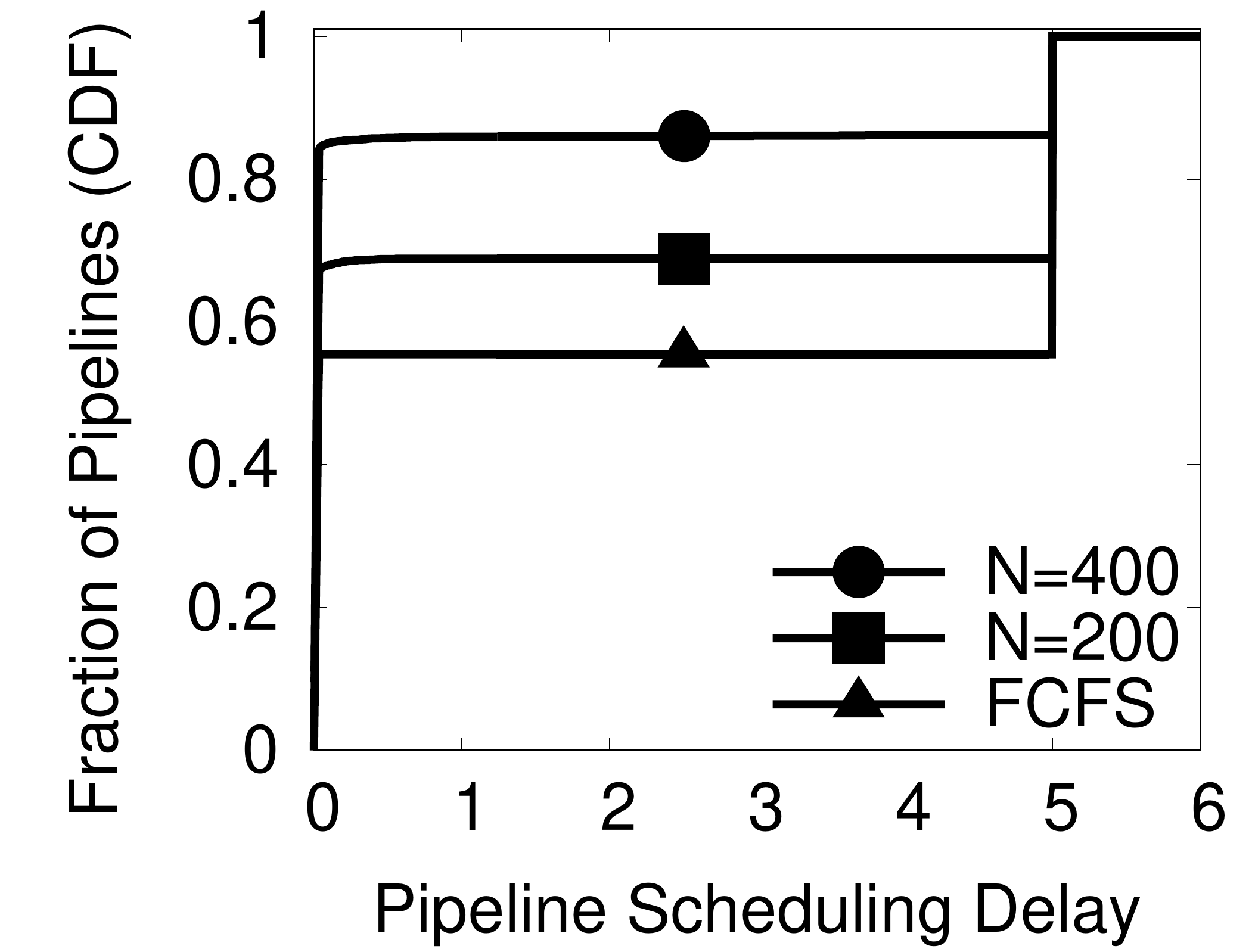}
        \caption{\footnotesize {\bf Event DP Scheduling delay.}}
        \label{fig:evaluation:macrobenchmark:dpf-on-event-dp-workload:wait-time}
    \end{subfigure}
    % \vspace{-0.3cm}
    \caption{\footnotesize {\bf DPF on macrobenchmark.} $\epsilon^G = 10$, $\delta^G = 10^{-7}$.}
    \label{fig:evaluation:macrobenchmark:dpf-on-event-dp-workload}
    % \vspace{-0.3cm}
\end{figure}

\F\ref{fig:evaluation:macrobenchmark:dpf-on-event-dp-workload} shows the performance
of {\em DPF with R\'enyi DP} under our end-to-end workload.
\F\ref{fig:evaluation:macrobenchmark:dpf-on-event-dp-workload:completed}
shows the number of granted pipelines under the different DP semantics.
We make two observations. First, as expected stronger DP semantics require more \privacyresource and DP budget, so fewer pipelines are granted in total: event, user-time, and user DP can grant 13.8k, 10.4k, and 6.7k pipelines, respectively.
Second, as before, increasing $N$ helps DPF prioritize later mice over current elephants, increasing the total number of pipelines granted by 67\% (event), 75\% (user-time) and 17\% (user) compared to low $N$ and FCFS.
\F\ref{fig:evaluation:macrobenchmark:dpf-on-event-dp-workload:wait-time} shows the scheduling
delay of user DP for $N$ values of 200 and 400. We see that increase in pipelines granted comes at a reasonable cost in delay.

\begin{figure}[t]
    \centering
    \includegraphics[width=0.45\linewidth]{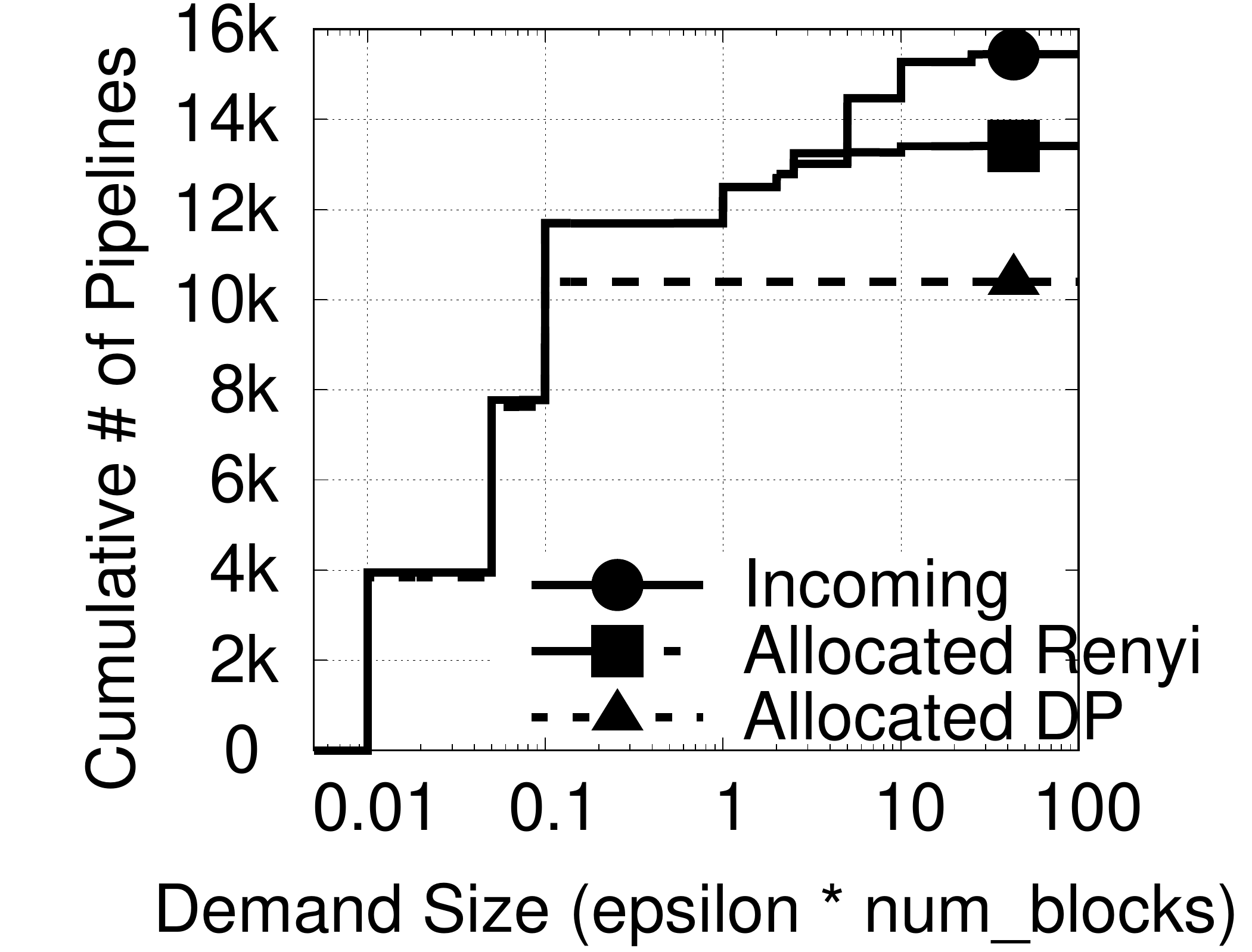}
    \caption{\footnotesize {\bf Distribution of allocated pipeline sizes.} Event DP, DPF N=400.}
    \label{fig:evaluation:macrobenchmark:dpf-on-event-dp-workload:job-sizes}
\end{figure}

\F\ref{fig:evaluation:macrobenchmark:dpf-on-event-dp-workload:job-sizes} shows the cumulative number of incoming pipelines below a given DP size in our workload, as well as those granted under DP and R\'enyi DP.
The DP size of a pipeline is the sum of $\epsilon$-DP budget over all requested blocks,
% ($\epsilon$ times the number of requested blocks, since pipelines request the same $\epsilon$ for each block)
and is a measure of the total amount of budget requested by the pipeline.
The R\'enyi DP allocates about $29\%$ more pipelines than DP.
This difference is {\em quantitatively} smaller than we obtained in our microbenchmark.
However, there is a big {\em qualitative} difference that this graph also illustrates: while DP only grants mice (cumulative budget below $0.1$), R\'enyi DP is able to also run some elephants: it grants all pipelines with a cumulative budget below $2$ and some pipelines up to $10$.
This confirms that R\'enyi DP is very valuable in realistic workload settings.

\subsection{Kubernetes Tool Reuse (Q6)}
\label{sec:evaluation:grafana}
% sec:evaluation:grafana

\begin{figure}[t]
    % Source:
    % https://docs.google.com/drawings/d/1fbSSkQbEeHiEEvOsob1i-hi0wzwaBPEpFkDOUz1PufQ/edit?usp=sharing
    \centering
    \includegraphics[width=0.8\linewidth]{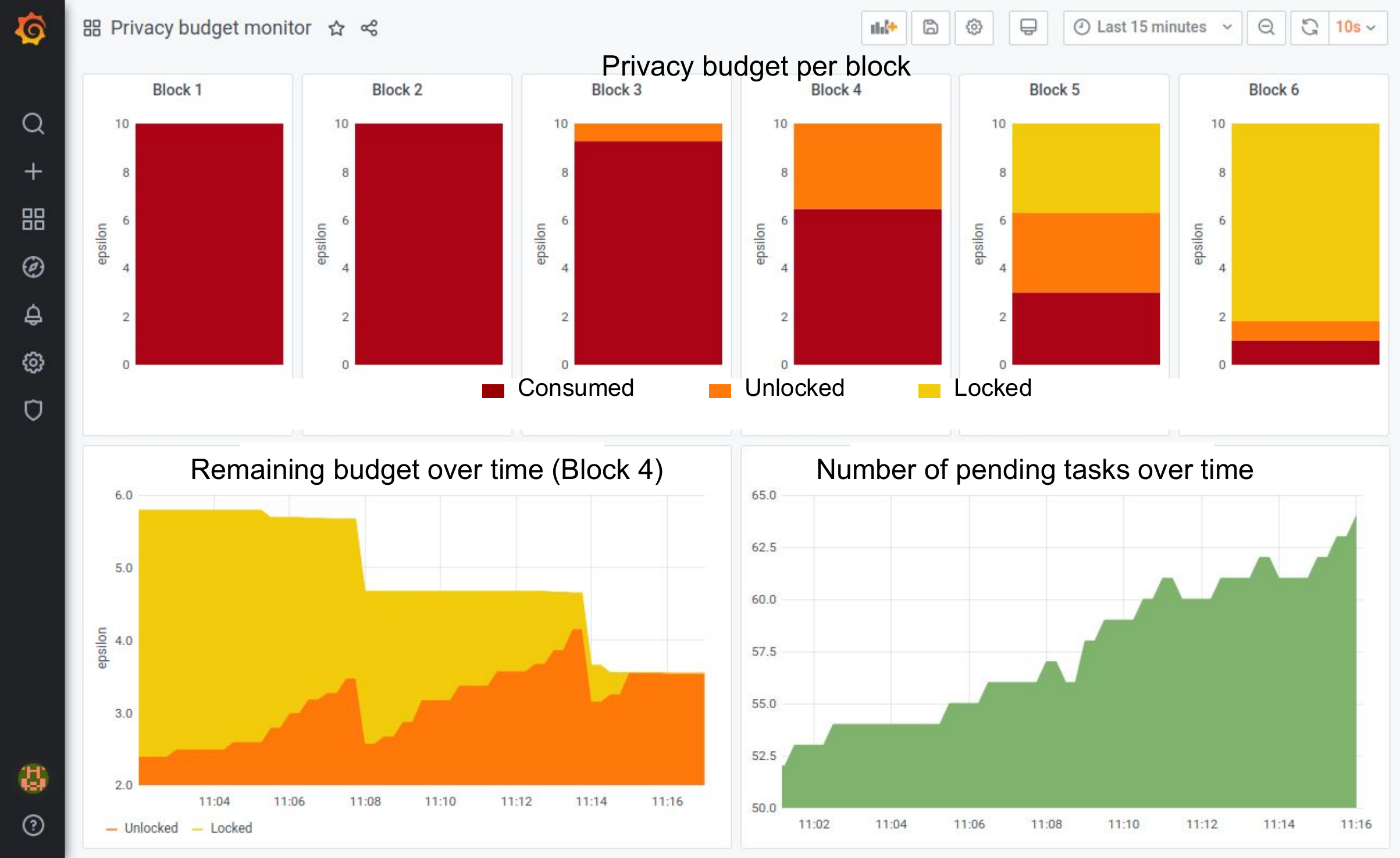}
    \caption{\footnotesize {\bf Screenshot of Grafana-Kubernetes Privacy Dashboard.} %Top: privacy budget per block (instant). Bottom left: remaining budget for a particular block over time. Bottom right: number of pending tasks over time.
    }
    \label{fig:evaluation:grafana}
    % \vspace{-0.3cm}
\end{figure}

To illustrate the value of integrating with Kubernetes, we extended the Grafana-Kubernetes resource utilization monitor to track privacy usage (screenshot depicted in \F\ref{fig:evaluation:grafana}) with only 150 lines of code.
We envision a suite of tools for monitoring privacy, on par with compute resources. %, as it should be for sensitive data workloads.

\vspace{-0.3cm}
\section{Related Work}                      % 0.5 pages (includes conclusion)
\vspace{-0.3cm}
\label{sec:related-work}
% Section: Related Work
To our knowledge, there is no work on scheduling DP, but our work builds upon a vast literature in each of these two topics.

\iffalse
    Importantly, DPF's design and analysis illustrate how some standard definitions must be adjusted for the privacy resource.
    As one example, the notion of a ``fair job'' (Definition~\ref{def:fair-demand}) requires a demand below the fair share {\em and} that arrives among the first $N$.  This is not just a quirk of DPF: due to the non-replenishable nature of privacy access must be limited, in the number of pipelines or in time.
    As another example, envy-freeness requires an alteration to support the dynamic nature of ML workloads (Theorem~\ref{thm:dynamic-envy-freeness}).
    It is precisely these adaptations of scheduling definitions and algorithms to characteristics we identify for the privacy resource that we believe are the biggest take-aways from our work.
\fi

\heading{Scheduling.}
Decades of work exist on scheduling compute resources, such as CPU, network, memory and storage~\cite{drf,carbyne,hug,pisces,faircloud,fairride,quincy,memshare,demers1989analysis,goyal1996start,parekh1993generalized,caprita2005group,axboe2004linux,liu2003opportunistic,graphene,tetris,parkes2015beyond}.
Typically, schedulers aim for max-min fairness, achieving both high system-wide utilization and high utility for each tenant.
However, compute resources are replenishable, while privacy budget is not: the particular budget consumed by task $i$ will never be available for another task in the future, whereas a CPU core granted to task $i$ can be granted to another task after $i$ finishes.
%
%Consider for example a policy like DRF~\cite{drf}, which applies max-min fairness across compute resources.
%If we were to apply DRF on \privacyresources, at every point in time, DRF would try to consume
%the entire available budget to satisfy the demands of all tasks present in the system. This
%would lead it to violate the sharing incentive guarantee of max-min fairness, because a new task arriving
%to the system
%that asks for its fair share of privacy budget might not be able to get it, since DRF had already allocated
%the budget to previous tasks, not taking into account that privacy budget is non-replenishable.
%

The two closets to our work are Dynamic DRF~\cite{dynamicdrf} and \textsc{SequentialMinMax}~\cite{parkes2015beyond}. Dynamic DRF provides
fairness guarantees for agents arriving over time, consuming a fixed set of non-replenishable resources.
%While Dynamic DRF is designed for non-replenishable resources, the main distinction is in the utility function.
Unfortunately, the all-or-nothing utility function of \privacyresources violates Dynamic DRF's Pareto efficiency,
since Dynamic DRF would waste budget on tasks that may never get fully allocated.
\textsc{SequentialMinMax} is an algorithm focused on ``indivisible'' jobs, or jobs that have an all-or-nothing utility, and thus,
similar to DPF, it only assigns resources in a sequential fashion and all-or-nothing fashion ordered by the dominant resource share.
However, unlike DPF, \textsc{SequentialMinMax} has static jobs, it assumes all resources are replenishable, and it does not consider dynamically arriving resources (\privacyresources in our case). Therefore, it provides no mechanism for gradually releasing or unlocking these resources, and would not provide a sharing incentive in our setting.
%In addition, unlike Dynamic DRF, DPF has no notion of agents, has a growing set of resources, and pipelines can have zero demand for some \privacyresources.

Even under a static setting, standard DRF~\cite{drf} violates Pareto efficiency with all-or-nothing utility.
\textsc{Carbyne} schedules analytics jobs, which depend on the parallel execution of
multiple tasks and have an all-or-nothing utility~\cite{carbyne}. However, it
assumes replenishable resources.

\heading{Differential privacy.}
There is vast literature on {\em DP algorithms}, which includes versions of most popular ML algorithms (\eg, SGD~\cite{abadi2016deep,yu2019differentially}, Federated Learning~\cite{McMahan2018LearningDP}) and statistics (\eg, contingency tables~\cite{barak2007privacy}, histograms~\cite{xu2012differentially}).  There are also open source implementations available~\cite{ms-harvard-opendp,idm-diffprivlib,google-dp,tensorflow-privacy,opacus}.
This literature is at a lower level than \sysname, and we leverage it extensively in our pipelines.
% {Mathias] we can remove that, leaving for now in case it's useful
Some algorithms focus on workloads~\cite{hardt2010multiplicative}, including on a data stream~\cite{cummings2018differential}, but they remain very limited, supporting only linear queries.
% and require space and running time in the entire possible data space (exponential in data dimension).

A few {\em DP systems} exist, providing DP SQL-like~\cite{Mcsherry:pinq,proserpio2014calibrating} or MapReduce interfaces~\cite{roy2010airavat} to static datasets, as well as support for summary statistics~\cite{mohan2012gupt}.
None focuses on workloads of ML pipelines or supports continuous streams of data.
The only such system is Sage~\cite{sage}, which introduces block composition for event DP, and proposes a procedure to iteratively increase a model's privacy budget until reaching a good accuracy.
However, Sage does not support user and user-time DP, for which we extend block composition, and leaves the question of scheduling unexplored.
% (both are mentioned as future work in their paper).
%Combining Sage's privacy budget search and its multiple allocations with our DPF scheduling algorithm is an interesting future work direction.

\vspace{-0.3cm}
\section{Conclusion}
\vspace{-0.3cm}
\label{sec:conclusions}
% Section: Conclusion

For workloads operating on sensitive user data privacy loss should be carefully orchestrated to enforce a global bound on personal data leakage.
This paper presented {\em \sysname}, an extension to the Kubernetes workload orchestrator that adds differential privacy budget as a new native resource to be managed alongside traditional compute resources.
\sysname incorporates a novel scheduling algorithm, {\em DPF}, the first one suitable for the unique characteristics of the privacy resource, including its all-or-nothing utility and non-replenishable nature.
We show that DPF has desirable theoretical properties, outperforms baseline scheduling algorithms, and that native integration of privacy into Kubernetes can facilitate reuse of existing tools to better manage this scarce resource.

\iffalse
This paper is only the first step in exploring how to schedule privacy as a computing resource. Many interesting research questions remain for future work: How should non-replenishable and replenishable resources be optimally scheduled together? Can other utility functions (besides all-or-nothing) be supported with theoretical guarantees? How does DP parameter tuning interact with scheduling? How can we schedule pipelines where the amount of budget to train a pipeline is unknown apriori?
\fi

\vspace{-0.3cm}
\section*{Acknowledgments}
\vspace{-0.3cm}
We thank Su Ji Park for developing and tuning baseline models for Amazon Reviews.
We thank our shepherd, Malte Schwarzkopf, and the anonymous reviewers for the valuable comments.
This work was funded by the U.S. Department of Energy (DOE) under award DE-SC-0001234; by the U.S. Army Research Office (ARO) under award W911NF-21-1-0078; by Google Research and Cloud awards; and by Sloan, Microsoft, Google, and Facebook awards.

\renewcommand{\baselinestretch}{1}
{
    \bibliographystyle{plain}
    \bibliography{bib/abbrev,bib/conferences,bib/dummy,bib/refs}

\begin{thebibliography}{10}

\bibitem{abadi2016deep}
Martin Abadi, Andy Chu, Ian Goodfellow, H~Brendan McMahan, Ilya Mironov, Kunal
  Talwar, and Li~Zhang.
\newblock Deep learning with differential privacy.
\newblock In {\em Proc. of the {ACM Conference on Computer and Communications
  Security (CCS)}}, 2016.

\bibitem{aws-marketplace-sagemaker}
{AWS}.
\newblock {Buy and Sell Amazon SageMaker Algorithms and Models in AWS
  Marketplace}.
\newblock
  \url{https://docs.aws.amazon.com/sagemaker/latest/dg/sagemaker-marketplace.html}.
\newblock Accessed: 2020-12-7.

\bibitem{axboe2004linux}
Jens Axboe.
\newblock Linux block io—present and future.
\newblock In {\em Ottawa Linux Symp}, pages 51--61, 2004.

\bibitem{backes2016membership}
Michael Backes, Pascal Berrang, Mathias Humbert, and Praveen Manoharan.
\newblock Membership privacy in micro{RNA}-based studies.
\newblock In {\em Proc. of the {ACM Conference on Computer and Communications
  Security (CCS)}}, 2016.

\bibitem{barak2007privacy}
Boaz Barak, Kamalika Chaudhuri, Cynthia Dwork, Satyen Kale, Frank McSherry, and
  Kunal Talwar.
\newblock Privacy, accuracy, and consistency too: a holistic solution to
  contingency table release.
\newblock In {\em Proc. of the {ACM SIGMOD International Conference on
  Management of Data}}, 2007.

\bibitem{baylor2017tfx}
Denis Baylor, Eric Breck, Heng-Tze Cheng, Noah Fiedel, Chuan~Yu Foo, Zakaria
  Haque, Salem Haykal, Mustafa Ispir, Vihan Jain, Levent Koc, Chiu~Yuen Koo,
  Lukasz Lew, Clemens Mewald, Akshay~Naresh Modi, Neoklis Polyzotis, Sukriti
  Ramesh, Sudip Roy, Steven~Euijong Whang, Martin Wicke, Jarek Wilkiewicz, Xin
  Zhang, and Martin Zinkevich.
\newblock {TFX}: {A} {T}ensorflow-based production-scale machine learning
  platform.
\newblock In {\em Proc. of the {International Conference on Knowledge Discovery
  and Data Mining (KDD)}}, 2017.

\bibitem{caprita2005group}
Bogdan Caprita, Wong~Chun Chan, Jason Nieh, Clifford Stein, and Haoqiang Zheng.
\newblock Group ratio round-robin: O (1) proportional share scheduling for
  uniprocessor and multiprocessor systems.
\newblock In {\em USENIX Annual Technical Conference, General Track}, pages
  337--352, 2005.

\bibitem{carlini2018theSecretSharer}
Nicholas Carlini, Chang Liu, Ulfar Erlingsson, Jernej Kos, and Dawn Song.
\newblock The secret sharer: Evaluating and testing unintended memorization in
  neural networks.
\newblock arXiv:1802.08232, 2018.

\bibitem{Chan2011PCR}
T.-H.~Hubert Chan, Elaine Shi, and Dawn Song.
\newblock Private and continual release of statistics.
\newblock {\em {ACM} Transactions on Information and System Security}, 2011.

\bibitem{hug}
Mosharaf Chowdhury, Zhenhua Liu, Ali Ghodsi, and Ion Stoica.
\newblock {HUG}: Multi-resource fairness for correlated and elastic demands.
\newblock In {\em 13th {USENIX} Symposium on Networked Systems Design and
  Implementation ({NSDI} 16)}, pages 407--424, Santa Clara, CA, March 2016.
  {USENIX} Association.

\bibitem{memshare}
Asaf Cidon, Daniel Rushton, Stephen~M. Rumble, and Ryan Stutsman.
\newblock Memshare: a dynamic multi-tenant key-value cache.
\newblock In {\em 2017 {USENIX} Annual Technical Conference ({USENIX} {ATC}
  17)}, pages 321--334, Santa Clara, CA, July 2017. {USENIX} Association.

\bibitem{cummings2018differential}
Rachel Cummings, Sara Krehbiel, Kevin~A Lai, and Uthaipon Tantipongpipat.
\newblock Differential privacy for growing databases.
\newblock In {\em Proc. of the Conference on Neural Information Processing
  Systems (NeurIPS)}, 2018.

\bibitem{demontjoye2013unique}
Yves-Alexandre de~Montjoye, César~A Hidalgo, Michel Verleysen, and Vincent~D
  Blondel.
\newblock Unique in the crowd: The privacy bounds of human mobility.
\newblock {\em Scientific Reports}, 2013.

\bibitem{demers1989analysis}
Alan Demers, Srinivasan Keshav, and Scott Shenker.
\newblock Analysis and simulation of a fair queueing algorithm.
\newblock {\em ACM SIGCOMM Computer Communication Review}, 19(4):1--12, 1989.

\bibitem{dinurNissim2003revealing}
Irit Dinur and Kobi Nissim.
\newblock Revealing information while preserving privacy.
\newblock In {\em Proc. of the {International Conference on Principles of
  Database Systems (PODS)}}, 2003.

\bibitem{dwork2006differential}
Cynthia Dwork.
\newblock Differential privacy.
\newblock In {\em Automata, languages and programming}. 2006.

\bibitem{dwork2010differential}
Cynthia Dwork, Moni Naor, Toniann Pitassi, and Guy~N Rothblum.
\newblock Differential privacy under continual observation.
\newblock In {\em Proc. of the {ACM Symposium on Theory of Computing (STOC)}},
  2010.

\bibitem{dwork2017exposed}
Cynthia Dwork, Adam Smith, Thomas Steinke, and Jonathan Ullman.
\newblock Exposed! {A} survey of attacks on private data.
\newblock {\em Annual Review of Statistics and Its Application}, 2017.

\bibitem{dwork2015robustTraceability}
Cynthia Dwork, Adam Smith, Thomas Steinke, Jonathan Ullman, and Salil Vadhan.
\newblock Robust traceability from trace amounts.
\newblock In {\em Proc. of the {IEEE Symposium on Foundations of Computer
  Science (FOCS)}}, 2015.

\bibitem{opacus}
{Facebook}.
\newblock {Opacus}.
\newblock \url{https://opacus.ai/}.
\newblock Accessed: 2020-11-10.

\bibitem{feldman2020individual}
Vitaly Feldman and Tijana Zrnic.
\newblock {Individual Privacy Accounting via a Renyi Filter}.
\newblock In {\em {arXiv}}, 2020.

\bibitem{drf}
Ali Ghodsi, Matei Zaharia, Benjamin Hindman, Andy Konwinski, Scott Shenker, and
  Ion Stoica.
\newblock Dominant resource fairness: Fair allocation of multiple resource
  types.
\newblock In David~G. Andersen and Sylvia Ratnasamy, editors, {\em Proceedings
  of the 8th {USENIX} Symposium on Networked Systems Design and Implementation,
  {NSDI} 2011, Boston, MA, USA, March 30 - April 1, 2011}. {USENIX}
  Association, 2011.

\bibitem{google-dp}
{Google}.
\newblock {Differential Privacy}.
\newblock \url{https://github.com/google/differential-privacy/}.
\newblock Accessed: 2020-11-10.

\bibitem{tensorflow-privacy}
Google.
\newblock {TensorFlow Privacy}.
\newblock \url{https://github.com/tensorflow/privacy}.
\newblock Accessed: 2020-11-10.

\bibitem{goyal1996start}
Pawan Goyal, Harrick~M Vin, and Haichen Chen.
\newblock Start-time fair queueing: A scheduling algorithm for integrated
  services packet switching networks.
\newblock In {\em Conference proceedings on Applications, technologies,
  architectures, and protocols for computer communications}, pages 157--168,
  1996.

\bibitem{tetris}
Robert Grandl, Ganesh Ananthanarayanan, Srikanth Kandula, Sriram Rao, and
  Aditya Akella.
\newblock Multi-resource packing for cluster schedulers.
\newblock In {\em Proceedings of the 2014 ACM Conference on SIGCOMM}, SIGCOMM
  '14, page 455–466, New York, NY, USA, 2014. Association for Computing
  Machinery.

\bibitem{carbyne}
Robert Grandl, Mosharaf Chowdhury, Aditya Akella, and Ganesh Ananthanarayanan.
\newblock Altruistic scheduling in multi-resource clusters.
\newblock In {\em 12th {USENIX} Symposium on Operating Systems Design and
  Implementation ({OSDI} 16)}, pages 65--80, Savannah, GA, November 2016.
  {USENIX} Association.

\bibitem{graphene}
Robert Grandl, Srikanth Kandula, Sriram Rao, Aditya Akella, and Janardhan
  Kulkarni.
\newblock Graphene: Packing and dependency-aware scheduling for data-parallel
  clusters.
\newblock In {\em Proceedings of the 12th USENIX Conference on Operating
  Systems Design and Implementation}, OSDI'16, page 81–97, USA, 2016. USENIX
  Association.

\bibitem{hardt2010multiplicative}
Moritz Hardt and Guy~N Rothblum.
\newblock A multiplicative weights mechanism for privacy-preserving data
  analysis.
\newblock In {\em Symposium on Foundations of Computer Science}, 2010.

\bibitem{hazelwood2018applied}
Kim Hazelwood, Sarah Bird, David Brooks, Soumith Chintala, Utku Diril, Dmytro
  Dzhulgakov, Mohamed Fawzy, Bill Jia, Yangqing Jia, Aditya Kalro, James Law,
  Kevin Lee, Jason Lu, Pieter Noordhuis, Misha Smelyanskiy, Liang Xiong, and
  Xiaodong Wang.
\newblock Applied machine learning at {F}acebook: {A} datacenter infrastructure
  perspective.
\newblock In {\em Proc. of International Symposium on High-Performance Computer
  Architecture (HPCA)}, 2018.

\bibitem{homer2008resolving}
Nils Homer, Szabolcs Szelinger, Margot Redman, David Duggan, Waibhav Tembe,
  Jill Muehling, John~V Pearson, Dietrich~A Stephan, Stanley~F Nelson, and
  David~W Craig.
\newblock Resolving individuals contributing trace amounts of {DNA} to highly
  complex mixtures using high-density {SNP} genotyping microarrays.
\newblock {\em {PLoS Genetics}}, 2008.

\bibitem{idm-diffprivlib}
{IBM}.
\newblock {Diffprivlib}.
\newblock \url{https://github.com/IBM/differential-privacy-library}.
\newblock Accessed: 2020-12-7.

\bibitem{quincy}
Michael Isard, Vijayan Prabhakaran, Jon Currey, Udi Wieder, Kunal Talwar, and
  Andrew Goldberg.
\newblock Quincy: fair scheduling for distributed computing clusters.
\newblock In {\em Proceedings of the ACM SIGOPS 22nd symposium on Operating
  systems principles}, pages 261--276, 2009.

\bibitem{236254}
Bargav Jayaraman and David Evans.
\newblock Evaluating differentially private machine learning in practice.
\newblock In {\em Proc. of {USENIX Security}}, 2019.

\bibitem{dynamicdrf}
Ian Kash, Ariel~D Procaccia, and Nisarg Shah.
\newblock No agent left behind: Dynamic fair division of multiple resources.
\newblock {\em Journal of Artificial Intelligence Research}, 51:579--603, 2014.

\bibitem{Kifer2020GuidelinesFI}
Daniel Kifer, Solomon Messing, Aaron Roth, Abhradeep Thakurta, and Danfeng
  Zhang.
\newblock Guidelines for implementing and auditing differentially private
  systems.
\newblock {\em ArXiv}, 2020.

\bibitem{kubeflow}
{Kubeflow}.
\newblock {Overview of Kubeflow Pipelines}.
\newblock
  \url{https://www.kubeflow.org/docs/components/pipelines/overview/pipelines-overview/}.
\newblock Accessed: 2021-05-11.

\bibitem{adaptive_rdp}
Mathias L\'ecuyer.
\newblock {Practical Privacy Filters and Odometers with Rényi Differential
  Privacy and Applications to Differentially Private Deep Learning}.
\newblock In {\em {arXiv}}, 2021.

\bibitem{sage}
Mathias L\'{e}cuyer, Riley Spahn, Kiran Vodrahalli, Roxana Geambasu, and Daniel
  Hsu.
\newblock {Privacy Accounting and Quality Control in the Sage Differentially
  Private ML Platform}.
\newblock In {\em Proc. of the {ACM Symposium on Operating Systems Principles
  (SOSP)}}, 2019.

\bibitem{Li:michelangelo}
Li~Erran Li, Eric Chen, Jeremy Hermann, Pusheng Zhang, and Luming Wang.
\newblock Scaling machine learning as a service.
\newblock In {\em Proc. of The International Conference on Predictive
  Applications and APIs}, 2017.

\bibitem{liu2003opportunistic}
Yonghe Liu and Edward Knightly.
\newblock Opportunistic fair scheduling over multiple wireless channels.
\newblock In {\em IEEE INFOCOM 2003. Twenty-second Annual Joint Conference of
  the IEEE Computer and Communications Societies (IEEE Cat. No. 03CH37428)},
  volume~2, pages 1106--1115. IEEE, 2003.

\bibitem{privatekube_arxiv}
Tao Luo, Mingen Pan, Pierre Tholoniat, Asaf Cidon, Roxana Geambasu, and Mathias
  L\'ecuyer.
\newblock {Privacy Resource Scheduling (extended version)}.
\newblock \url{https://github.com/columbia/privatekube}, 2021.

\bibitem{McMahan2018LearningDP}
H.~Brendan McMahan, Daniel Ramage, Kunal Talwar, and Li~Zhang.
\newblock Learning differentially private recurrent language models.
\newblock In {\em Proc. of the {International Conference on Learning
  Representations (ICLR)}}, 2018.

\bibitem{Mcsherry:pinq}
Frank~D. McSherry.
\newblock Privacy integrated queries: An extensible platform for
  privacy-preserving data analysis.
\newblock In {\em Proc. of the {ACM SIGMOD International Conference on
  Management of Data}}, 2009.

\bibitem{mir2011pan}
Darakhshan Mir, S~Muthukrishnan, Aleksandar Nikolov, and Rebecca~N Wright.
\newblock Pan-private algorithms via statistics on sketches.
\newblock In {\em Proc. of the {ACM SIGMOD International Conference on
  Management of Data}}, 2011.

\bibitem{8049725}
I.~{Mironov}.
\newblock {R\'enyi Differential Privacy}.
\newblock In {\em Computer Security Foundations Symposium (CSF)}, 2017.

\bibitem{modelzoo}
{Model Zoo}.
\newblock \url{https://modelzoo.co/}.
\newblock Accessed: 2020-12-7.

\bibitem{mohan2012gupt}
Prashanth Mohan, Abhradeep Thakurta, Elaine Shi, Dawn Song, and David Culler.
\newblock {GUPT}: {Privacy} preserving data analysis made easy.
\newblock In {\em Proc. of the 2012 ACM SIGMOD International Conference on
  Management of Data}, 2012.

\bibitem{Narayanan:2008:RDL:1397759.1398064}
Arvind Narayanan and Vitaly Shmatikov.
\newblock Robust de-anonymization of large sparse datasets.
\newblock In {\em Proc. of {IEEE Symposium on Security and Privacy (S\&P)}},
  2008.

\bibitem{amazon-reviews}
Jianmo Ni, Jiacheng Li, and Julian McAuley.
\newblock Justifying recommendations using distantly-labeled reviews and
  fine-grained aspects.
\newblock In {\em Proceedings of the 2019 Conference on Empirical Methods in
  Natural Language Processing and the 9th International Joint Conference on
  Natural Language Processing (EMNLP-IJCNLP).}, pages 188--197, Hong Kong,
  China, November 2019. Association for Computational Linguistics.
  \url{https://nijianmo.github.io/amazon/index.html}.

\bibitem{ms-harvard-opendp}
{OpenDP}.
\newblock \url{https://smartnoise.org/}.
\newblock Accessed: 2020-11-10.

\bibitem{parekh1993generalized}
Abhay~K Parekh and Robert~G Gallager.
\newblock A generalized processor sharing approach to flow control in
  integrated services networks: the single-node case.
\newblock {\em IEEE/ACM transactions on networking}, 1(3):344--357, 1993.

\bibitem{parkes2015beyond}
David~C Parkes, Ariel~D Procaccia, and Nisarg Shah.
\newblock Beyond dominant resource fairness: Extensions, limitations, and
  indivisibilities.
\newblock {\em ACM Transactions on Economics and Computation (TEAC)},
  3(1):1--22, 2015.

\bibitem{glove}
Jeffrey Pennington, Richard Socher, and Christopher~D Manning.
\newblock Glove: Global vectors for word representation.
\newblock In {\em EMNLP}, volume~14, pages 1532--1543, 2014.

\bibitem{faircloud}
Lucian Popa, Gautam Kumar, Mosharaf Chowdhury, Arvind Krishnamurthy, Sylvia
  Ratnasamy, and Ion Stoica.
\newblock Fair{C}loud: Sharing the network in cloud computing.
\newblock In {\em Proceedings of the ACM SIGCOMM 2012 conference on
  Applications, technologies, architectures, and protocols for computer
  communication}, pages 187--198, 2012.

\bibitem{proserpio2014calibrating}
Davide Proserpio, Sharon Goldberg, and Frank McSherry.
\newblock Calibrating data to sensitivity in private data analysis: a platform
  for differentially-private analysis of weighted datasets.
\newblock {\em Proc. of the {International Conference on Very Large Data Bases
  (VLDB)}}, 2014.

\bibitem{fairride}
Qifan Pu, Haoyuan Li, Matei Zaharia, Ali Ghodsi, and Ion Stoica.
\newblock Fair{R}ide: Near-optimal, fair cache sharing.
\newblock In {\em 13th {USENIX} Symposium on Networked Systems Design and
  Implementation ({NSDI} 16)}, pages 393--406, Santa Clara, CA, March 2016.
  {USENIX} Association.

\bibitem{ravi2017onDeviceMachineIntelligence}
Sujith Ravi.
\newblock On-device machine intelligence.
\newblock
  \url{https://ai.googleblog.com/2017/02/on-device-machine-intelligence.html},
  2017.

\bibitem{roy2010airavat}
Indrajit Roy, Srinath~TV Setty, Ann Kilzer, Vitaly Shmatikov, and Emmett
  Witchel.
\newblock Airavat: Security and privacy for {MapReduce}.
\newblock In {\em Proc. of the {USENIX Symposium on Networked Systems Design
  and Implementation (NSDI)}}, 2010.

\bibitem{twitter-embeddings}
D.~Shiebler and A.~Tayal.
\newblock Making machine learning easy with embeddings.
\newblock In {\em Proceedings of the {Fourth Conference on Machine Learning and
  Systems (SysML)s}}, 2018.

\bibitem{shokri2017membership}
Reza Shokri, Marco Stronati, Congzheng Song, and Vitaly Shmatikov.
\newblock Membership inference attacks against machine learning models.
\newblock In {\em Proc. of {IEEE Symposium on Security and Privacy (S\&P)}},
  2017.

\bibitem{pisces}
David Shue, Michael~J. Freedman, and Anees Shaikh.
\newblock Performance isolation and fairness for multi-tenant cloud storage.
\newblock In {\em USENIX Conference on Operating Systems Design and
  Implementation (OSDI)}, pages 349--362, 2012.

\bibitem{xu2012differentially}
Jia Xu, Zhenjie Zhang, Xiaokui Xiao, Yin Yang, Ge~Yu, and Marianne Winslett.
\newblock Differentially private histogram publication.
\newblock In {\em Proc. of the {IEEE International Conference on Data
  Engineering (ICDE)}}, 2012.

\bibitem{yu2019differentially}
Lei Yu, Ling Liu, Calton Pu, Mehmet~Emre Gursoy, and Stacey Truex.
\newblock Differentially private model publishing for deep learning.
\newblock In {\em Proc. of {IEEE Symposium on Security and Privacy (S\&P)}},
  2019.

\bibitem{beguelinccs2020}
Santiago Zanella-B\'{e}guelin, Lukas Wutschitz, Shruti Tople, Victor R\"{u}hle,
  Andrew Paverd, Olga Ohrimenko, Boris K\"{o}pf, and Marc Brockschmidt.
\newblock Analyzing information leakage of updates to natural language models.
\newblock In {\em Proceedings of the 2020 ACM SIGSAC Conference on Computer and
  Communications Security}, CCS '20, page 363–375, New York, NY, USA, 2020.
  Association for Computing Machinery.

\end{thebibliography}
}

\clearpage
\appendix

\vspace{-0.3cm}
\section{Artifact Appendix}
\vspace{-0.3cm}
\label{apx:artifact}
\subsection{Abstract}

\begin{figure*}[t]
    \centering
    \begin{subfigure}{0.24\linewidth}
        \centering
        \includegraphics[width=\linewidth]{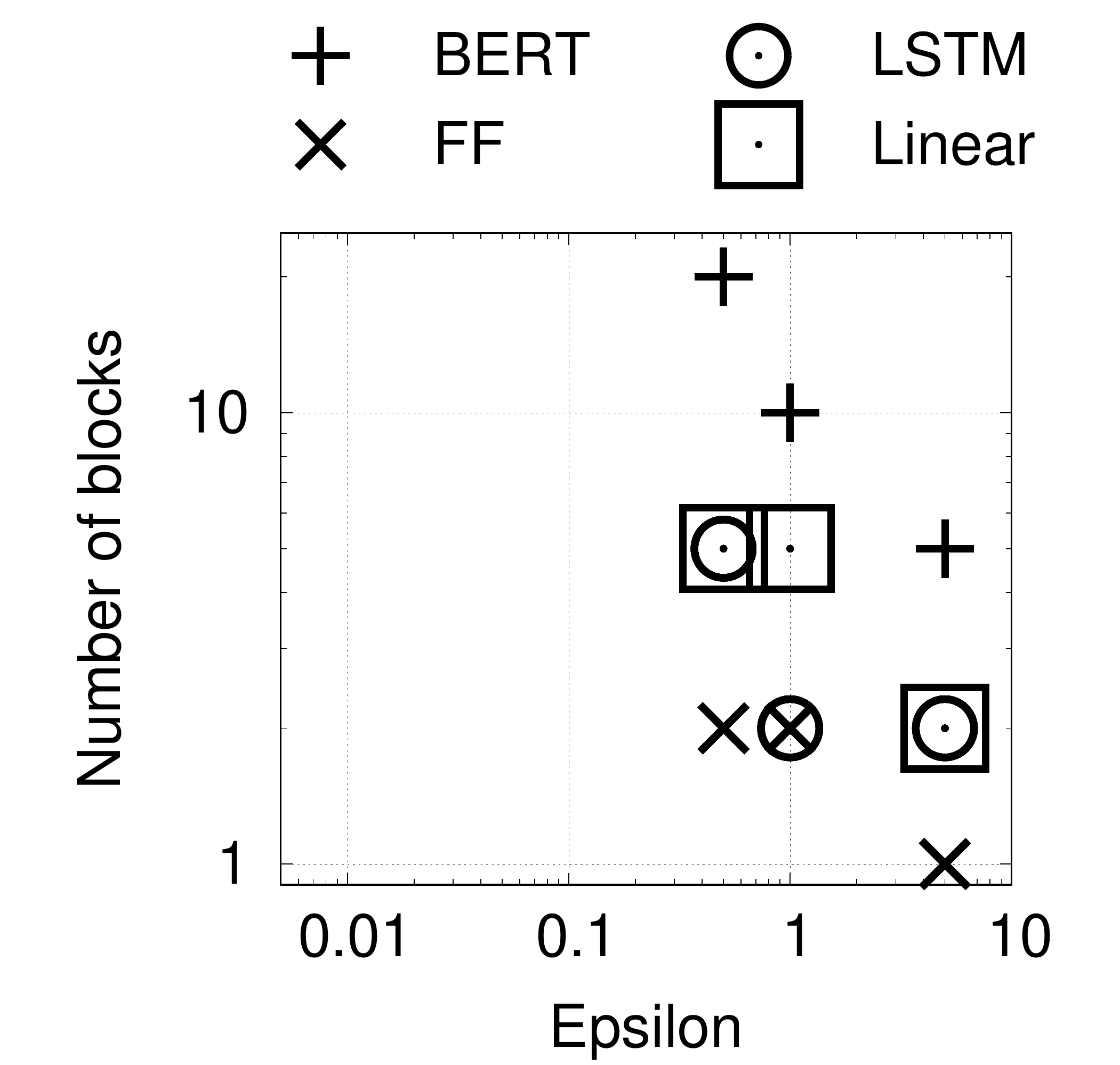}
        \caption{\footnotesize {\bf Product classification demands}}
        \label{fig:appendix:macrobenchmark:workload-mix:product}
    \end{subfigure}%
    ~
    \begin{subfigure}{0.24\linewidth}
        \centering
        \includegraphics[width=\linewidth]{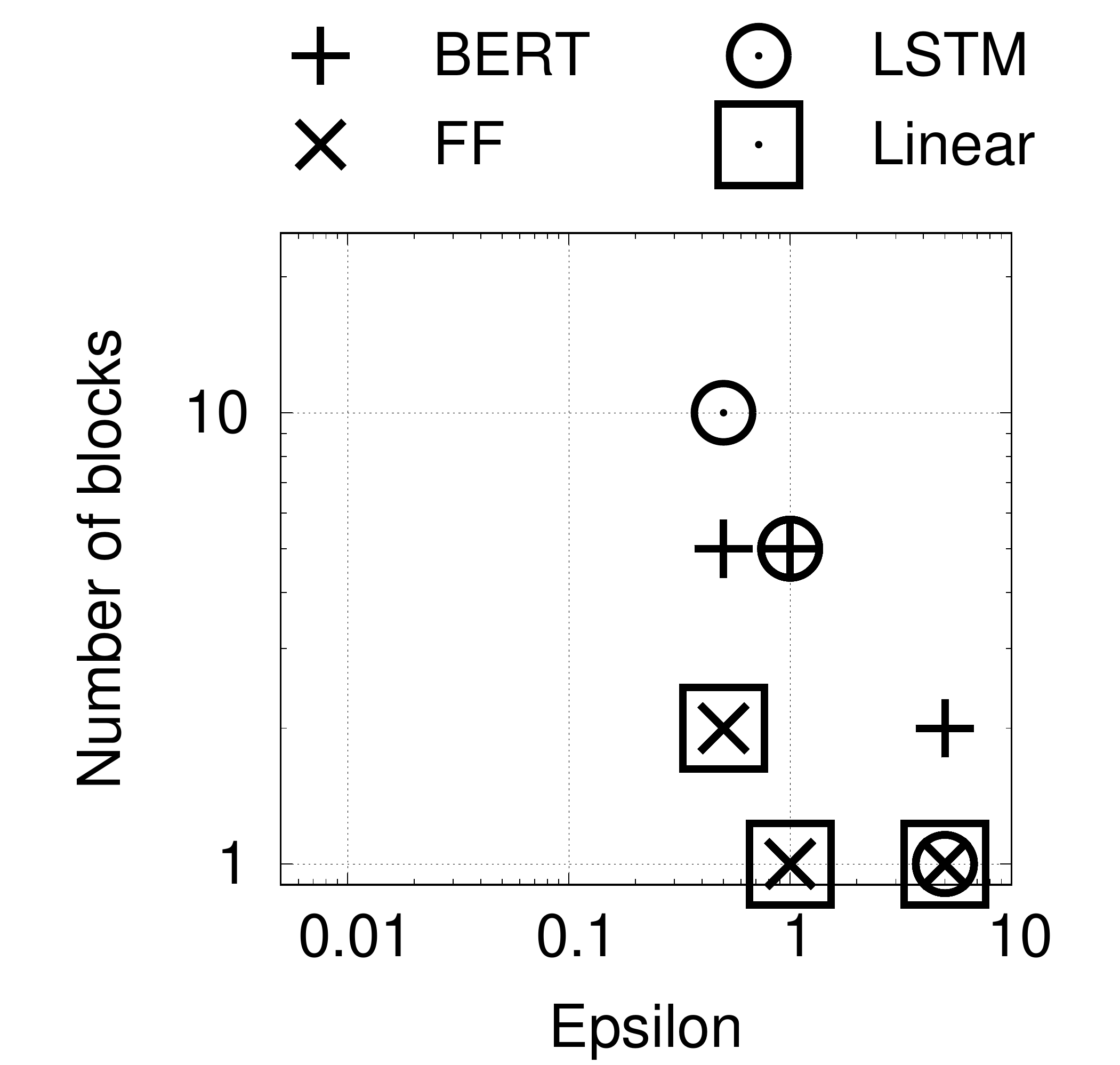}
        \caption{\footnotesize {\bf Sentiment analysis demands}}
        \label{fig:appendix:macrobenchmark:workload-mix:sentiment}
    \end{subfigure}%
    ~
    \begin{subfigure}{0.24\linewidth}
        \centering
        \includegraphics[width=\linewidth]{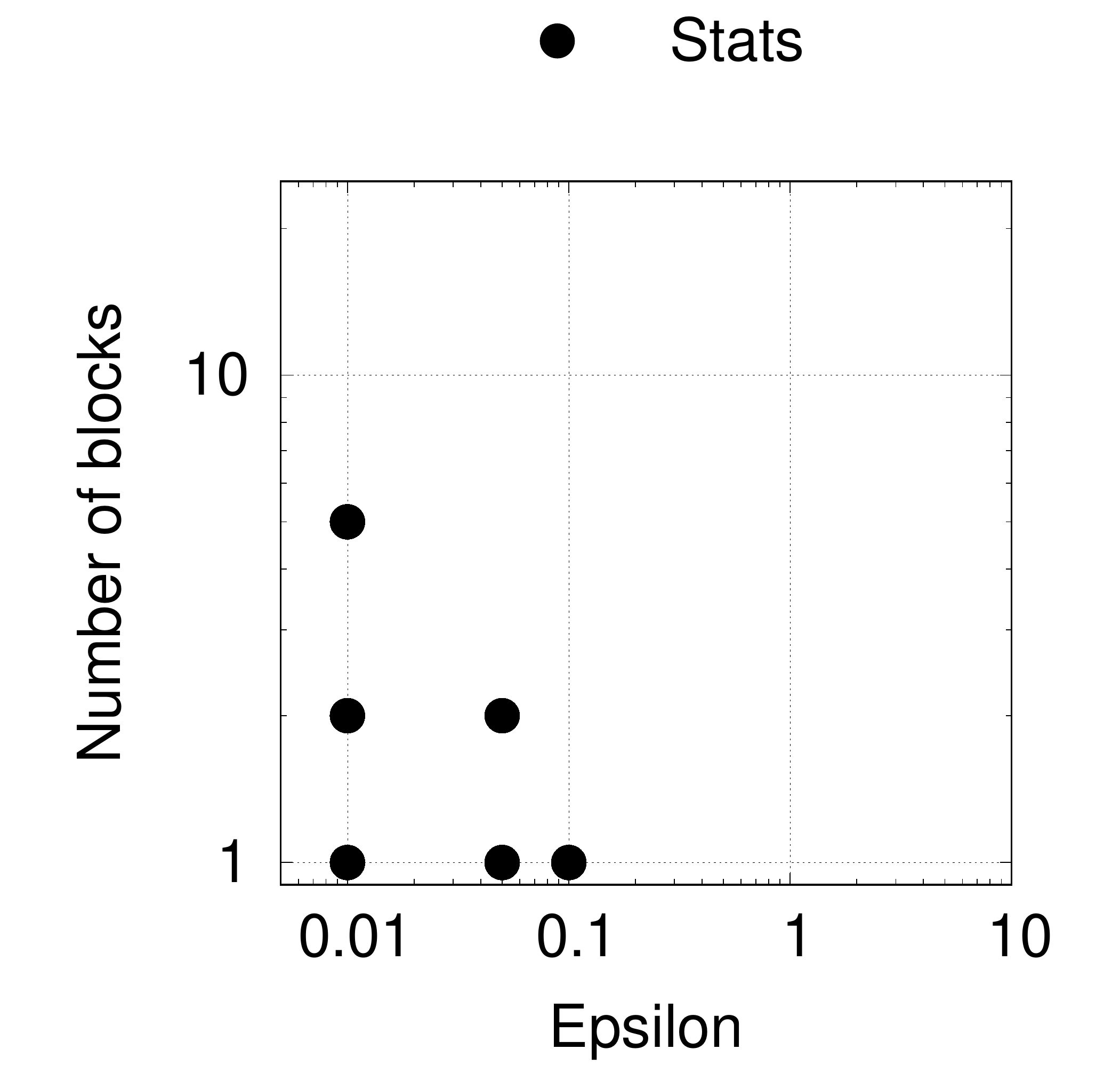}
        \caption{\footnotesize {\bf Statistics demands}}
        \label{fig:appendix:macrobenchmark:workload-mix:stats}
    \end{subfigure}%
    ~
    \begin{subfigure}{0.24\linewidth}
        \centering
        \includegraphics[width=\linewidth]{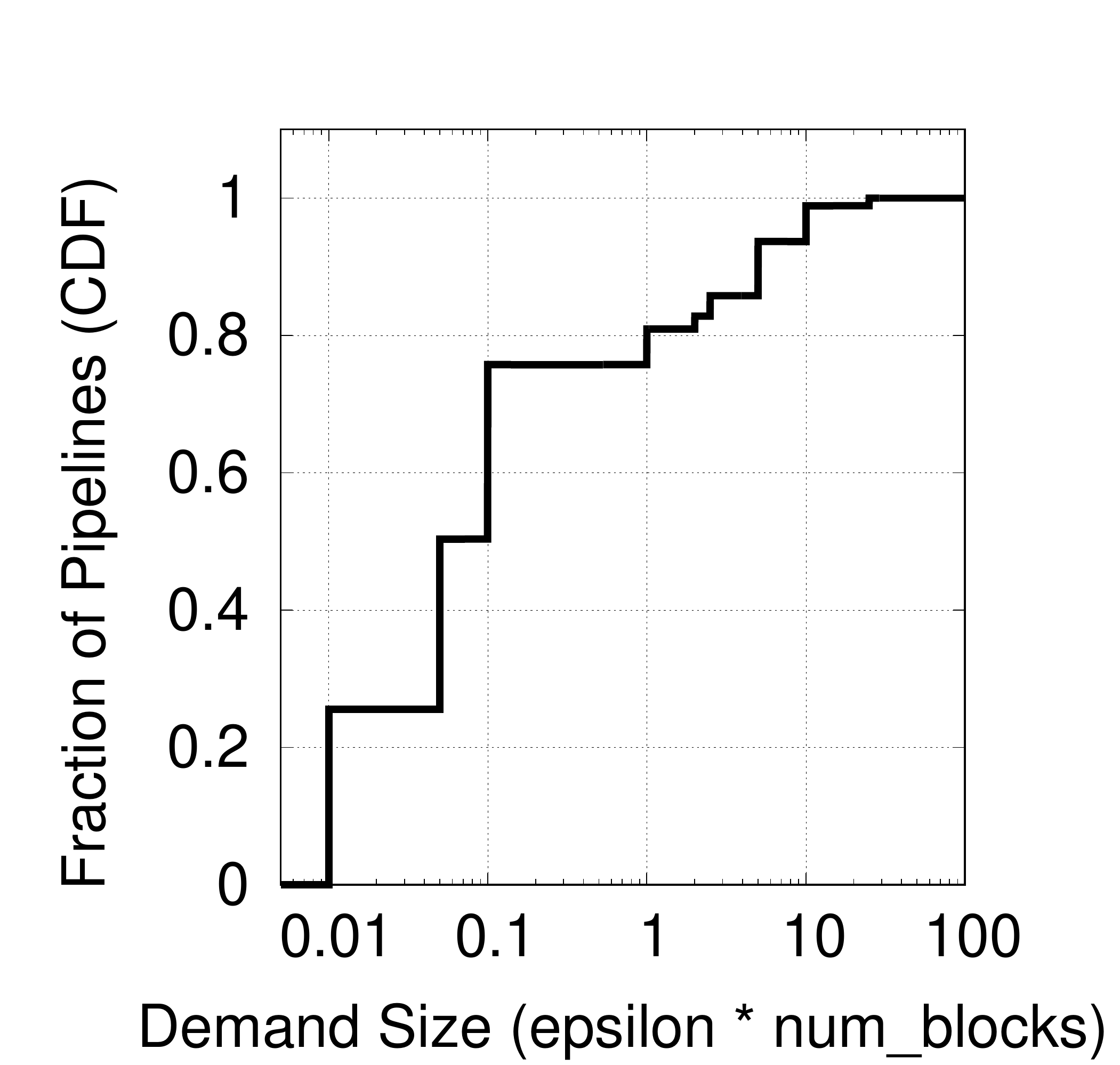}
        \caption{\footnotesize {\bf Distribution of the demands}}
        \label{fig:appendix:macrobenchmark:workload-mix:cdf}
    \end{subfigure}%
    \caption{\footnotesize {\bf Pipeline demands for the Event-DP workload.}}
    \label{fig:appendix:macrobenchmark:workload-mix}

\end{figure*}

% {\em Obligatory. Provide a short description of your artifact.}

Our open-source artifact contains the main parts of the \sysname system, a scheduling simulator as well as experimental setups to reproduce our evaluation results.

\subsection{Scope}

% {\em Obligatory. Explain what claims the artifact allows to validate and for what purposes it can be used.}

The artifact allows to validate the microbenchmark
(\F\ref{fig:evaluation:microbenchmark:single-block},
\F\ref{fig:evaluation:microbenchmark:mice-percentage},
\F\ref{fig:evaluation:microbenchmark:multiple-blocks},
\F\ref{fig:evaluation:microbenchmark:dpfn-dpft} and
\F\ref{fig:evaluation:microbenchmark:dpfn-rdp})
and the macrobenchmark (\F\ref{fig:evaluation:macrobenchmark:individual-model-accuracy-and-dpf-with-dp-semantic} and \F\ref{fig:evaluation:macrobenchmark:dpf-on-event-dp-workload}).

The privacy resource implementation and the DPF scheduler can be reused on any Kubernetes cluster, as well as modified to study other aspects, such as different scheduling algorithms, or the co-scheduling of privacy budgets with computational resources.

\subsection{Contents}

% {\em Obligatory. Explain the contents of the artifact.}

We release the following parts of the \sysname system: the privacy resource implementation (for both DP and RDP); the DPF scheduler (DPF-T and DPF-N); and an example of Kubeflow pipeline using \sysname.

We also release the discrete-event simulator, which we leverage to study and prototype scheduling algorithms of privacy and computational resources.

We also provide command line interfaces to reproduce: the microbenchmark; the DP workloads (dataset, models and parameters) used for the macrobenchmark; and the evaluation of the DPF scheduler on the macrobenchmark workloads.

The artifact does not contain: the Grafana dashboard; data ingestion pipelines and other data management infrastructure; nor a cloud-agnostic deployment for Kubeflow pipelines.  We can make these components available upon request, but at the time of this publication they are fairly specific to our Kubernetes cluster.

% The GitHub repository provides more documentation and installation guidelines.

\subsection{Hosting} The artifact is available at \url{https://github.com/columbia/privatekube/releases/tag/v1.0}.

% {\em Obligatory. Explain how to obtain the artifact. Be specific. If you host the artifact on GitHub, please mention the name of the branch and commit version. You might also want to consider hosting your repository on a platform like Zenodo, which assigns a unique DOI and is integrated \href{https://guides.github.com/activities/citable-code/}{well with GitHub}.}

\subsection{Requirements}

% {\em Optional. Explain any special hardware or software requirements, or state the platform on which the artifact has been developed and tested. You can omit this section if your artifact does not have any specific software or hardware requirements.}

This artifact requires a Kubernetes cluster. The documentation explains how to set up a small cluster on a laptop and details the other requirements. Optionally, an NVIDIA GPU can speed up the evaluation.

The privacy resource implementation, the scheduler and the macrobenchmark do not require anything else.
The Kubeflow components and the Kubeflow pipeline example require a Google Cloud Platform Kubernetes cluster with Kubeflow enabled.

It is highly recommended to reproduce the microbenchmark with a beefy machine. It normally takes us several hours to finish it with two 32-core CPUs.

% \subsection*{\ldots{}}

% {\em Optional. Below the sections above, you can add any number of additional sections that are specific to your artifact.}

\subsection{Additional Evaluation Results}
\label{sec:artifact-appendix:additional-evaluation-results}

The released artifact supports evaluation of \sysname and DPF beyond the results included in the paper.
We include here a few of the results that we omitted in the paper.

\begin{figure}[H]
    \centering
    \begin{subfigure}{0.5\linewidth}
        \centering
        \includegraphics[width=\linewidth]{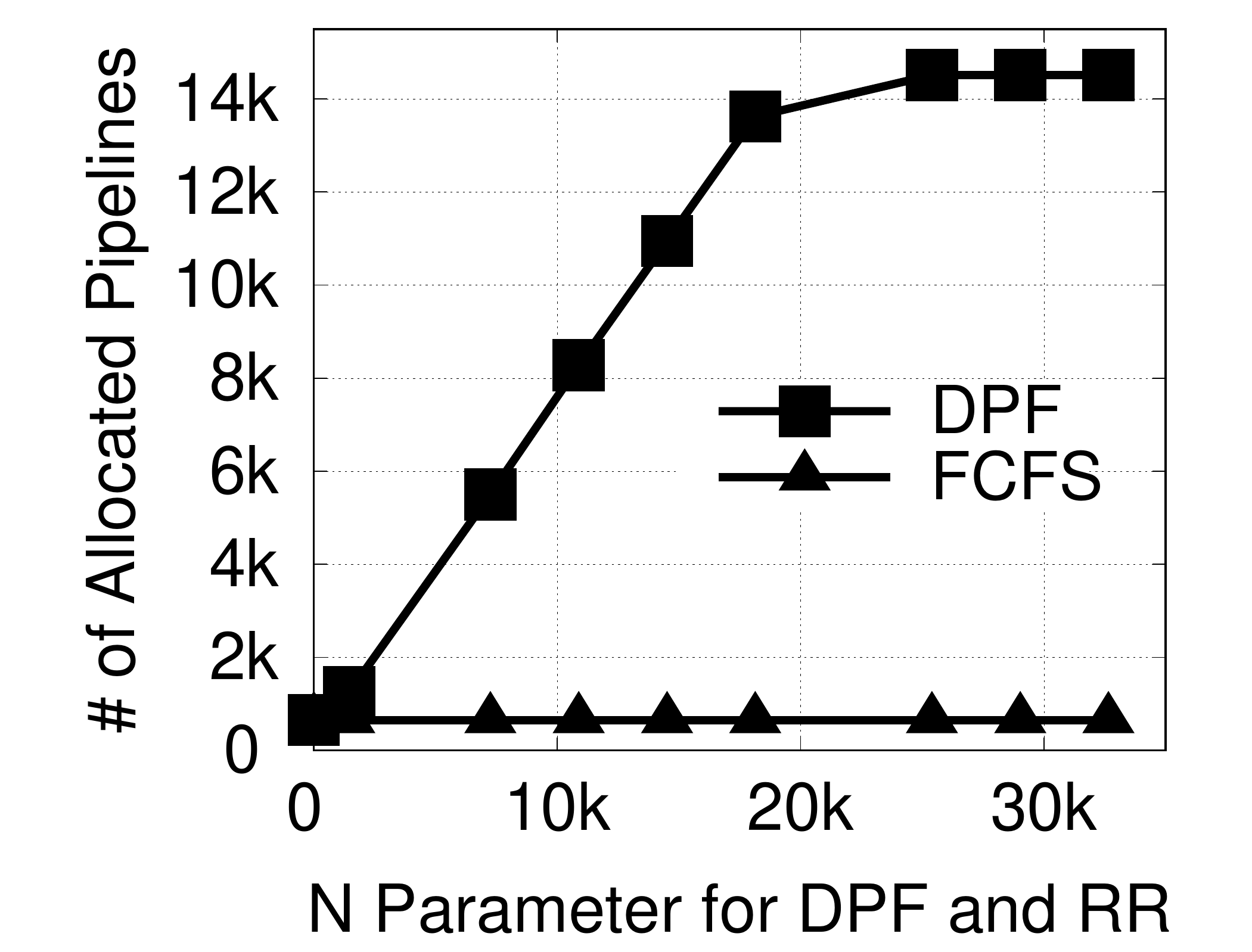}
        \caption{\footnotesize {\bf Number of pipelines allocated.}}
        \label{fig:appendix:microbenchmark:single-block:completed}
    \end{subfigure}%
    ~
    \begin{subfigure}{0.5\linewidth}
        \centering
        \includegraphics[width=\linewidth]{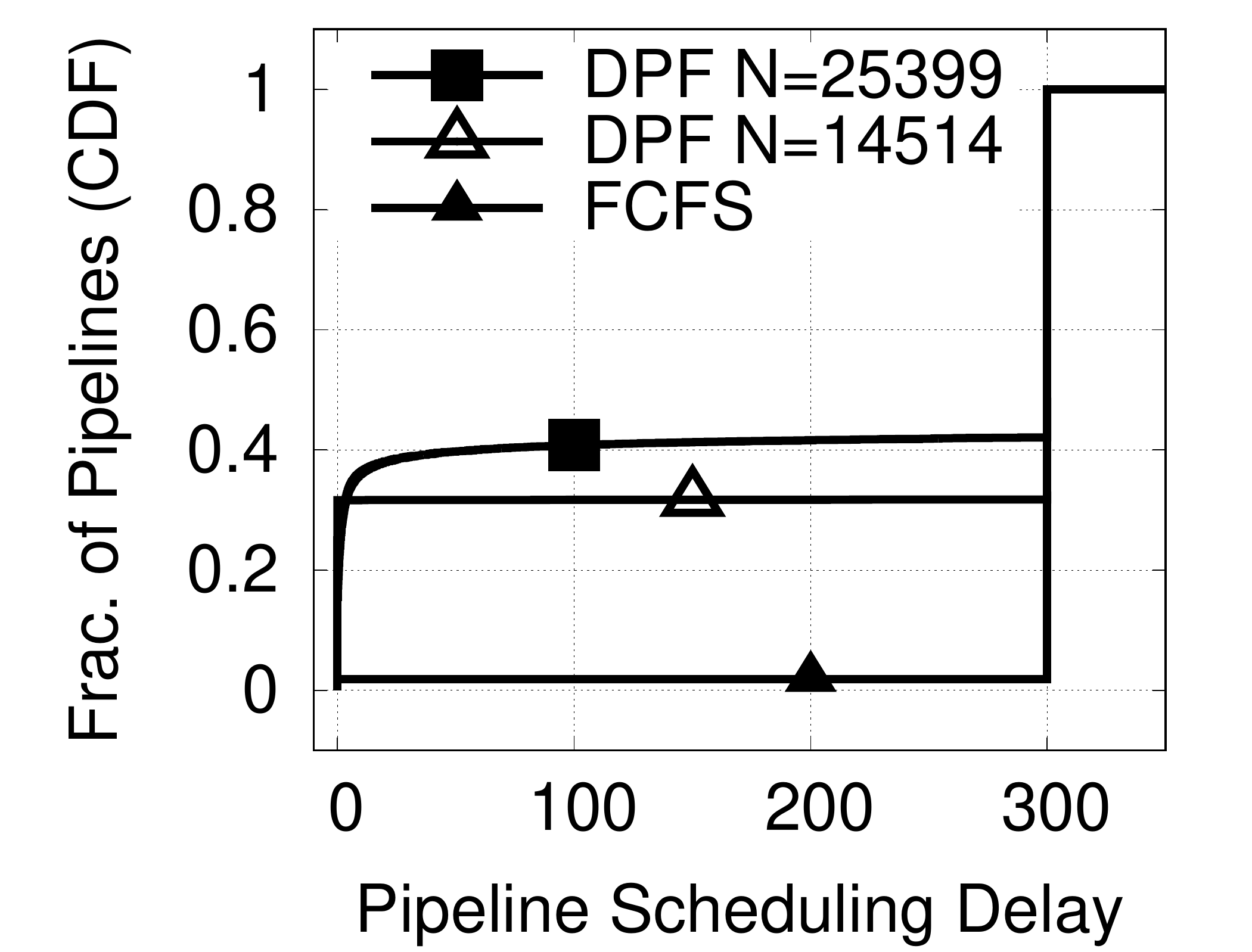}
        \caption{\footnotesize {\bf Scheduling delay.}}
        \label{fig:appendix:microbenchmark:single-block:wait-time}
    \end{subfigure}
    % \vspace{-0.3cm}
    \caption{\footnotesize {\bf R\'enyi DPF behavior on a single block.}}
    \label{fig:appendix:microbenchmark:single-block}
    % \vspace{-0.3cm}
\end{figure}

\begin{figure}[H]
    \centering
    \begin{subfigure}{0.5\linewidth}
        \centering
        \includegraphics[width=\linewidth]{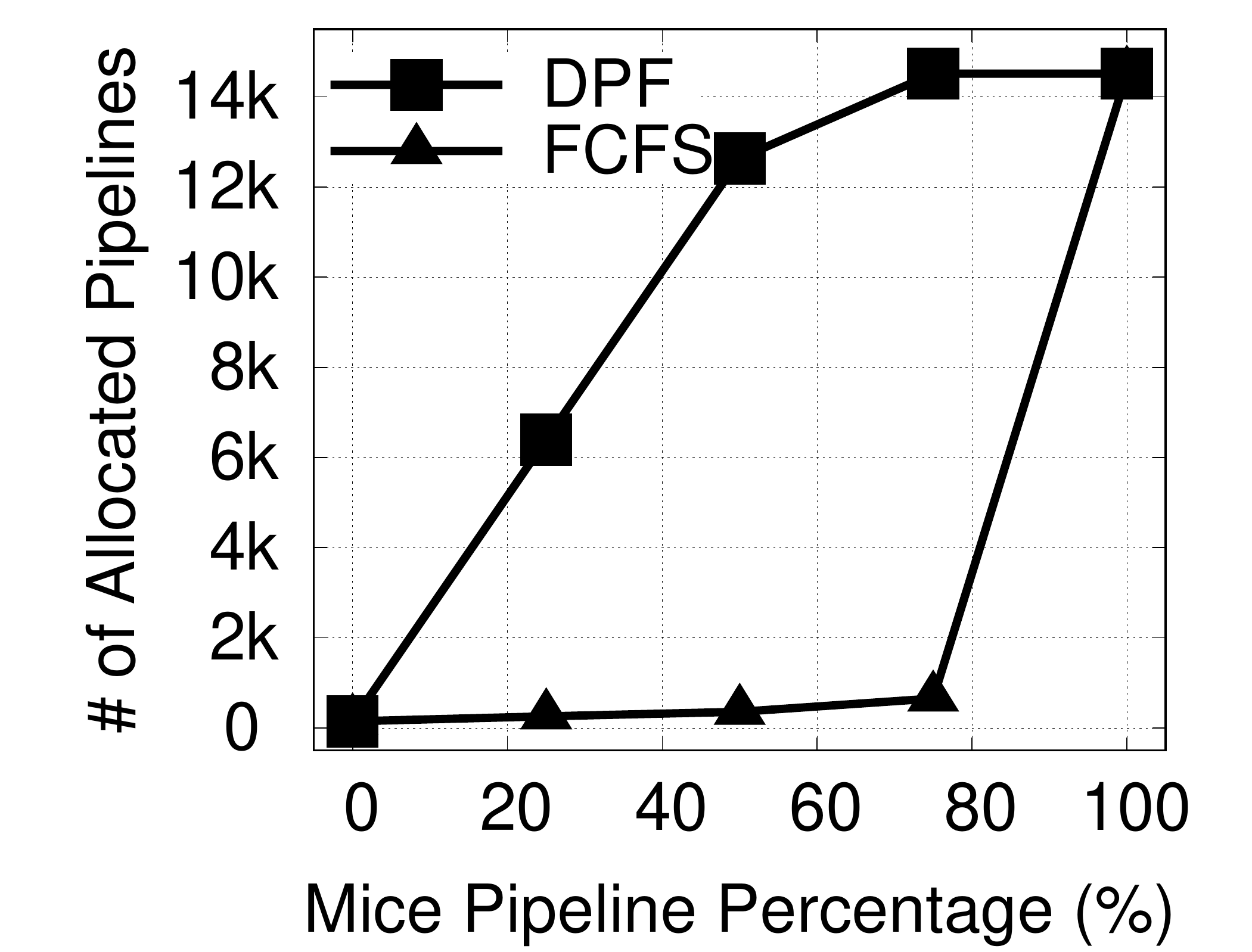}
        \caption{\footnotesize {\bf Number of pipelines allocated.}}
        \label{fig:appendix:microbenchmark:mice-percentage:completed}
    \end{subfigure}%
    ~
    \begin{subfigure}{0.5\linewidth}
        \centering
        \includegraphics[width=\linewidth]{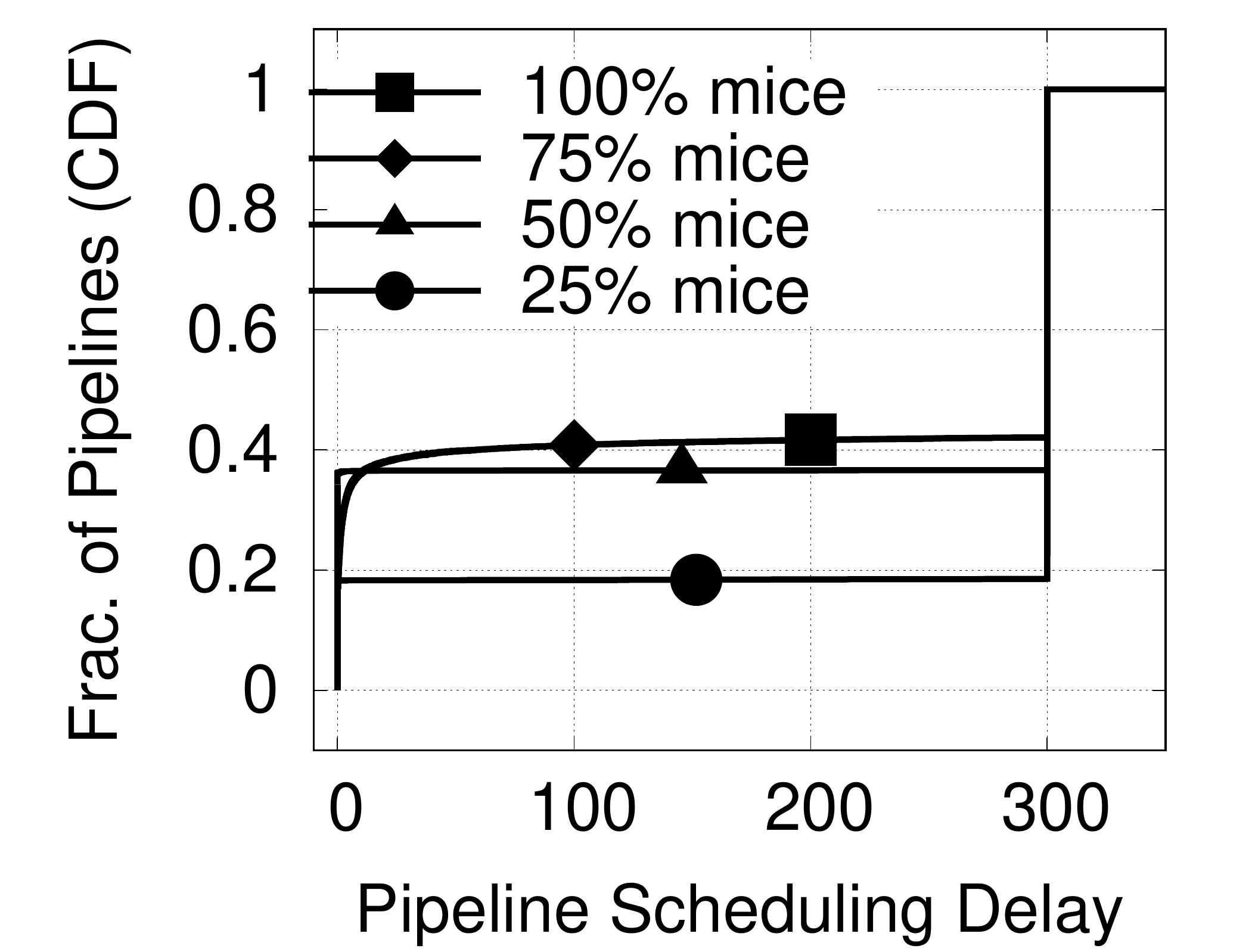}
        \caption{\footnotesize {\bf Scheduling delay.}}
        \label{fig:appendix:microbenchmark:mice-percentage:wait-time}
    \end{subfigure}
    % \vspace{-0.3cm}
    \caption{\footnotesize {\bf R\'enyi DPF behavior with variable workload mix, single block.} DPF N=25,399.}
    \label{fig:appendix:microbenchmark:mice-percentage}
    % \vspace{-0.3cm}
\end{figure}

\begin{figure}[t]
    \centering
    \begin{subfigure}{0.5\linewidth}
        \centering
        \includegraphics[width=\linewidth]{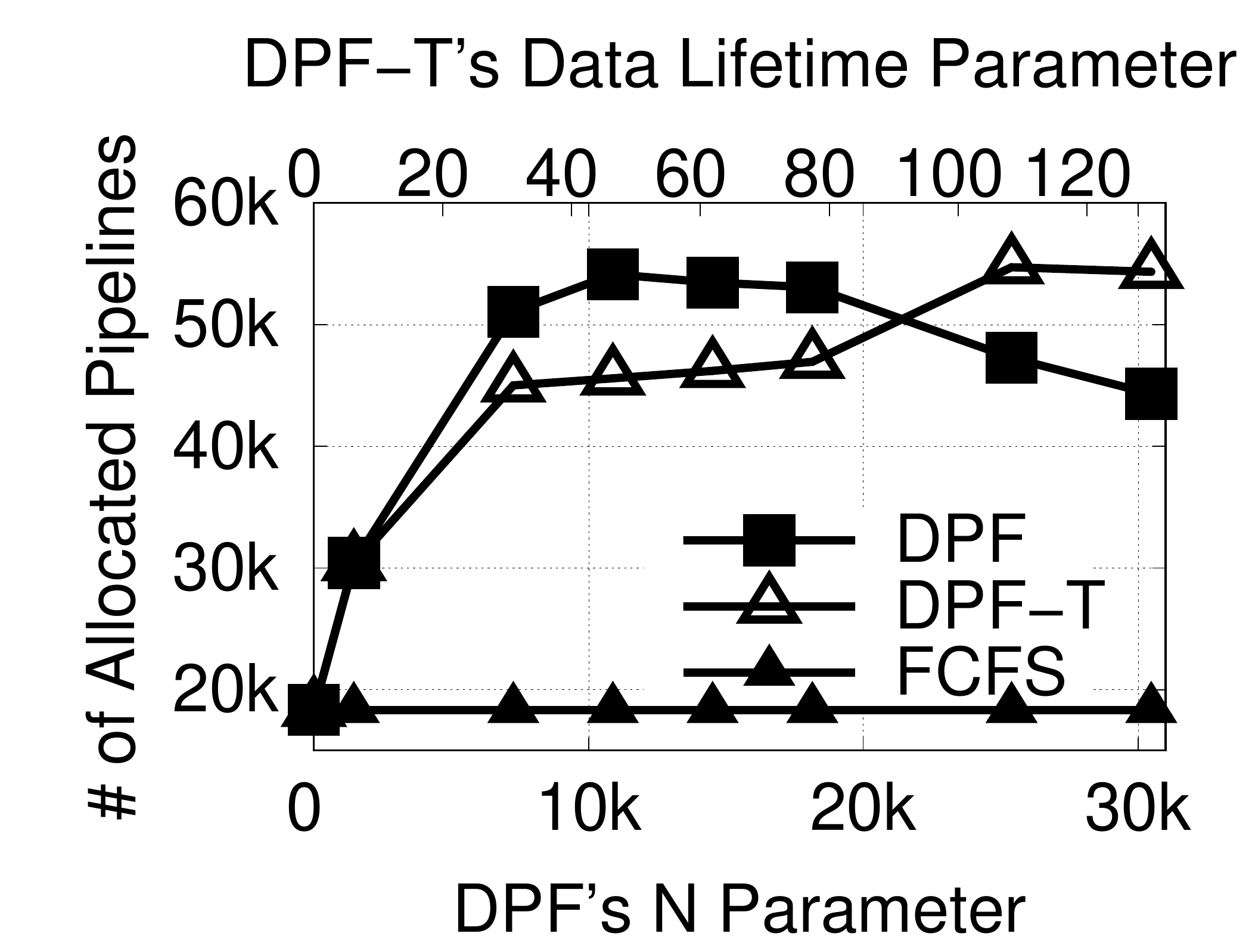}
        \caption{\footnotesize {\bf Number of pipelines allocated.}}
        \label{fig:appendix:microbenchmark:multiple-block:completed}
    \end{subfigure}%
    ~
    \begin{subfigure}{0.5\linewidth}
        \centering
        \includegraphics[width=\linewidth]{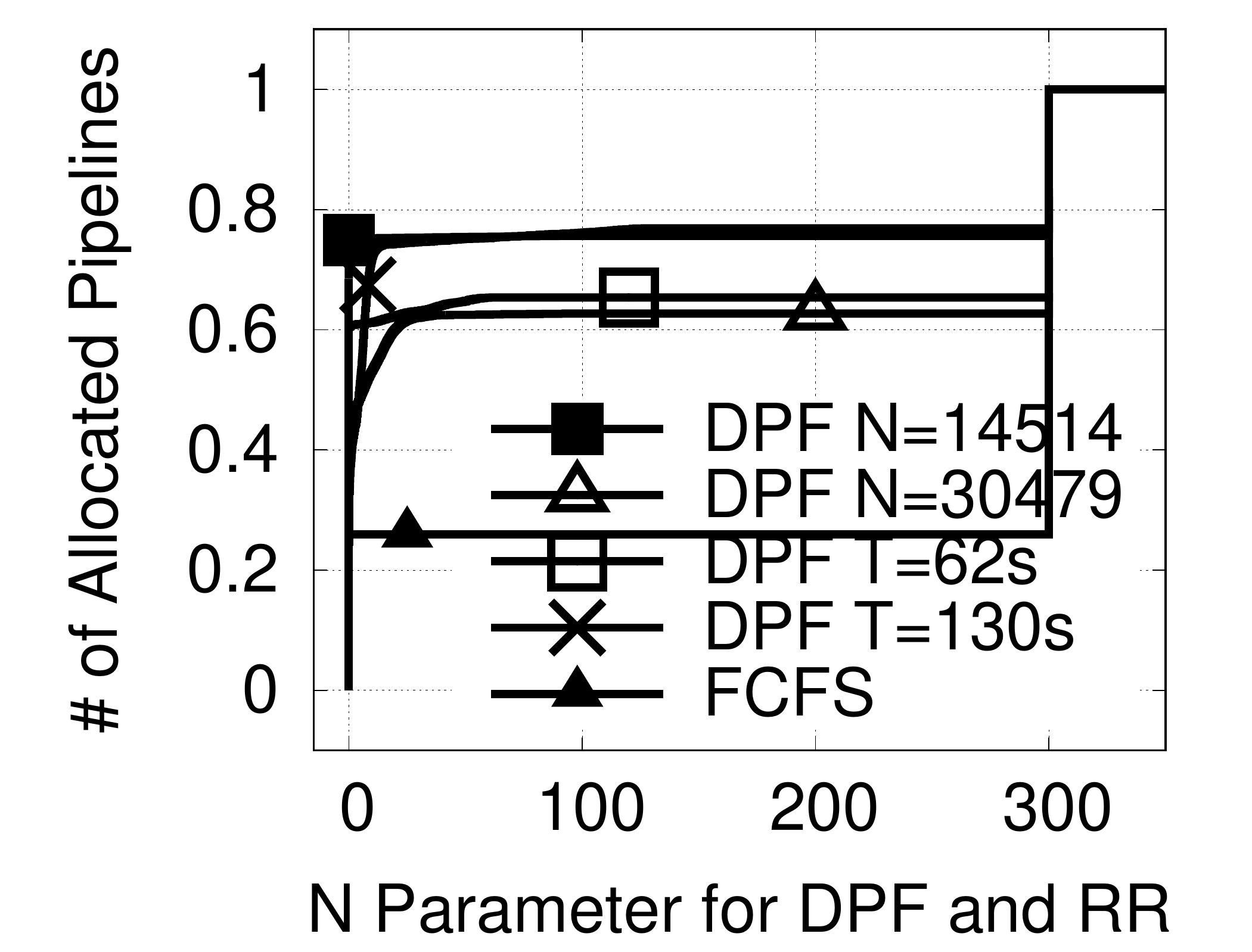}
        \caption{\footnotesize {\bf Scheduling delay.}}
        \label{fig:appendix:microbenchmark:multiple-block:wait-time}
    \end{subfigure}
    \caption{\footnotesize {\bf R\'enyi DPF and DPF-T behaviors on multiple blocks.}}
    \label{fig:appendix:microbenchmark:multiple-blocks}
\end{figure}

\heading{Additional Microbenchmark Results.}
\S\ref{sec:evaluation:microbenchmarks} explores in detail the behavior of DPF with basic composition on one or multiple blocks, and under varied mice::elephant ratios.  Our artifact allows exploration of these behaviors for DPF with R\'enyi composition, as well.  For thoroughness, we include the corresponding graphs here:

\F\ref{fig:appendix:microbenchmark:single-block} (R\'enyi version of \F\ref{fig:evaluation:microbenchmark:single-block}) shows that, when the load is amplified appropriately (as described in \S\ref{sec:evaluation:microbenchmark:dpfn-rdp}), R\'enyi DP can allocate more than $14 \times$ more pipelines than traditional DP for the optimal values of $N$, in the single block setting.

\F\ref{fig:appendix:microbenchmark:mice-percentage} (R\'enyi version of \F\ref{fig:evaluation:microbenchmark:mice-percentage}) shows that increasing the mice percentage has a similar impact on the number of allocated pipelines for DPF under R\'enyi DP and traditional DP. Similar to the basic composition results, FCFS also behaves the same as DPF when the percentage of Mice is either 0\% or 100\%.

\F\ref{fig:appendix:microbenchmark:multiple-blocks} (R\'enyi version of \F\ref{fig:evaluation:microbenchmark:dpfn-dpft}) shows that, similarly to the traditional DP case, DPF performs better for large $N$ and $T$. In addition, $T$ outperforms $N$ for large $N$ values, since all budget is eventually locked.

\heading{Additional Macrobenchmark Results.}
\S\ref{sec:evaluation:macrobenchmarks} shows the results from our macrobenchmark evaluation of the R\'enyi DP instantiation of our system.  Our artifact allows evaluation of the macrobenchmark against the traditional DP instantiation as well.
For completeness, we include here some of the omitted macrobenchmark results:

\begin{figure}[t]
    \centering
    \begin{subfigure}{0.5\linewidth}
        \centering
        \includegraphics[width=\linewidth]{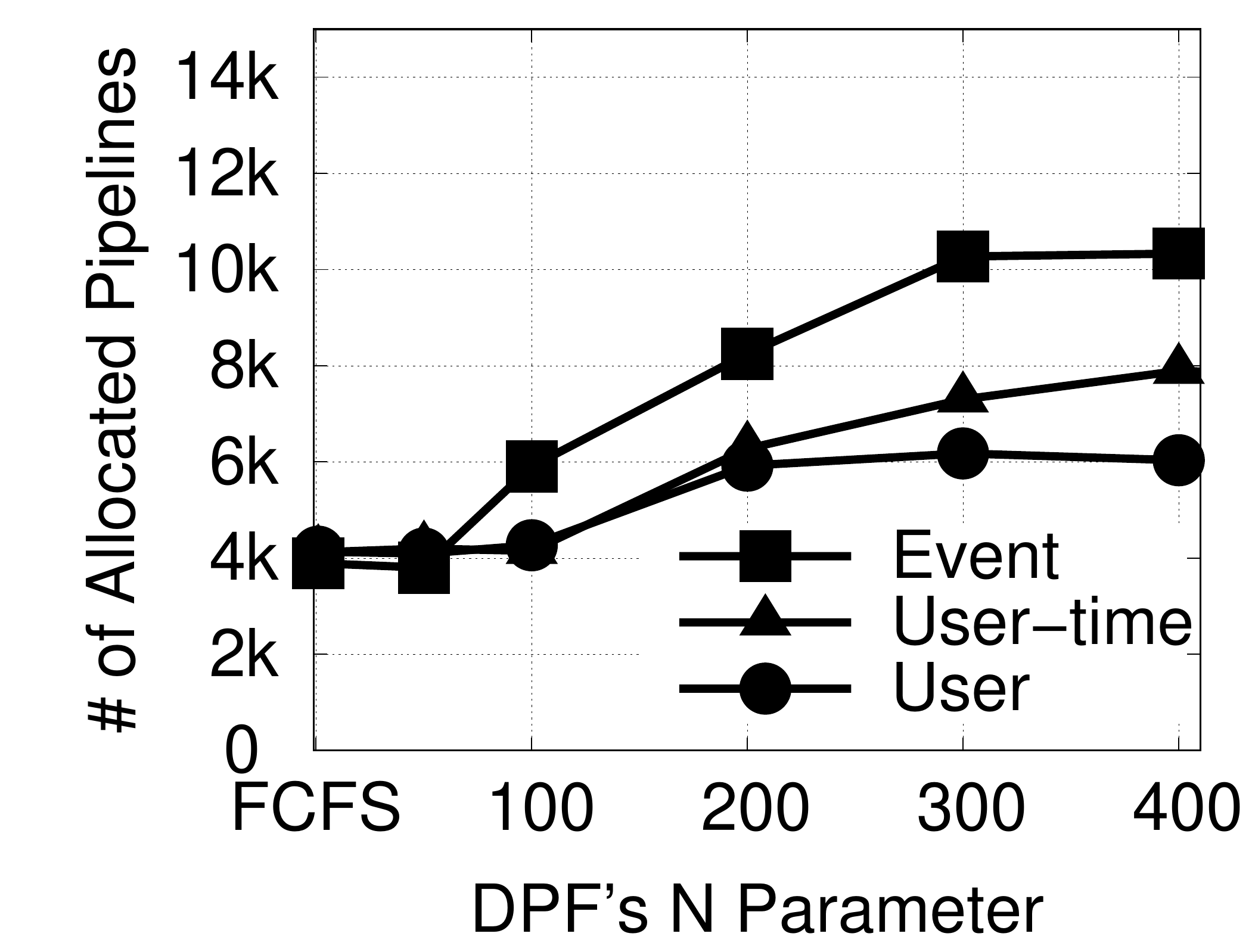}
        \caption{\footnotesize {\bf Allocated for 3 DP semantics}}
        \label{fig:appendix:macrobenchmark:dpf-on-event-dp-workload:completed}
    \end{subfigure}%
    ~
    \begin{subfigure}{0.5\linewidth}
        \centering
        \includegraphics[width=\linewidth]{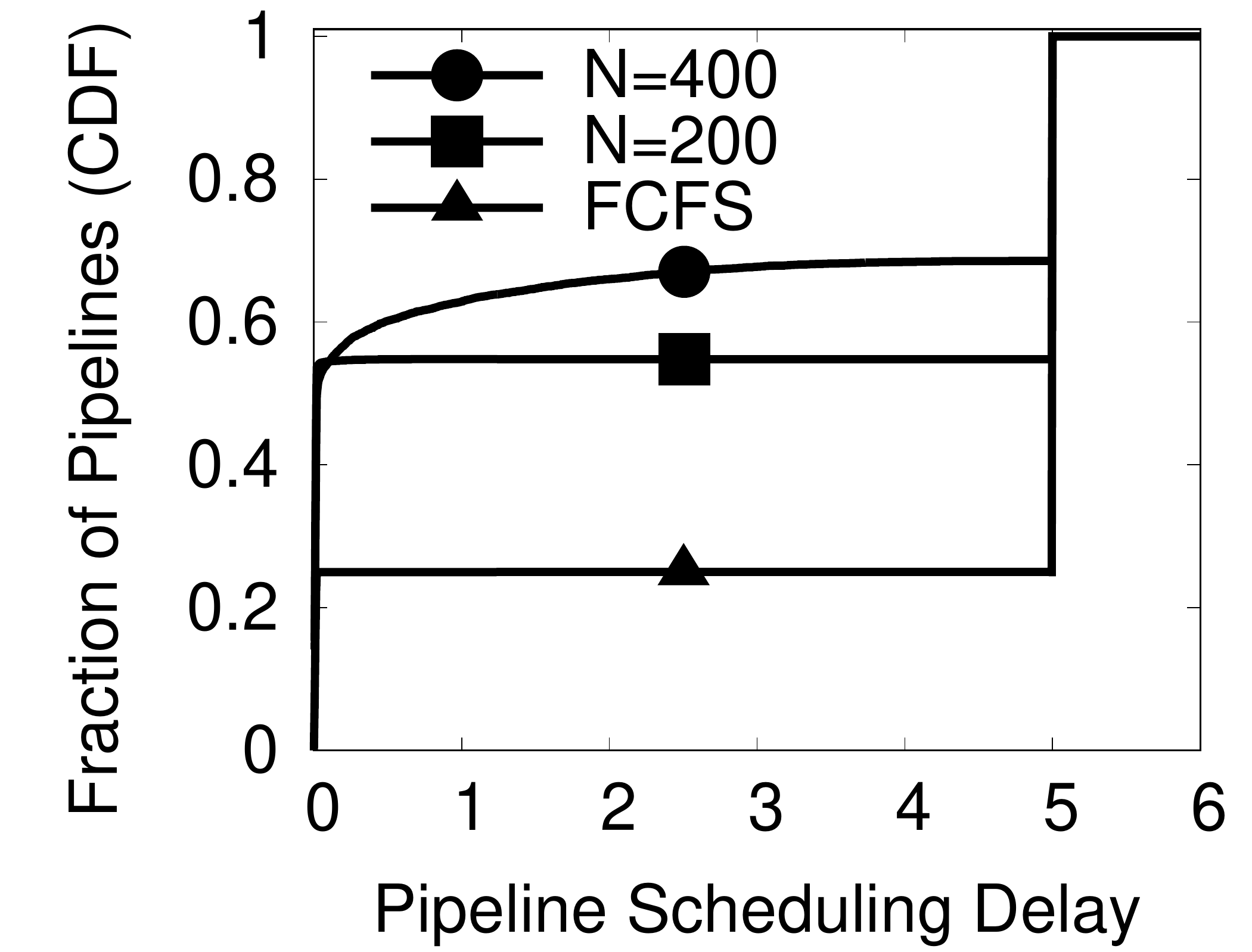}
        \caption{\footnotesize {\bf Scheduling delay for Event DP}}
        \label{fig:appendix:macrobenchmark:dpf-on-event-dp-workload:wait-time}
    \end{subfigure}
    \caption{\footnotesize {\bf DPF behavior on the macrobenchmark workload with basic composition.} The global privacy guarantee is $\epsilon^G = 10$, $\delta^G = 10^{-7}$.}
    \label{fig:appendix:macrobenchmark:dpf-on-event-dp-workload}
\end{figure}

First, in the body of the paper, we provided an analytical description of how we chose privacy demands for our macrobenchmark workload.
\F\ref{fig:appendix:macrobenchmark:workload-mix} plots the distribution of these demands for the pipelines in the Event-DP workload. The $x$-axis of \F\ref{fig:appendix:macrobenchmark:workload-mix:product}, \ref{fig:appendix:macrobenchmark:workload-mix:sentiment}, \ref{fig:appendix:macrobenchmark:workload-mix:stats} represents the $\epsilon$ demand in terms of traditional DP for product classification, sentiment analysis and statistics pipelines.
Each $\epsilon$ also corresponds to the best possible DP-$\epsilon$ for the R\'enyi DP version of a given pipeline.
We can see that the demands are scattered across a wide range of sizes, both in terms of blocks and epsilon, and with finer granularity than the microbenchmark's clear-cut mice and elephants.
Finally, \F\ref{fig:appendix:macrobenchmark:workload-mix:cdf} shows how these varied demands are combined to form a workload.
This workload gives the incoming load in \F\ref{fig:evaluation:macrobenchmark:dpf-on-event-dp-workload} and \F\ref{fig:evaluation:macrobenchmark:dpf-on-event-dp-workload:job-sizes}, which evaluate \sysname's performance with R\'enyi DP.

Second, under the same workload, we add here the results from our evaluation of \sysname on {\em traditional DP} with basic composition. \F\ref{fig:appendix:macrobenchmark:dpf-on-event-dp-workload} (basic composition version of \F\ref{fig:evaluation:macrobenchmark:dpf-on-event-dp-workload}) shows the performance of DPF for the three DP semantics.
We observe the same overall behavior as with R\'enyi DP: stronger semantics can allocate less pipelines, and larger values of $N$ increase the number of granted pipelines.
As expected, R\'enyi DP allocates more pipelines than traditional DP.
However, as illustrated in \F\ref{fig:evaluation:macrobenchmark:dpf-on-event-dp-workload:job-sizes}, the pipelines allocated by R\'enyi DP are qualitatively different from the pipelines allocated by traditional DP.
This effect explains why the gap in the number of allocated pipelines is smaller than in the microbenchmark, in particular when the workload contains larger pipelines (such as under User-DP).
%   This effect is particularly visible for User-DP, where pipelines are larger than for Event-DP.

\vbox{}
Appendices~\ref{apx:rdp}, ~\ref{apx:rdp-dpf-proofs}, and~\ref{apx:counter} are the new additions over the OSDI 2021 version of this paper.
\vbox{}

\vspace{-0.3cm}
\section{An Intro to R\'enyi Differential Privacy}
\label{apx:rdp}
\vspace{-0.3cm}
\paragraph{Definition and Intuition.} R\'enyi Differential Privacy (RDP) \cite{8049725} is a relaxation of pure DP that always implies $(\epsilon, \delta)$-DP, though the converse is not true.
Where DP bounds the ratio of probabilities for any possible output of an algorithm on two adjacent datasets $D$ and $D'$ (\S\ref{sec:dp}), RDP bounds the the R\'enyi divergence of order $\alpha$ between the entire distribution over possible outputs.
The R\'enyi divergence of order $\alpha \in ]1, +\infty]$ between the distribution of outputs, noted $D_\alpha(Q(D) || Q(D'))$, is defined as:
\[
    D_\alpha(Q(D) || Q(D')) \triangleq \frac{1}{\alpha -1} \log \E_{x \sim Q(D')} \Big( Q(D) / Q(D') \Big)^\alpha .
\]
Given this definition, a randomized algorithm $Q$ is $(\alpha, \epsilon^{RDP})$-RDP if:
\[
    D_\alpha(Q(D) || Q(D')) \leq \epsilon^{RDP} .
\]

Intuitively when $\alpha$ is close to one, the R\'enyi divergence is the log expectation of the probability ratio (the quantity bounded by DP). Because of the expectation, likely outputs of the algorithm have more weights, and unlikely outputs can have larger ratios without making the R\'enyi divergence too large.
This is why RDP, which bounds the R\'enyi divergence, is a relaxation of DP, which bounds the probability ratio of every outputs, even unlikely ones.
A large $\alpha$ amplifies probability ratios above $1$, and even unlikely events with large probability ratios increase the R\'enyi divergence.
At the limit of $\alpha = \infty$, the R\'enyi divergence is just the maximum ratio of probabilities regardless of the likelihood of the output, implying that $(\infty, \epsilon)$-RDP is pure $(\epsilon, 0)$-DP.

\paragraph{RDP Curve and Translation to DP.} Like for DP, making a computation RDP requires adding noise.
The specific noise distribution used determines the value of $\epsilon^{RDP}$ for all possible $\alpha$ values,
and the privacy guarantees of an RDP mechanism is characterized by its RDP curve $\epsilon^{RDP}(\alpha)$.
As an example, using the Gaussian mechanism with noise standard-deviation $\sigma$ is $\epsilon^{RDP}(\alpha) = \frac{\alpha}{2\sigma^2}$.
% Intuitively, the order $\alpha$ parametrizes the strength of the privacy guarantee.
% When $\alpha$ is large, the ratio of probabilities under two adjacent datasets needs to remain small even for unlikely event, with the limit at infinity yielding pure DP ---that is $(\infty, \epsilon)$-RDP is $(\epsilon,0)$-DP.
We note that for the Gaussian mechanism $\alpha = \infty$, implying $\epsilon^{RDP} = \infty$, which confirms that the Gaussian mechanism cannot enforce pure DP.
For lower values of the R\'enyi divergence order $\alpha$, Proposition 3 from \cite{8049725} provides an {\em RDP to DP translation formula}:
\[
    \big(\alpha, \epsilon^{RDP}\big)\textrm{-RDP} \Rightarrow  \big(\epsilon^{RDP} + \frac{\log(1/\delta)}{\alpha-1}, \delta\big)\textrm{-DP} .
\]
This formula is true for any $\alpha \in ]1, +\infty]$. When translating RDP to DP, we can thus choose the order $\alpha$ yielding the best DP guarantee.

This is how \sysname ensures $(\epsilon^G, \delta^G)$-DP from RDP in \S\ref{sec:dpf-extensions:renyi}.
Remember that for all blocks $j$, there exists $\alpha$ such that $0 \leq \epsilon_j^U(\alpha) \leq \epsilon_j^G(\alpha) = \epsilon^G - \frac{\log(1/\delta)}{\alpha - 1}$. Applying the RDP to DP translation formula directly yields that $(\epsilon^G, \delta^G)$-DP is preserved for each block.

We can also see that a mechanism's RDP curve characterizes the final DP guarantee for all $\delta$.
This parameter is fixed a priori to initialize $\epsilon_j^G(\alpha)$, but never impacts composition, effectively removing the need to perform composition over $\delta$ in \sysname.

\paragraph{Strong Composition.} When composing a sequence of $(\alpha, \epsilon^{RDP}_i)$-RDP computations, the total output is $(\alpha, \sum \epsilon^{RDP}_i)$-RDP.
Hence, the RDP budget at each order $\alpha$ sums under composition, providing an intuitive notion of additive privacy budget, similar to that of basic composition.
Unlike basic composition which scales as $\epsilon k$ however, summing RDP budgets yields strong composition results scaling in $O(\epsilon \sqrt{k})$.
% Unlike basic composition however, tracking the RDP budget over multiple orders $\alpha \in A$ enables one to account for the specific trade-off curves of each specific mechanism in a sequence of computations.
%
% As one example, the RDP curve of an $\epsilon_i$-DP mechanism is $\epsilon(\alpha) = 2\alpha\epsilon_i^2$.
% Applying the RDP to DP translation formula and chosing 
As an example,
% the RDP curve of a Gaussian mechanism with noise standard-deviation $\sigma$ is $\epsilon(\alpha) = \frac{\alpha}{2\sigma^2}$, so
composing $k$ Gaussian mechanisms yields $\epsilon^{RDP}(\alpha) = k\alpha / 2\sigma^2$,
identical to a Gaussian mechanism with noise $\sigma/\sqrt{k}$ (``$\sqrt{k}$ times less private''), and not $\sigma/k$ as with basic composition.

Like the RDP to DP translation formula, additive composition holds for all $\alpha \in ]1, +\infty]$.
When composing RDP computations with heterogeneous RDP curves, which is always the case in \sysname, it is unclear a priori which order $\alpha$ will give the best guarantee after applying the composition and translation formulae.
We thus keep track of the entire composed RDP curve, and chose the $\alpha$ value yielding the best guarantee a posteriori.
In practice, \cite{8049725} has shown that composing over the whole RDP curve $\epsilon(\alpha)$ is unnecessary. Using a small set of discrete values is sufficient ---$\{1.5,1.75,2,2.5,3,4,5,6,8,16,32,64,+\infty\}$ are typical values--- and the final results are not very sensitive to the choice of this set.
RDP composition has also been shown to be valid under adaptively chosen privacy budgets \cite{feldman2020individual, adaptive_rdp}, fulfilling a key requirement from block composition \cite{sage}.

\vspace{-0.3cm}
\section{Proofs for DPF with R\'enyi DP}
\vspace{-0.3cm}
\label{apx:rdp-dpf-proofs}

% \paragraph{Normalization.} We divide each resource by $\epsilon^{j, \alpha}_G$ to normalize.
% \textit{It allows us to remove the assumption that $\epsilon_G$ is the same for every pipeline.}

\begin{definition}[{\em dominant share}] We fix $A$, a set of RDP orders, a $(\epsilon^G, \delta^G)$, a global DP guarantee to enforce, and an initial block budget curve $\epsilon_j^G(\alpha) = \epsilon^G - \frac{\log(1/\delta^G)}{\alpha -1}$ for each block $j$.

    The dominant share of a pipeline $i$ with demand vector $d_i$ is:
    $$\textrm{DominantShare}_i = \max_{\alpha \in A, j : d_{i,j} > 0} \frac{d_{i,j}(\alpha)}{\epsilon_i^G(\alpha)} $$

\end{definition}

\begin{definition}[{\em fair demand pipeline}]
    A fair demand pipeline has two properties:
    (a) the pipeline is within the first N pipelines that requested some budget for all its requested blocks,
    and (b) its demand for each one of the blocks is smaller or equal to the RDP fair share (\ie for pipeline $i$, $\forall j, \forall \alpha \in A: d_{i,j}(\alpha) \leq \epsilon^{FS}$, where $\epsilon_j^{FS}(\alpha) = \epsilon_j^G(\alpha)/N$).

\end{definition}

\begin{theorem}[{\em sharing incentive}]
    \label{thm:rdp-sharing-incentive}

    A fair demand pipeline is granted immediately.
\end{theorem}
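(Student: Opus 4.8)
The plan is to adapt the proof of Theorem~\ref{thm:sharing-incentive} (sharing incentive for the basic DPF), which proceeds by induction on the number of waiting pipelines and shows that a fair demand pipeline is never left waiting. The only genuinely new ingredient is that every privacy variable ($\epsilon^G_j$, $\epsilon^U_j$, $\epsilon^A_j$, \dots) and every demand $d_{i,j}$ is now a vector indexed by the R\'enyi orders $\alpha \in A$, and that $\textproc{CanRun}$ and $\textproc{Allocate}$ treat these orders asymmetrically: $\textproc{Allocate}$ deducts $d_{i,j}(\alpha)$ at \emph{every} order, whereas $\textproc{CanRun}$ (line~\ref{line:can-run} of Alg.~\ref{alg:dpf-rdp}) only asks that for each requested block $j$ there exist \emph{some} order $\alpha$ with $d_{i,j}(\alpha) \le \epsilon^U_j(\alpha)$. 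So the first step is to reduce the goal to a per-block, per-order statement: it suffices to exhibit, for each block $j$ with $d_{i,j}>0$, one order $\alpha_j$ (possibly different across blocks) with $\epsilon^U_j(\alpha_j) \ge d_{i,j}(\alpha_j)$.

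The key step is an invariant: at all times, and for every block $j$, there is at least one order $\alpha$ with $\epsilon^U_j(\alpha) \ge 0$. This is exactly the fact invoked in the Analysis paragraph of \S\ref{sec:dpf-extensions:renyi}: every invocation of $\textproc{Allocate}$ is guarded by $\textproc{CanRun}$, so at the witnessing order $\alpha^\star$ we have $\epsilon^U_j(\alpha^\star)\ge 0$ immediately after the allocation, and the only other update to $\epsilon^U_j(\cdot)$ is the unlock step, which never decreases it (this $\alpha^\star$ may change from one allocation to the next, but one always exists). Now, when a fair demand pipeline $i$ with $d_{i,j}>0$ arrives, the R\'enyi version of $\textproc{OnPipelineArrival}$ unlocks $\epsilon^{FS}_j(\alpha)=\epsilon^G_j(\alpha)/N$ at every order (capped at $\epsilon^G_j(\alpha)$); at the ``good'' order above, $\epsilon^U_j(\alpha)$ becomes at least $\min(\epsilon^G_j(\alpha),\,0+\epsilon^{FS}_j(\alpha))=\epsilon^{FS}_j(\alpha)\ge d_{i,j}(\alpha)$, using $N\ge 1$ and the fairness of $i$. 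Hence $\textproc{CanRun}(d_i)$ holds at block $j$ via this $\alpha$. This already disposes of the base case (no waiting pipeline): $i$ is the sole candidate, is picked, and is granted, so no fair pipeline is left waiting.

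For the induction step one then mirrors Theorem~\ref{thm:sharing-incentive}. Bound $\textrm{DominantShare}_i=\max_{\alpha\in A,\ j:\,d_{i,j}>0}\, d_{i,j}(\alpha)/\epsilon^G_j(\alpha)\le 1/N$ for a fair $i$. By the induction hypothesis no fair pipeline is waiting when $i$ arrives, so every waiting pipeline $k$ is non-fair and therefore (as in Theorem~\ref{thm:sharing-incentive}) has $\textrm{DominantShare}_k > 1/N \ge \textrm{DominantShare}_i$; thus $i$ is ordered ahead of all waiting pipelines, and combined with the previous step it is granted when reached. A pipeline allocated ahead of $i$ in the same pass either had strictly smaller dominant share and left $i$'s reserved share in each of $i$'s blocks untouched, or was already granted before $i$ arrived --- the same case analysis as in Theorem~\ref{thm:dynamic-envy-freeness}, now carried out order-by-order.

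The main obstacle is precisely this last coupling: because $\textproc{Allocate}$ subtracts at every order while only one order is checked, the unlocked budget $\epsilon^U_j(\alpha)$ at some orders can go negative, so one must argue carefully that the ``good order'' witnessing feasibility for $i$ is not destroyed by a pipeline scheduled ahead of $i$ in the same pass, and that a non-fair waiting pipeline sharing a block with $i$ cannot jump ahead of it. The resolution combines the non-negativity invariant (restored after every $\textproc{Allocate}$, for a possibly changing order) with the $\textrm{DominantShare}$ ordering, exactly as in the basic-DPF proofs; writing this vector bookkeeping cleanly, rather than any single calculation in it, is where the real work lies.
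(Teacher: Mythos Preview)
Your approach is essentially the paper's: induction on the number of waiting pipelines, show that a fair $i$ has $\textproc{CanRun}(d_i)$ true after its own unlock and is ordered before every waiting pipeline. The paper's R\'enyi proof is almost a verbatim copy of the basic-DPF proof, quantifying over ``any $\alpha\in A$'' and asserting $\epsilon^{FS}_j(\alpha)\le\epsilon^U_j(\alpha)$ after the unlock; it never explicitly addresses the fact that $\epsilon^U_j(\alpha)$ can be negative at some orders.

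Your explicit use of the invariant ``for every block $j$ there is some $\alpha$ with $\epsilon^U_j(\alpha)\ge 0$'' is therefore a genuine refinement: it is exactly what is needed to justify that, after $i$'s unlock, the witnessing order satisfies $\epsilon^U_j(\alpha)\ge\epsilon^{FS}_j(\alpha)\ge d_{i,j}(\alpha)$, which is all $\textproc{CanRun}$ requires. The paper relies on this invariant implicitly (it is stated in the Analysis paragraph of \S\ref{sec:dpf-extensions:renyi}) but does not invoke it in the proof.

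Your final paragraph, however, is unnecessary and slightly muddled. Once you have established that every waiting pipeline $k$ has $\textrm{DominantShare}_k>1/N\ge\textrm{DominantShare}_i$, the sorted list places $i$ strictly first, so no pipeline is allocated ahead of $i$ ``in the same pass''; there is no coupling to untangle and no need to borrow the case analysis from Theorem~\ref{thm:dynamic-envy-freeness}. You can simply drop that discussion and conclude as the paper does: $i$ is first, $\textproc{CanRun}(d_i)$ holds, $i$ is granted.
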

\begin{compactproof}

    % \xxx{Same proof, but added $\alpha$ and reference to the algorithm.}

    Consider a fair demand pipeline $i$ with demand $d_i$.  We proceed by induction over the number of waiting pipelines.

        {\em Base case:} no waiting pipelines. Consider any $\alpha \in A$ and $j$ such that $d_{i,j}(\alpha) > 0$.  $\epsilon^{FS}_j(\alpha) \leq \epsilon^U_{j}(\alpha)$ since $\epsilon^{FS}_j(\alpha)$ is unlocked by $d_i$. $d_i$ is fair so $d_{i,j}(\alpha) \leq \epsilon^{FS}_j(\alpha) \leq \epsilon^U_{j}(\alpha)$. The pipeline is granted, and there is no fair waiting pipeline.

        {\em Induction step:} Consider any waiting pipeline $k$ with demand $d_k$, and its dominant share $\textrm{DominantShare}_k$. Consider any $\alpha \in A$ and $j$ such that $d_{i,j}(\alpha) > 0$.

    By the induction assumption no fair pipeline is waiting, so $\textrm{DominantShare}_k > \epsilon^{FS}_j(\alpha) \geq \textrm{DominantShare}_i$.
    For the same reason as before, $d_{i,j}(\alpha) \leq \epsilon_j^{FS}(\alpha) \leq \epsilon^U_{j}(\alpha)$, so in particular the condition for the \textsc{CanRun} function of Algorithm~\ref{alg:dpf-rdp} holds and $d_i$ can be granted.
    $d_i$ is ordered first so it is granted.
\end{compactproof}

\begin{theorem}[{\em strategy-proofness}]
    \label{thm:rdp-strategy-proofness}
    A pipeline has no incentive to misreport its demand.
\end{theorem}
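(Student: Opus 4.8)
The plan is to mirror the structure of the basic-composition proof of Theorem~\ref{thm:strategy-proofness}, splitting an arbitrary misreport $d_i'$ of a pipeline $i$'s real demand $d_i$ into two regimes, but with extra care for the features specific to the R\'enyi version: demands are now curves $d_{i,j}(\alpha)$ over the tracked orders $\alpha \in A$, the dominant share is $\textrm{DominantShare}_i = \max_{\alpha \in A,\, j:\, d_{i,j}>0} d_{i,j}(\alpha)/\epsilon_j^G(\alpha)$, and \textsc{CanRun} requires only that for every requested block $j$ \emph{some} order $\alpha$ has $d_{i,j}(\alpha) \le \epsilon_j^U(\alpha)$. Throughout I rely on the all-or-nothing utility (utility $1$ iff the pipeline's true demand is accommodated, $0$ otherwise) and on the trust assumptions of \S\ref{sec:threat-model-assumptions}: a pipeline never consumes more budget than it is allocated. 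Here $d_i$ is the R\'enyi curve of the mechanism (e.g. a Gaussian of a chosen noise scale) the pipeline intends to run, split across its requested blocks as described in \S\ref{sec:dpf-extensions:renyi}.

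First I would treat the case where the reported curve dominates the true one, i.e. $d_{i,j}'(\alpha) \ge d_{i,j}(\alpha)$ for all $j$ and $\alpha$, with strict inequality somewhere. Three observations finish this case: (a) by all-or-nothing utility, once $d_i$ itself is allocated the pipeline already has utility $1$, so receiving extra budget cannot help; (b) $\textrm{DominantShare}_i$ is monotone nondecreasing in every entry of the demand curve, so the inflated report places $i$ no earlier in the scheduler's sorted order; and (c) the per-block predicate ``$\exists \alpha: d_{i,j}(\alpha) \le \epsilon_j^U(\alpha)$'' is antitone in that block's demand curve, so enlarging every $\alpha$-entry only shrinks the set of scheduler states in which $i$ passes \textsc{CanRun}. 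Hence a dominating misreport can only delay or prevent allocation, never help.

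Next I would treat the complementary case, where the report fails to dominate: there exist a requested block $j$ and an order $\alpha^\star \in A$ with $d_{i,j}'(\alpha^\star) < d_{i,j}(\alpha^\star)$. This is the subtle case, because \textsc{CanRun} needs only one good order per block, so underreporting at $\alpha^\star$ could in principle be exactly what lets block $j$ pass \textsc{CanRun} and gets $i$ ``granted'' when honest reporting would not. I would close this gap using the trust model: after such an allocation, block $j$ holds less than $d_{i,j}(\alpha^\star)$ of budget for $i$ at $\alpha^\star$, and --- since the pipeline will not consume beyond its allocation --- $i$ cannot execute the mechanism whose R\'enyi curve is $d_i$; it must fall back to a noisier mechanism that misses its accuracy target, so its utility is $0$. (Running the intended mechanism regardless would breach the global $(\epsilon^G,\delta^G)$ guarantee, contradicting the trust assumptions, and is therefore out of scope.) Since honest reporting always yields utility at least $0$, and yields utility $1$ whenever $d_i$ can be accommodated --- in particular immediately for fair-demand pipelines by Theorem~\ref{thm:rdp-sharing-incentive} --- a non-dominating misreport is weakly dominated by honesty. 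Together with the previous paragraph, this shows no misreport is beneficial.

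The main obstacle I anticipate is exactly this last case: unlike in the basic-composition proof, \textsc{CanRun} under R\'enyi accepts a pipeline as soon as one order fits, so one cannot argue ``underreporting never helps scheduling'' on purely combinatorial grounds; the argument has to route through the trust assumption to conclude that being scheduled on an under-budgeted allocation still yields zero utility. A secondary point worth spelling out carefully is the monotonicity of $\textrm{DominantShare}$ and the antitonicity of the \textsc{CanRun} predicate as functions of the demand curve, which is routine but underpins the first case.
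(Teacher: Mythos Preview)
Your proposal is correct and follows essentially the same two-case approach as the paper's proof: over-reporting cannot help because of all-or-nothing utility and the monotonicity of $\textrm{DominantShare}$, while under-reporting yields utility $0$ since the pipeline cannot run its intended mechanism on a short allocation. The paper's proof is terser and does not spell out the R\'enyi-specific point you raise about \textsc{CanRun} needing only one $\alpha$ per block, but your more careful treatment of that subtlety (routing through the trust assumption to conclude that being scheduled on an under-budgeted allocation still yields utility $0$) is a faithful elaboration of the same argument rather than a different route.
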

\begin{compactproof}

    % \xxx{Exactly the same proof.}

    A pipeline has no incentive to ask for more budget than its real demand, because: (a) its utility would
    not increase if it obtains more budget than it needs, (b) its dominant share will be greater or equal so it can only become less likely to get scheduled.
    A pipeline also has no incentive to ask for less budget than its real demand, because its utility will drop
    to zero if it is not allocated its demanded budget.
\end{compactproof}

\begin{theorem}[{\em dynamic envy-freeness}]
    \label{thm:rdp-dynamic-envy-freeness}
    A pipeline present at time $t$ cannot envy the allocation of another pipeline present at time $t$, except if their $\textrm{dominantshare}$s are identical.
\end{theorem}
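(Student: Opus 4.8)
The plan is to lift the proof of Theorem~\ref{thm:dynamic-envy-freeness} essentially verbatim, carrying an extra index $\alpha \in A$ through every inequality and reading $\textproc{DominantShare}$, $\textproc{CanRun}$, and $\textproc{Allocate}$ off Algorithm~\ref{alg:dpf-rdp}. Consistent with the fact that the R\'enyi scheduler already treats each $(\text{block},\alpha)$ pair as a separate resource when computing the dominant share, I take ``$i$ envies $j$'' to mean that $j$'s per-resource holdings coordinatewise dominate $i$'s demand: $j$ requests every block $i$ requests and $d_{j,j'}(\alpha) \ge d_{i,j'}(\alpha)$ for each such block $j'$ and each $\alpha \in A$. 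This is well defined since, by the facts recalled in Appendix~\ref{apx:rdp}, every $d_{i,j}(\cdot)$ and every $\epsilon^G_j(\cdot)$ is a fixed vector and the R\'enyi operations of Algorithm~\ref{alg:dpf-rdp} preserve the global $(\epsilon^G,\delta^G)$ guarantee regardless of the outcome.

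Fix pipelines $i$ and $j$ present at time $t$ with non-identical dominant shares (outside the exception in the statement), and suppose $i$ envies $j$. If $i$ has already been granted its full demand vector $d_i$, the all-or-nothing utility caps its utility at $1$, so it cannot envy anyone; hence assume $i$ is still waiting at $t$ with utility $0$. As in the original argument I would then show that $j$ must have been granted before $i$ ever entered the system --- an allocation $i$ had no opportunity to contest, hence not a source of envy under the dynamic notion of Kash~\etal~\cite{dynamicdrf}. Suppose not: $j$ is granted during some run of $\textproc{onSchedulerTimer}$ at which $i$ is already waiting, so $i$ and $j$ both appear in the sorted list, strictly ordered by their (time-invariant, since $\epsilon^G_j(\alpha)$ is constant) dominant shares. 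I split on this order.

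If $\textrm{DominantShare}_j < \textrm{DominantShare}_i$, choose a block $j^\star$ and an order $\alpha^\star$ realizing $\textrm{DominantShare}_i = d_{i,j^\star}(\alpha^\star)/\epsilon^G_{j^\star}(\alpha^\star)$. Envy gives $d_{j,j^\star}(\alpha^\star) \ge d_{i,j^\star}(\alpha^\star)$, hence $\textrm{DominantShare}_j \ge d_{j,j^\star}(\alpha^\star)/\epsilon^G_{j^\star}(\alpha^\star) \ge \textrm{DominantShare}_i$, contradicting the case hypothesis. If instead $\textrm{DominantShare}_j > \textrm{DominantShare}_i$, then $i$ is ordered ahead of $j$ and is tried first in this scheduler pass; since $i$ is not granted (it is still waiting at $t$, which is at or after this pass), $\textproc{CanRun}(d_i)$ returned false, so some block $j'$ requested by $i$ satisfies $d_{i,j'}(\alpha) > \epsilon^U_{j'}(\alpha)$ for \emph{every} $\alpha \in A$ at that instant. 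Because $\textproc{Allocate}$ only decreases unlocked budgets, this inequality still holds when the scheduler reaches $j$; combined with envy, $d_{j,j'}(\alpha) \ge d_{i,j'}(\alpha) > \epsilon^U_{j'}(\alpha)$ for every $\alpha$, so $\textproc{CanRun}(d_j)$ is false on block $j'$, contradicting that $j$ is granted in this pass. Both subcases being impossible, $j$ was granted before $i$ arrived, so $i$ does not envy $j$.

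The step I expect to be the main obstacle is pinning down the right R\'enyi notion of envy and then checking the $\textproc{CanRun}$ bookkeeping: $\textproc{CanRun}$ tests a per-block \emph{existential} over $\alpha$ while $\textproc{Allocate}$ deducts at \emph{every} $\alpha$, so unlocked budgets can become negative at some orders, and one must argue carefully that ``$i$ is not runnable'' is equivalent to ``there is a block on which $i$'s demand exceeds the unlocked budget at every order.'' This equivalence, together with the coordinatewise reading of envy, is exactly what makes the second subcase close, since a dominating bundle inherits that blocking inequality at every order. Everything else is a mechanical transcription of the proof of Theorem~\ref{thm:dynamic-envy-freeness} with the $\alpha$ index appended, and I would also record the minor but needed facts that the sorted order does not drift (the dominant share is normalized by the fixed $\epsilon^G_j(\alpha)$) and that unlocked budgets are non-increasing within a scheduler pass.
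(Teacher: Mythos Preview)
Your proposal is correct and follows essentially the same approach as the paper: case-split on whether $i$ is granted, then (for waiting $i$) argue by contradiction that any envied $j$ must have been granted before $i$ arrived, splitting on the order of the dominant shares. Your handling of the second subcase is in fact more explicit than the paper's, which simply asserts ``in both cases $i$ cannot be granted from $j$'s allocation'' without spelling out the $\textproc{CanRun}$/$\textproc{Allocate}$ quantifier bookkeeping you correctly identify as the crux.
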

\begin{compactproof}

    % \xxx{No real change, I just detailed the proof to make sure nothing was fishy. Also, proving by contradiction doesn't seem necessary.}

    Consider pipeline $i$ in the system at time $t$. There are two cases.
    Case 1: $i$ was granted. Its utility cannot improve due to all-or-nothing utility, there is no envy.
    Case 2: $i$ is waiting. Consider any non-identical pipeline $j$ present at time $t$.
    \begin{itemize}
        \item If $j$ is also waiting, $i$ does not envy $j$.
        \item If $j$ has been granted before $i$ entered the system, $i$ doesn't envy $j$.
        \item If $j$ has been granted after $i$ entered the system: \begin{itemize}
                  \item Either $\textrm{DominantShare}_j < \textrm{DominantShare}_i$ ; or
                  \item $\textrm{DominantShare}_j > \textrm{DominantShare}_i$ but $i$ could not be granted while $j$ could.
              \end{itemize}
    \end{itemize}

    If $\textrm{DominantShare}_j < \textrm{DominantShare}_i$: since the dominant share is computed as the maximum over all blocks and \textit{all} $\alpha \in A$, there exists $\alpha \in A$ and a block $b$ such that $d_{i,b}(\alpha) > d_{j,b}(\alpha)$. In both cases $i$ cannot be granted from $j$'s allocation, which would thus give $i$ a utility of zero. $i$ cannot envy $j$.
\end{compactproof}

\begin{theorem}[{\em Pareto efficiency}]
    \label{thm:rdp-Pareto-efficiency}
    No allocation from unlocked budget can increase a pipeline's utility without decreasing another pipeline's utility.
    % It should not be possible to schedule a waiting pipeline without decreasing the allocation of some pipeline that was already allocated budget.
\end{theorem}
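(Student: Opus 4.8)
The plan is to mirror the basic-composition Pareto efficiency argument (the proof of Theorem~\ref{thm:Pareto-efficiency}), handling the one place where R\'enyi DP changes the picture: under R\'enyi, the feasibility test $\textproc{CanRun}$ for a pipeline is an ``$\exists\,\alpha$'' condition \emph{per block} rather than a single scalar comparison, so the argument that freeing budget necessarily costs another pipeline needs slightly more care. First I would fix an arbitrary pipeline $i$ and split into two cases. If $d_i$ has already been allocated, then $i$ has utility $1$ by the all-or-nothing utility model, so no reallocation can raise $i$'s utility, and any reallocation that strictly helps another pipeline while leaving $i$ allocated does not harm $i$; this case cannot witness a Pareto violation. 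The substantive case is $i$ waiting.

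Second, suppose $i$ is still waiting after $\textproc{OnSchedulerTimer}$ has run. The scheduler loop keeps calling $\textproc{Allocate}$ on waiting pipelines, in dominant-share order, until $\textproc{CanRun}$ fails for every remaining waiting pipeline; in particular $\textproc{CanRun}(d_i)$ is false. For the R\'enyi variant (Algorithm~\ref{alg:dpf-rdp}, the $\textproc{CanRun}$ condition) this means there is a block $j^\star$ with $d_{i,j^\star}(\alpha) > \epsilon^U_{j^\star}(\alpha)$ for \emph{every} $\alpha \in A$. Hence making $i$ allocatable requires strictly increasing $\epsilon^U_{j^\star}(\alpha)$ for at least one order $\alpha$. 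The budget-unlocking routines ($\textproc{OnPipelineArrival}$ / $\textproc{OnPrivacyUnlockTimer}$) are not ``allocations from unlocked budget'' and are outside the scope of the statement, so the only admissible way to raise $\epsilon^U_{j^\star}(\alpha)$ is to reclaim allocation held by some pipeline $k \neq i$ that has positive allocation in block $j^\star$ at order $\alpha$.

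Third, I would observe that $\textproc{Allocate}$ subtracts exactly $d_{k,j}(\alpha)$ from $\epsilon^U_j(\alpha)$ for every $j$ and every $\alpha$ when pipeline $k$ is granted, and nothing else alters $k$'s holdings; thus a granted pipeline holds \emph{exactly} its demand vector and has no slack. Reclaiming any positive amount from $k$ therefore leaves $k$ holding strictly less than its demand at some $(j,\alpha)$, so by all-or-nothing utility $k$'s utility drops from $1$ to $0$ --- a strict decrease. Since every route to allocating the waiting pipeline $i$ passes through such a reclaim, no Pareto-improving reallocation from unlocked budget exists, which proves the theorem. The only genuinely new point compared with the basic-DP proof is this quantifier flip: ``$\textproc{CanRun}$ false'' now asserts that a block is over-demanded at \emph{all} R\'enyi orders simultaneously, so freeing budget at even one order for that block still forces a full de-allocation of a competing pipeline. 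I expect this to be the main (and fairly modest) obstacle; the remainder is bookkeeping over the $\alpha$-indexed budget vectors and carries over essentially verbatim from Theorem~\ref{thm:Pareto-efficiency}.
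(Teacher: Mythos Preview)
Your proposal is correct and follows the same approach as the paper: split on whether $i$ is allocated or waiting, observe that DPF exhausts all runnable pipelines, and conclude that granting a waiting pipeline would force de-allocating some granted pipeline whose utility then drops to zero. In fact the paper's proof is essentially a verbatim repeat of the basic-DP argument (Theorem~\ref{thm:Pareto-efficiency}) and does not spell out the R\'enyi-specific $\exists\alpha$/$\forall\alpha$ quantifier analysis you carefully work through; your extra care is sound but exceeds the level of detail the paper actually provides.
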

\begin{compactproof}

    % \xxx{Exactly the same proof.}

    Consider pipeline $i$.
    If $d_i$ was already allocated, its utility cannot improve due to all-or-nothing utility.
    If $i$ is waiting, it cannot be allocated from unlocked budget as DPF grants pipelines until no pipeline can be allocated.
    Allocating $d_i$ would require extra budget, which can only come from another allocated pipeline. Since each allocated pipeline has exactly its requested budget: this would decrease its utility from one to zero, which is not Pareto-improving.
\end{compactproof}

\vspace{-0.3cm}
\section{User DP with a Streaming Counter}
\vspace{-0.3cm}
\label{apx:counter}
This section provides more details about the streaming counter described in \S\ref{sec:varied-dp-semantics}. We leverage the algorithm of \cite{Chan2011PCR} (Algorithm 2), which provides an event-level continuous counter with low error and under low DP budget.
The high-level idea of the algorithm and its application in \sysname is summarized in Fig.~\ref{f:dp_count} and explained in \S\ref{appendix:count-algo} and \S\ref{appendix:count-lb}.
\S\ref{appendix:count-rdp} shows how to use the counter with R\'enyi DP.

\begin{figure}[h]
    \centering
    \includegraphics[width=\linewidth]{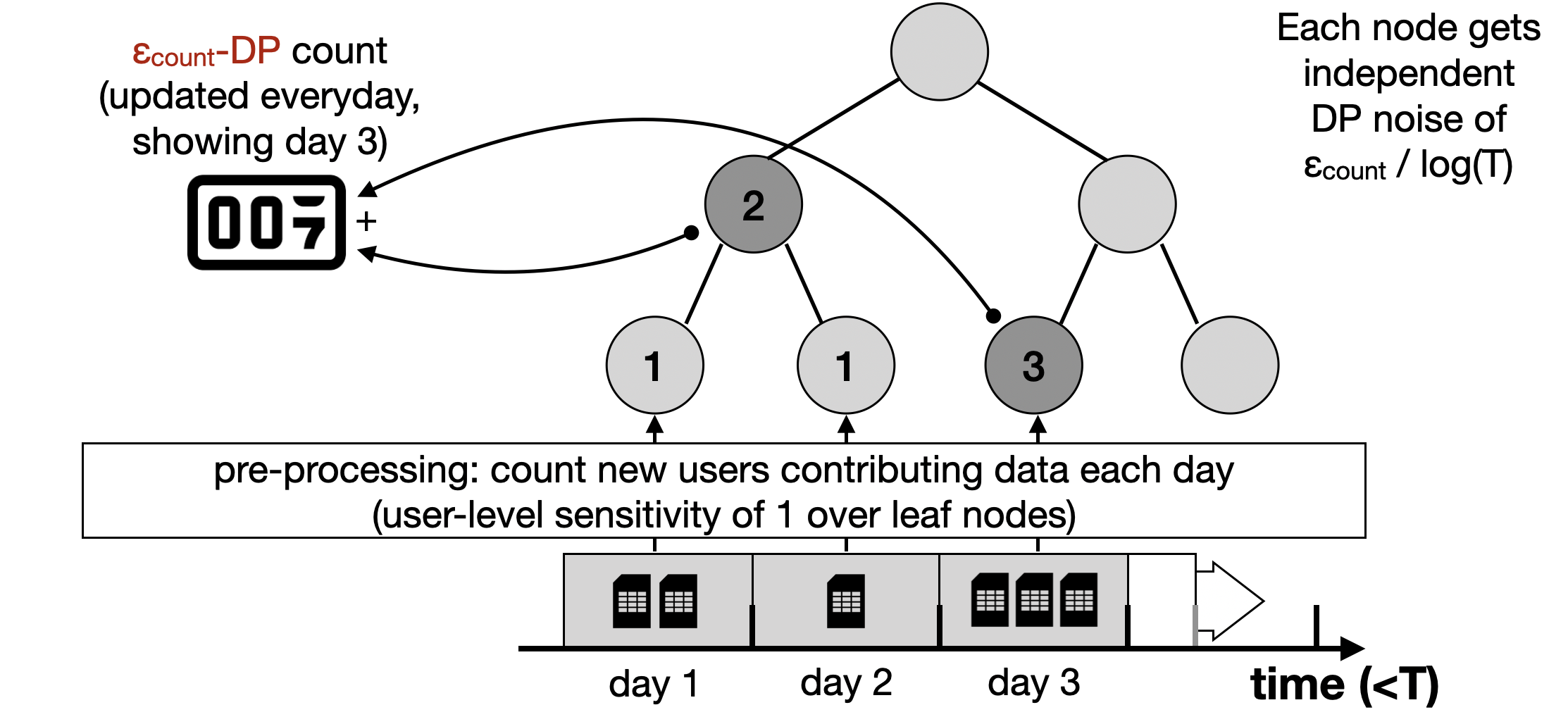}
    \caption{\footnotesize {\bf Continuous DP user count.}
        Adapted from~\cite{Chan2011PCR} to counting users with user-level $\epsilon_{count}$-DP. Each day while $t<T$, the number of new users is computed (leaf nodes). A binary tree keeps counts at different resolutions, each node contains the count over every day it spans. Each node, when complete, receives independent noise to ensure $\frac{\epsilon_{count}}{\log_2(T)}$-DP. The counter outputs the sum of the minimum number of complete nodes that span the desired time span (here the first three days).
    }
    \label{f:dp_count}
\end{figure}

\subsection{Algorithm}
\label{appendix:count-algo}

The first step is to map the time-based data stream of event-level observations to a user count.
To this end, time is divided in intervals (e.g., days).
At each time interval, \sysname computes the number of new users that contributed data in this interval.
Note that while this computation is stateful (it requires keeping track of all existing users), it has sensitivity 1 to the addition or removal of any user, over the entire stream. That is, adding or removing a user will change the value of only one interval, by only one unit.
Ensuring event-level DP over this stream of counts is thus user-level differentially private, and we can leverage the event-level DP algorithm from \cite{Chan2011PCR} to do this.
The key idea to maintain a continuously updated counter in \cite{Chan2011PCR} is to build a binary tree of counts, as illustrated in Fig.~\ref{f:dp_count}.
Each node holds the total count over all leaves it spans, with DP noise added independently of the noise of children and parents.
For a tree spanning less than $T$ time periods (here we assume $T$ is a power of $2$ but the extension is straightforward), changing one leaf affects $\log_2(T)$ nodes, so adding DP noise to ensure $\frac{\epsilon_{count}}{\log(T)}$-DP for each node ensures the final counter is $\epsilon_{count}$-DP.
\cite{Chan2011PCR} also shows a continuous counter not bounded by $T$, which adds only slightly more noise. However, using daily updates \sysname can operate for almost ninety years ($2^{15}$ days) by adding noise scaled by $\frac{\epsilon_{count}}{15}$, which we deem sufficient.

\subsection{Lower bound on the count}
\label{appendix:count-lb}

Of course the count is noisy. If it is lower than the true count (in green on Fig.~\ref{f:user_splitting}) some users with data will not be used but their DP budget can be used later.
If the DP count is higher than the true count (in blue), a few empty users will be queried, and their DP budget consumed.
To avoid the latter scenario, it may be useful to compute a high probability lower bound on the true count. By using this lower bound, we know that with high probability, the true count is always under-estimated.
Computing a count up to $t <T$ by summing the minimum number of nodes requires at most $\log_2(T)$ nodes (actually less than $\log_2(t)$).
Since \sysname uses the Laplace mechanism in the counter with parameter $\frac{\epsilon_{count}}{\log(T)}$, summing $\log_2(T)$ yields an error lower than $\frac{4}{\epsilon}\log^{1.5}_2(T)\log(\frac{T}{\beta})$ at each time interval (see \cite{dwork2010differential} Theorem 4.1), with probability at least $1-\beta$.
Replacing the counter with $\textrm{count} - \frac{4}{\epsilon}\log^{1.5}_2(T)\log(\frac{T}{\beta})$ ensures that the count is always underestimated with probability at least $1-\beta$. For instance, using a budget of $\epsilon_{count}=0.1$, and $T=2^{15}$ or about ninety years of daily updates, subtracting $17.5k$ from the count ensures it is never over-estimated with probability at least $0.999$.

\subsection{R\'enyi DP accounting}
\label{appendix:count-rdp}

\sysname uses R\'enyi DP accounting to make better use of the privacy budget (see \S\ref{sec:dpf-extensions:renyi}).
Since \cite{Chan2011PCR} uses traditional DP only, we can use the translation formula from pure DP to R\'enyi DP from \cite{8049725}, stating that an $\epsilon_{count}$-DP mechanism is $(\alpha, 2\epsilon^2_{count}\alpha)$-RDP.
This is what we use in \S\ref{sec:dpf-extensions:renyi} for simplicity, but we can provide a tighter, more direct analysis of the Binary Mechanism (Algorithm 2 from  \cite{Chan2011PCR}) as follows:
% with the noise added at step $(3)$ is $\operatorname{Lap}(\lambda)$ instead of $

\begin{theorem}[RDP Curve of the Binary Mechanism]
    \label{th:rdp-count}

    For $T \in \mathbb{N}$, $\epsilon >0$ and $\alpha > 1$, the Binary Mechanism preserves $(\alpha, \epsilon(\alpha))$-RDP with $\epsilon(\alpha) =$

    $$  \frac{\log T}{\alpha-1}\log \left[ \frac{\alpha}{2\alpha -1}\exp\left(\frac{(\alpha-1)\epsilon}{\log T}\right) +  \frac{\alpha-1}{2\alpha - 1} \exp\left(-\frac{\alpha\epsilon}{\log T}\right) \right] $$

    This curve corresponds to the sum of $\log T$ curves for the Laplace Mechanism with noise $\frac{log T}{\epsilon}$.
\end{theorem}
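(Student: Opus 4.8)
The plan is to reduce the claim to two ingredients: (i) the Rényi divergence of a single Laplace mechanism, and (ii) additivity of Rényi divergence under independent composition, exploiting the binary-tree structure of the mechanism.

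First I would recall the structure of Algorithm~2 of~\cite{Chan2011PCR} as instantiated here. The derived stream of per-interval new-user counts has $\ell_1$-sensitivity $1$ to the addition or removal of a single user, since such a change modifies exactly one leaf by one unit. The mechanism publishes, for every partial-sum node of the binary tree, that node's true count perturbed by fresh \emph{independent} Laplace noise of scale $\lambda = \log T/\epsilon$ (so that each individual node release is $(\epsilon/\log T)$-pure-DP, which is how the original $\epsilon$-DP guarantee is obtained). The key combinatorial fact is that an adjacent change touches the true counts of exactly the $\log T$ partial-sum nodes that contain the modified leaf, each by exactly one unit, and leaves every other node count unchanged. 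Hence, for adjacent datasets $D, D'$, the output law is a product measure whose coordinates agree except on those $\log T$ coordinates, where the two laws are $\mathrm{Lap}(c,\lambda)$ and $\mathrm{Lap}(c\pm 1,\lambda)$.

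Second I would record the per-node cost. For $\alpha>1$, the Rényi divergence of order $\alpha$ between $\mathrm{Lap}(0,\lambda)$ and $\mathrm{Lap}(1,\lambda)$ equals
\[
\epsilon_L(\alpha) \;=\; \frac{1}{\alpha-1}\log\!\left[\frac{\alpha}{2\alpha-1}e^{(\alpha-1)/\lambda} + \frac{\alpha-1}{2\alpha-1}e^{-\alpha/\lambda}\right],
\]
as established for the Laplace mechanism in the original RDP paper~\cite{8049725}; self-containedly, this follows by splitting $\int p(x)^\alpha q(x)^{1-\alpha}\,dx$ over $(-\infty,0]$, $[0,1]$, $[1,\infty)$ and summing three elementary exponential integrals, all convergent for $\alpha>1$ because the Laplace tails decay exponentially. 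Substituting $\lambda = \log T/\epsilon$ yields the single-node curve $\epsilon_L(\alpha) = \frac{1}{\alpha-1}\log[\frac{\alpha}{2\alpha-1}e^{(\alpha-1)\epsilon/\log T} + \frac{\alpha-1}{2\alpha-1}e^{-\alpha\epsilon/\log T}]$.

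Third I would assemble the bound. Since the Rényi divergence of a product of measures is the sum of the coordinatewise Rényi divergences, $D_\alpha$ between the output laws on $D$ and $D'$ equals $0$ for each unchanged coordinate and $\epsilon_L(\alpha)$ for each of the $\log T$ changed coordinates (the supremum over adjacent datasets is attained, since the shift is exactly $1$ and $\mathrm{Lap}$ is symmetric under sign flip of the shift). Therefore $D_\alpha \le \log T\cdot \epsilon_L(\alpha)$, which is precisely the stated $\epsilon(\alpha)$; and the closing remark is then immediate, as $\epsilon(\alpha) = \sum_{i=1}^{\log T}\epsilon_L(\alpha)$ is exactly the RDP composition of $\log T$ copies of the scale-$(\log T/\epsilon)$ Laplace curve. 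The main obstacle is essentially only the single-node Laplace RDP integral (or its correct quotation); the path/sensitivity argument and the additivity over independent coordinates are routine. A minor point to state carefully is the log-base bookkeeping --- $\log T$ must denote the number of partial-sum nodes on a root-to-leaf path, equivalently the per-node budget is $\epsilon/\log T$ --- together with convergence of the divergence integral for all $\alpha>1$.
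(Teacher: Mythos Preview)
Your proposal is correct and follows essentially the same argument as the paper: decompose the output distribution as a product over independently noised tree nodes, observe that an adjacent change affects at most $\log T$ of them with sensitivity one, quote the single-Laplace RDP curve from~\cite{8049725}, and sum. Your write-up is in fact more careful (explicitly noting the product-measure additivity, the convergence of the Laplace integral, and the $\log$-base bookkeeping), but the route is the same.
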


\begin{proof}

    Take $\alpha > 1$.
    Consider an item arriving at $t \in [T]$. Let's note $P$ the distribution of the outputs of the Binary Mechanism when $\sigma(t) = 0$, and $Q$ the distribution when $\sigma(t) = 1$. The output space $\mathcal{T}$ is the set of binary trees $t$ of depth $\log T$.

    The R\'enyi divergence of order $\alpha$ is:

    $$D_\alpha(Q\parallel P) = \frac{1}{\alpha - 1} \log \int_{t \in \mathcal{T}} P(t)^\alpha Q(t)^{1-\alpha}$$

    As observed in \cite{Chan2011PCR}, at most $\log T$ nodes can be affected (with sensitivity 1) if $\sigma(t)$ is flipped. Since each node is noised i.i.d, we have:

    $$D_\alpha(Q\parallel P) = \frac{1}{\alpha - 1} \sum_{i=1}^n \log \int_{-\infty}^{+\infty} p_i(x)^\alpha q_i(x)^{1-\alpha}dx$$

    where $p_i, q_i$ are the probability density functions of the $i$th affected node.

    Since the noise for each node is drawn from $\operatorname{Lap}(\frac{log T}{\epsilon})$, the RDP curve of the Laplace Mechanism \cite{8049725} gives:

    $$\begin{array}{l}
            \frac{1}{\alpha - 1} \sum_{i=1}^n \log \int_{-\infty}^{+\infty} p_i(x)^\alpha q_i(x)^{1-\alpha}dx
            \\
            \le \frac{\log T}{\alpha-1}\log \left[ \frac{\alpha}{2\alpha -1}\exp\left(\frac{(\alpha-1)\epsilon}{\log T}\right) +  \frac{\alpha-1}{2\alpha - 1} \exp\left(-\frac{\alpha\epsilon}{\log T}\right) \right]
        \end{array}
    $$

\end{proof}

% \newpage
% \vspace{-0.3cm}
% \section{Scheduling Algos for Presentations}
% \vspace{-0.3cm}
% \label{apx:algos-for-ppt}
% \input{appendices/scheduling-algos-for-presentations}

\end{document}

% --- supplement: appendices/osdi21_ae_appendix_template.tex ---

%%%%%%%%%%%%%%%%%%%%%%%%%%%%%%%%%%%%%%%%%%%%%%%%%%%%

\appendix
\section{Artifact Appendix}

\subsection*{Abstract}

{\em Obligatory. Provide a short description of your artifact.}

\subsection*{Scope}

{\em Obligatory. Explain what claims the artifact allows to validate and for what purposes it can be used.}

\subsection*{Contents}

{\em Obligatory. Explain the contents of the artifact.}

\subsection*{Hosting}

{\em Obligatory. Explain how to obtain the artifact. Be specific. If you host the artifact on GitHub, please mention the name of the branch and commit version. You might also want to consider hosting your repository on a platform like Zenodo, which assigns a unique DOI and is integrated \href{https://guides.github.com/activities/citable-code/}{well with GitHub}.}

\subsection*{Requirements}

{\em Optional. Explain any special hardware or software requirements, or state the platform on which the artifact has been developed and tested. You can omit this section if your artifact does not have any specific software or hardware requirements.}

\subsection*{\ldots{}}

{\em Optional. Below the sections above, you can add any number of additional sections that are specific to your artifact.}

%%%%%%%%%%%%%%%%%%%%%%%%%%%%%%%%%%%%%%%%%%%%%%%%%%%%
% This part is included to make the appendix compilable as a standalone document.